\newtheorem{theorem}{Theorem}
\newtheorem{definition}{Definition}
\newtheorem{remark}{Remark}
\newtheorem{example}{Example}
\newtheorem{corollary}{Corollary}
\title{%
	Symmetry Packaging II:
	A Group-Theoretic Framework for Packaging Under Finite, Compact, Higher-Form, and Hybrid Symmetries
}
\author{
	Rongchao Ma \\
	\textit{Department of Physics, University of Alberta, Edmonton, Canada}
}
\date{\today}
\begin{document}
	
\maketitle

\begin{abstract}
Symmetry packaging is the phenomenon whereby, upon particle creation, all the internal quantum numbers (IQNs) become locked into a single irreducible representation (irrep) block of the gauge group, as required by locality and gauge invariance.
The resulting packaged quantum states exhibit characteristic symmetry constraints and entanglement patterns.
We develop a group-theoretic framework to describe the symmetry packaging for a variety of concrete symmetries and to classify the corresponding packaged states:

\textbf{(1)} We prove that for any finite or compact group $G$, there exist $G$-associated packaged subspaces, in which every vector is automatically a packaged state.
In particular, in multi-particle systems, any nontrivial representation of $G$ induces inseparable packaged entanglement that locks together all IQNs.

\textbf{(2)} We apply this framework to symmetry packaging in finite groups (cyclic group $\mathbb{Z}_N$, charge conjugation $C$, fermion parity, parity $P$, time reversal $T$, and dihedral groups), compact groups ($\mathrm{U}(1)$, $\mathrm{SU}(N)$, $\mathrm{SU}(2)$, and $\mathrm{SU}(3)$), $p$-form symmetries, and hybrid symmetries.
In each case, gauge invariance and superselection rules forbid the factorization of the resulting states.
We illustrate how Bell-type packaged entangled states, color confinement, and hybrid gauge-invariant configurations all arise naturally.
These results yield a complete classification of packaged quantum states. 

\textbf{(3)} Finally, we extend the packaging principle to incorporate full spacetime symmetry and hybrid systems of local, global, and Lorentz/Poincaré charges.

Our approach unifies tools from group theory, gauge theory, and topological classification.
These results may be useful for potential applications in high energy physics, quantum field theory, and quantum technologies.
\end{abstract}

\tableofcontents

\section{Introduction}

The inseparability of \textbf{internal quantum numbers (IQNs)} underlies both quark‐confinement phenomena in high‐energy physics  \cite{Zweig1964,Wilson1974,Callan1976,GellMann2019} and the design of symmetry-protected codes in quantum information \cite{Kitaev2003,NielsenChuang2010}.
In non‐Abelian gauge theories such as QCD, quarks never appear in isolation but only as color singlet bound states, a direct consequence of gauge charges being packaged into \textbf{irreducible representation (irrep)} blocks.
More generally, symmetry governs conservation laws \cite{Noether1918}, and dictates which interactions \cite{Wigner1939,Pauli1940,Yang1954,Higgs1964,EnglertBrout1964,Nambu1961,Landau1957} and which states are possible.
Local gauge‐invariance forces physical states to reside in fixed charge or color sectors \cite{Zweig1964,Wilson1974,GellMann2019,Gross1973,Politzer1973}, while discrete operations (e.g., $C$, $P$, $T$) \cite{Luders1957} further constrain how external \textbf{degrees of freedom (DOFs)} (spin, momentum) couple to IQNs and yield robust superselection rules.
Operations that respect these symmetries cannot create or destroy the packaged correlations, thereby endowing them with resilience as entanglement resources.
A unified, group‐theoretic framework for this packaging thus illuminates both the mechanism of confinement and the design of symmetry‐preserving protocols in quantum technologies.

In Ref.~\cite{MaSPI2025}, we introduced the concept of symmetry packaging and formulated a packaging principle for quantum field excitations:
(1) We showed that all IQNs (such as color, flavor, electric charge, or other discrete labels) in quantum field excitations must be locked into single irrep blocks \cite{Weyl1925,Wigner1939} by local gauge-invariance \cite{Feynman1949,Yang1954,Utiyama1956,Weinberg1967}.
(2) We divided the excitation process into six successive stages and three packaging layers, and we demonstrated that the packaging survives across all stages and layers.
(3) We derive superselection rules \cite{WWW1952,DHR1971,DHR1974,StreaterWightman2001} and identified novel packaged entangled states \cite{Ma2017} within a single charge sector using the packaging principle.

Our findings align with recent experimental observations.
For example, high energy collider experiments have revealed quantum-entanglement phenomena among quarks and leptons \cite{TheATLASCollaboration2024,FabbrichesiFloreaniniGabrielli2023,Afik2024,TheCMSCollaboration2024,HayrapetyanCMSCollaboration2024,Blasone2009,Go2007}, including Bell-like and GHZ-type states \cite{AfikNova2021,AfikNova2022,Barr2022,AguilarSaavedra2023}.  In particular, spin correlations in top-quark pairs \cite{TheATLASCollaboration2024,TheCMSCollaboration2024,HayrapetyanCMSCollaboration2024} and flavor entanglement in mesons and neutrinos \cite{Blasone2009,Go2007} provide direct evidence that quantum correlations extend to relativistic regimes.  Tau leptons and bottom quarks also exhibit rich correlation structures \cite{Afik2024,AfikNova2023}.  Moreover, both bosonic and fermionic systems (ranging from electrons and photons to generic bipartite arrays) display nontrivial quantum correlations \cite{Li2001,Zanardi2002,Eckert2002,Li2014}.  Analogous packaging constraints arise under discrete symmetries (e.g., $\mathbb{Z}_2$ operations \cite{Artin2014,Georgi2000} or $\hat C\hat P$, $\hat C\hat T$, $\hat C\hat P\hat T$ transformations \cite{Luders1957}), whose superselection rules enforce Bell-like or GHZ-like \cite{GHZ1990,Mermin1990,Caves2002,GHZ2007} superpositions in which the IQNs remain inseparably paired.

Here we present a unified group-theoretic framework \cite{Artin2014,Georgi2000} for the symmetry packaging to bring their full algebraic beauty into focus.
Building on the packaging principle, we first prove an existence theorem:
any nontrivial representation (dimension $>1$) \cite{Wigner1939,Georgi2000} of a finite or compact group $G$ inevitably induces packaged subspaces and packaged entanglement.
From this general result, one can explicitly construct and classify packaged states for any such $G$, purely in terms of its group data.
This unified picture not only reproduces classic phenomena (color confinement in QCD \cite{Gross1973,Politzer1973}, discrete symmetry transformations, topological flux sectors, etc.), but also yields a systematic classification of all packaged quantum states via their underlying symmetry groups.

We organize the present work as follows:
Section~\ref{SEC:GroupTheoreticFoundationsForSymmetryPackaging} develops the group-theoretic foundations for symmetry packaging by proving the existence theorem for $G$-associated packaged subspaces and defining packaged entanglement measures.
Section~\ref{SEC:SymmetryPackagingOnFiniteGroups} treats packaging in finite groups \cite{Artin2014,Georgi2000}, focusing on $\mathbb{Z}_2$ and discrete transformations like $C$, $P$, and $T$.
Section~\ref{SEC:SymmetryPackagingOnCompactLieGroups} extends this to compact Lie groups:
Abelian ($\mathrm{U}(1)$) and non-Abelian \cite{Yang1954} ($\mathrm{SU}(N)$, in particular $\mathrm{SU}(2)$ and $\mathrm{SU}(3)$), and demonstrates how confinement and irreducible multiplets emerge. 
Section~\ref{SEC:SymmetryPackagingOnDifferentialForms} further generalizes to $p$-form symmetries \cite{KapustinSeiberg2014,Gaiotto2015}, showing that extended objects (flux lines, membranes) \cite{Polchinski1995,Witten1995,Kitaev2006,Tachikawa2020,BanksSeiberg2011} also package irreducibly.
Section~\ref{SEC:RepresentationTheoreticInvariantsAndEntanglementMeasures} develops representation-theoretic invariants that serve as packaged entanglement measures.  
Section~\ref{SEC:ExtendingToFullSpacetimeSymmetry} extends packaging to incorporate full spacetime symmetry ($G \times \mathrm{Poincar\acute e}$). 
Section~\ref{SEC:CohomologicalAndTopologicalApproachesToPackagedStates} brings in cohomological and topological refinements.
Section~\ref{SEC:HybridSystems} analyzes hybrid external-internal-global symmetries.
Finally,
Section~\ref{SEC:PracticalComputationsAndExamples} presents practical computations and examples.

\section{Group-Theoretic Foundations for Symmetry Packaging}
\label{SEC:GroupTheoreticFoundationsForSymmetryPackaging}

In this section, we prove an existence theorem for group-associated packaged subspaces, introduce the mixed-state packaging formalism, and define measures of packaged entanglement, all using group-theoretic tools (Schur’s lemma \cite{Serre1977,FultonHarris1991}, Peter-Weyl projectors \cite{PeterWeyl1927,Knapp1986}).

\subsection{Theorem: Existence of Group Associated Packaged Subspaces}
\label{SEC:ExistenceOfGroupAssociatedPackagedSubSpaces}

In this subsection, we prove an existence theorem:
for any finite or compact gauge group $G$, there exists packaged subspaces associated to it.
This theorem serves as the foundation of later investigations.

\begin{theorem}[Existence of $G$-associated packaged subspaces and module structures]
	\label{THM:ExistenceOfGAssociatedPackagedStates}	
	Let $G$ be a finite or compact group, $V_{\rm int}$ a finite-dimensional (anti-)unitary representation of $G$, and $\mathcal H_{\rm ext}$ a (possibly infinite-dimensional) Hilbert space on which $G$ acts trivially.
	Define the single-particle space
	\begin{equation}\label{EQ:SingleParticleSpace}
		\mathcal H_1 = V_{\rm int} \otimes\mathcal H_{\rm ext},
	\end{equation}
	and its total bosonic or fermionic Fock space
	\begin{equation}\label{EQ:FullFockSpace}
		\mathcal H_{\rm Fock}(\mathcal H_1)
		= \bigoplus_{n \ge 0} \mathcal H_1^{(n)}
	\end{equation}
	where
	\begin{equation}\label{EQ:NParticleSector}
		\mathcal H_1^{(n)} :=
		\begin{cases}
			\operatorname{Sym}^n(\mathcal H_1), & \text{bosons},\\
			\wedge^n(\mathcal H_1), & \text{fermions}.
		\end{cases}
	\end{equation}
	Then under the diagonal $G$‐action on $\mathcal H_{\rm Fock}(\mathcal H_1)$, there exists natural packaged subspaces and module structures as follows:
			
	\begin{enumerate}
		\item \textbf{Single-particle packaged subspaces.}				
		The single-particle internal space $V_{\rm int}$ is completely reducible under $G$ and decomposes into irreducible blocks:
		$$
			V_{\rm int} \cong \bigoplus_{R \in \widehat G} m_R \, V_R.
		$$
		where $V_R$ is the irrep of type $R$ and $m_R\in\Bbb N$.
		Thus,
		\begin{equation}\label{EQ:SingleParticleDecomposition}
			\mathcal H_1
			\;=\;
			\bigl(\!\bigoplus_{R \in \widehat G} m_R V_R\bigr)\otimes \mathcal{H}_{\rm ext}
			\;\cong\;
			\bigoplus_{R \in \widehat G} m_R\,(V_R \otimes \mathcal{H}_{\rm ext}), 
		\end{equation}
		where each summand $V_R \otimes \mathcal{H}_{\rm ext}$ is a single-particle packaged subspace.
		It carries the full set of internal quantum numbers (IQNs) of the irreducible representation (irrep) $R$ in its $V_R$-factor.
		
		\item \textbf{Multi-particle packaged subspaces.}		
		For each $n\ge1$, the $n$‐particle sector $\mathcal H_1^{(n)}$ in Eq. \eqref{EQ:FullFockSpace} is again completely reducible under $G$.
		After an isotypic decomposition, we have
		\begin{equation}\label{EQ:MultiParticleDecomposition}
			\mathcal{H}_{\rm iso}
			=
			\bigoplus_{n\ge0}\mathcal H_1^{(n)}
			=
			\bigoplus_{Q\in\widehat G}
			\mathcal H_Q,
			\quad
			\mathcal H_Q
			=
			\bigoplus_{n\ge0}N_{n,Q}\,\bigl(V_Q\otimes\mathcal H_{\rm ext}^{\otimes n}\bigr),
		\end{equation}
		where each $\mathcal{H}_Q$ is a multi-particle packaged subspace (a net charge-$Q$ sector).		
		
		\item \textbf{Packaged entangled bases.}		
		For any packaged subspace $\mathcal{H}_Q$ with $\dim \mathcal{H}_Q>1$, there exists an orthonormal basis
		\begin{equation}\label{EQ:OrthonormalBasisOfPackagedEntangledStates}
			\mathcal{B}_Q = \Bigl\{\, |\Psi^Q_\alpha\rangle \in \mathcal{H}_Q : \alpha = 1, 2, \dots, \dim \mathcal{H}_Q,\quad \langle \Psi^Q_\alpha | \Psi^Q_\beta \rangle = \delta_{\alpha\beta}\, \Bigr\},
		\end{equation}
		where each $ |\Psi^Q_\alpha\rangle $ is a packaged entangled state in which the all IQNs are inseparably entangled.
		
		\item \textbf{Canonical $G$-module isomorphisms.}		
		For each $Q$, the $G$-action on the packaged subspace $\mathcal{H}_Q$ is isomorphic (as a $G$-module) to a unique abstract irrep subspace $V_Q$.
		That is, there exists a canonical $G$-equivariant isomorphism 
		\[
		\Phi_Q: \mathcal{H}_Q \to V_Q
		\]
		that satisfies
		\begin{equation}\label{EQ:IsomorphismMapping}
			\Phi_Q\bigl(U(g)\psi\bigr) = \sigma_Q(g)\,\Phi_Q(\psi),
			\quad
			\forall ~ g \in G, ~ \psi \in \mathcal{H}_Q,
		\end{equation}
		where $\sigma_Q: G \to \mathrm{GL}(V_Q)$ is the abstract irrep corresponding to $Q$.
		
		\item \textbf{Physical packaged (singlet) subspaces.}				
		Applying the gauge projector (Symmetry Packaging I, Stage 6) to isotypic space $\mathcal{H}_{\rm iso}$, one obtains the gauge-invariant physical packaged subspace
		$$
		\mathcal{H}_{\rm phys} 
		= \Pi_{\rm phys} \mathcal{H}_{\rm iso} 
		\neq \{0\}
		\quad
		\Longleftrightarrow
		\quad
		\exists\,n:\;V_{\rm int}^{\otimes n} \ni \mathbf1.
		$$
		In other words, the physical packaged subspace is nontrivial iff some tensor‐power of $V_{\rm int}$ contains the trivial $G$‐block.
	\end{enumerate}
\end{theorem}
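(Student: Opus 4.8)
The plan is to rest the whole theorem on two classical pillars—complete reducibility (Maschke's theorem for finite $G$, the Peter--Weyl theorem for compact $G$) and Schur's lemma—together with the explicit averaging form of the gauge projector. Parts~1 and~2 are pure decomposition statements, Part~4 is the Schur-lemma identification of the isotypic blocks, Part~3 is an elementary consequence once the packaging property is read off from the block structure, and Part~5 reduces to analyzing the image of the invariant projection. I expect all the genuine content to sit in the reverse implication of Part~5.

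For Part~1, I would first promote $V_{\rm int}$ to an (anti-)unitary representation by averaging an arbitrary inner product over $G$ (a finite sum, or the Haar integral), so that the orthogonal complement of any subrepresentation is again a subrepresentation; this delivers the isotypic decomposition $V_{\rm int}\cong\bigoplus_R m_R V_R$. Since $G$ acts trivially on $\mathcal H_{\rm ext}$, tensoring distributes over the direct sum and commutes with the $G$-action, yielding Eq.~\eqref{EQ:SingleParticleDecomposition} with each $V_R\otimes\mathcal H_{\rm ext}$ a $G$-subrepresentation of pure internal type $R$. For Part~2, I would observe that $\operatorname{Sym}^n(\mathcal H_1)$ and $\wedge^n(\mathcal H_1)$ are $G$-subspaces of $\mathcal H_1^{\otimes n}=V_{\rm int}^{\otimes n}\otimes\mathcal H_{\rm ext}^{\otimes n}$ on which the $G$-action factors entirely through the finite-dimensional tensor power $V_{\rm int}^{\otimes n}$. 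Hence each $n$-particle sector is a (possibly infinite) multiple of finitely many irreps and is completely reducible; collecting the $V_Q$-isotypic pieces across all $n$ assembles the charge sectors $\mathcal H_Q$ of Eq.~\eqref{EQ:MultiParticleDecomposition}, with $N_{n,Q}$ recording the internal multiplicities and $\mathcal H_{\rm ext}^{\otimes n}$ furnishing the attached multiplicity space.

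Parts~3 and~4 follow from Schur's lemma. As a $G$-module $\mathcal H_Q$ is a multiple of the single abstract irrep $V_Q$, and each irreducible constituent admits a $G$-equivariant map onto the abstract model $V_Q$, unique up to a scalar; assembling these produces the canonical intertwiner $\Phi_Q$ satisfying Eq.~\eqref{EQ:IsomorphismMapping}, which restricts to an isomorphism on every irreducible block. For Part~3 the existence of an orthonormal basis $\mathcal B_Q$ is immediate by Gram--Schmidt; the substantive point is that each $|\Psi^Q_\alpha\rangle$ is packaged, with its IQNs locked into the single block $Q$. This is precisely the packaging characterization established for charge sectors in Symmetry Packaging~I: confinement to one isotypic sector obstructs any factorization into components carrying individually definite IQNs, since a factored state would transform under a reducible tensor representation straddling several sectors.

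The remaining work is Part~5, where I expect the main obstacle. I would write the gauge projector explicitly as the invariant average $\Pi_{\rm phys}=\tfrac{1}{|G|}\sum_{g}U(g)$ for finite $G$, or $\Pi_{\rm phys}=\int_G U(g)\,dg$ for compact $G$, so that $\mathcal H_{\rm phys}$ is exactly the trivial ($G$-invariant) isotypic sector. The forward implication is then immediate: a nonzero invariant vector lives in some $n$-particle sector and exhibits the trivial block inside $V_{\rm int}^{\otimes n}$. The reverse implication is the delicate one, because the Fock space contains only the symmetric or antisymmetric powers, and a trivial block present in the full tensor power $V_{\rm int}^{\otimes n}$ need not survive naive (anti)symmetrization. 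Here I would exploit $\mathcal H_{\rm ext}$: choosing $n$ distinct orthonormal external labels $|e_1\rangle,\dots,|e_n\rangle$, the assignment $v_1\otimes\cdots\otimes v_n\mapsto(v_1\otimes e_1)\wedge\cdots\wedge(v_n\otimes e_n)$ (respectively its symmetrized analogue) is a $G$-equivariant injection of $V_{\rm int}^{\otimes n}$ into the $n$-particle sector, because the distinct external labels absorb the statistics while $G$ acts only on the internal factors. Any trivial block in $V_{\rm int}^{\otimes n}$ thus maps to a genuine singlet in $\mathcal H_{\rm Fock}(\mathcal H_1)$, giving $\mathcal H_{\rm phys}\neq\{0\}$ and closing the equivalence; the only caveat to check is that $\mathcal H_{\rm ext}$ supplies enough distinct labels, which holds whenever it is infinite-dimensional or of dimension at least $n$.
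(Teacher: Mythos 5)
Your overall architecture matches the paper's: Maschke/Peter--Weyl for complete reducibility in Parts 1--2, Schur's lemma plus Gram--Schmidt for Parts 3--4, and the group-averaged gauge projector for Part 5. The one place you genuinely diverge is the reverse implication of Part 5, and there your argument is actually more careful than the paper's. The paper simply asserts that $\Pi_{\rm phys}$ projects onto the multiplicity space of the trivial irrep and that this is nonzero iff some $V_{\rm int}^{\otimes n}\ni\mathbf 1$, without addressing the fact that the Fock space contains only $\operatorname{Sym}^n(\mathcal H_1)$ or $\wedge^n(\mathcal H_1)$, so a trivial block in the full tensor power $V_{\rm int}^{\otimes n}$ need not survive (anti)symmetrization --- e.g.\ for $G=\mathrm{SU}(2)$ with $V_{\rm int}=\mathbf 2$ and trivial $\mathcal H_{\rm ext}$, $\operatorname{Sym}^2(\mathbf 2)=\mathbf 3$ contains no singlet even though $\mathbf 2^{\otimes 2}$ does. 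Your equivariant injection $v_1\otimes\cdots\otimes v_n\mapsto(v_1\otimes e_1)\wedge\cdots\wedge(v_n\otimes e_n)$ built from $n$ distinct orthonormal external labels closes exactly this gap (the orthogonal external labels make the map injective and absorb the statistics, while $G$-equivariance is automatic since $G$ acts trivially on $\mathcal H_{\rm ext}$), and it correctly exposes the hidden hypothesis $\dim\mathcal H_{\rm ext}\ge n$ that the theorem as stated quietly relies on. The only other difference is cosmetic: for Part 3 the paper runs a short Schmidt-rank-1-versus-2 argument to certify that superpositions of packaged product vectors are entangled, whereas you defer to the packaging characterization of Symmetry Packaging I; both operate at the same level of rigor.
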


\begin{proof}
	Let us now split the proof into five steps:
	
	\begin{enumerate}
		\item \textbf{Existence of single-particle packaged subspaces:}  
		According to Wigner's theorem \cite{Wigner1931}, any symmetry can be represented by a unitary or anti-unitary operator.
		By Maschke’s theorem (finite $G$) or the Peter-Weyl theorem (compact $G$), the single-particle space $\mathcal H_1$ decomposes into irreducible subspaces as shown in Eq. \eqref{EQ:SingleParticleDecomposition}.		
		Each summand $V_R \otimes \mathcal{H}_{\rm ext} \subset \mathcal H_1$ carries an irreducible $G$-action on its $V_R$-factor.
		So the single-particle decomposition and packaging confirm the existence of $G$-associated single-particle packaged subspaces.
		
		\item \textbf{Existence of multi-particle packaged subspaces:}		
		Each $n$-particle sector $\mathcal H_1^{\otimes_s n}$ is again completely reducible:
		Writing
		\[
		V_{\rm int}^{\otimes n}
		\;\cong\;
		\bigoplus_{Q\in\widehat G} N_{n,Q}\,V_Q,
		\]
		where $N_{n,Q}$ is the multiplicity of the irrep $V_Q$.
		The symmetric (bosonic) or antisymmetric (fermionic) $n$-fold tensor power
		\[
		\mathcal H_1^{(\otimes_s,\wedge)^n}
		\;\cong\;
		\bigoplus_{Q\in\widehat G}
		N_{n,Q}\,\bigl(V_Q\otimes\mathcal H_{\rm ext}^{\otimes n}\bigr).
		\]	
		
		The full Fock space $\mathcal{H}_{\rm Fock}(\mathcal H_1)$ admits the isotypic decomposition and then splits irrep blocks		
		\[
		\mathcal{H}_{\rm iso}
		\cong
		\bigoplus_{n \ge 0} \mathcal H_1^{(\otimes_s, \wedge)^n}
		= \bigoplus_{Q \in \widehat G}
		\left[\bigoplus_{n \ge 0} \bigl(N_{n,Q}\,V_Q \otimes \mathcal H_{\rm ext}^{\otimes n}\bigr) \right]
		= \bigoplus_{Q \in \widehat G} \mathcal{H}_Q
		\]
		where each
		\[
		\mathcal H_Q
		=
		\bigoplus_{n\ge0}N_{n,Q}\,\bigl(V_Q\otimes\mathcal H_{\rm ext}^{\otimes n}\bigr)
		\]
		is a multi-particle packaged subspace carrying a fixed charge $Q$ (an irreducible net charge-$Q$ sector).
				
		\item \textbf{Existence of orthonormal packaged entangled bases:}		
		By superselection, only packaged states within the same $\mathcal H_Q$ may	coherently superpose.
		Since $\mathcal H_Q$ is finite dimensional, it admits some orthonormal packaged basis.
		Moreover:
		
		A packaged product vector in bi-partition $\mathcal H_A\otimes\mathcal H_B$ has Schmidt rank 1, whereas any nontrivial packaged superposition of two orthogonal packaged product vectors must have Schmidt rank 2.
		Hence, a nontrivial packaged superposition of linearly independent packaged product states cannot itself be a product:
		it is necessarily packaged entangled.
		Applying Gram-Schmidt procedure \cite{HoffmanKunze1971} to any generating set of such packaged superpositions then produces an orthonormal packaged entangled basis
		$\mathcal{B}_Q$ (Eq.~(\ref{EQ:OrthonormalBasisOfPackagedEntangledStates})).
				
		\item \textbf{Existence of canonical $G$-module isomorphisms:}  
		$\forall ~ g \in G, ~ |\Psi^Q_\alpha\rangle \in \mathcal{B}_Q$, we have the following transformation law:
		\[
		U(g)|\Psi^Q_\alpha\rangle = \sum_{\beta=1}^{\dim \mathcal{H}_Q} D^{(Q)}_{\beta\alpha}(g)\, |\Psi^Q_\beta\rangle,
		\]
		where $D^{(Q)}(g)$ is the matrix representing $g$ in the abstract irrep $\sigma_Q$ on $V_Q$.
		
		Define a map
		\[
		\Phi_Q: \mathcal{H}_Q \to V_Q,
		\]
		which sends the basis $\{|\Psi^Q_\alpha\rangle\}$ to the standard basis of $V_Q$.
		
		Then, $ \forall ~ g \in G, ~ \psi \in \mathcal{H}_Q$, we have
		\[
		\Phi_Q\bigl(U(g)\psi\bigr) = \sigma_Q(g)\,\Phi_Q(\psi).
		\]
		This proves that $\Phi_Q$ is an isomorphism between $\mathcal{H}_Q$ and $V_Q$.
		
		\item \textbf{Existence of physical packaged subspaces:}	
		The gauge projector $\Pi_{\rm phys}$ annihilates every nontrivial $V_Q$.
		Its image $\mathcal H_{\text{phys}}$ is exactly the multiplicity-space of the trivial irrep in $\mathcal{H}_{\rm iso}(V)$.

		$\mathcal H_{\text{phys}}$ contains at least one non-zero packaged state iff some tensor power of $V_{\rm int}$ contains the trivial irreducible block., i.e., $\mathcal{H}_{\rm phys} \ne \{0\}$ holds when some $V_Q \otimes \cdots \otimes V_Q \ni \mathbf 1$.
		This proves the existence of gauge-invariant physical packaged subspaces $\mathcal H_{\rm phys}$.
	\end{enumerate}
	
	Collecting all the above steps, we finally proved the existence theorem:
	for any finite or compact group $G$, there exists corresponding single-particle packaged subspaces, multi-particle packaged subspaces, orthonormal packaged entangled bases, canonical $G$-module isomorphisms, and physical packaged subspaces.
\end{proof}

\begin{remark}[Scope of the existence theorem]
	The construction in Theorem \ref{THM:ExistenceOfGAssociatedPackagedStates} is valid for any finite or compact group $G$ that acts on the quantum physical system.	
	However, if $G$ does not satisfy the ``finite or compact'' condition, then we need to extend and refine the theorem.
	Specifically:
	\begin{enumerate}
		\item Anti-unitary symmetries.
		if $G$ contains anti-unitary elements (e.g., time reversal), one can work with Wigner’s theorem on projective representations and track complex conjugation on matrix elements.
		
		\item Infinite-dimensional representations.
		A compact group $G$ has finite-dimensional irreps, the construction directly applies.
		But if $G$ is non-compact, then it may have infinite-dimensional irreps.
		Therefore, additional technical conditions are required.
		Here, we only focus on finite-dimensional irreps.
		
		\item Superselection (gauge) vs. global symmetry.
		In local gauge theories, the total charge is conserved and the Hamiltonian is block-diagonalized.
		Superselection rules prohibit coherently superposition of packaged states from different charge sectors.
		This is part of the packaging principle and will not change in any circumstance.
		For global symmetries, although cross-sector superpositions of global quantum numbers are allowed, we usually still focuses on a definite $Q$ sector.
	\end{enumerate}
\end{remark}

\begin{example}[Existence of $\mathrm{SU}(3)$-associated packaged subspaces and module structures]
	We now illustrate each of the five points of Theorem \ref{THM:ExistenceOfGAssociatedPackagedStates} using $\mathrm{SU}(3)$ color
	$$
	G=\mathrm{SU}(3)_{\rm color},\quad V_{\rm int}=\mathbf{3},\quad \mathcal H_{\rm ext}=L^2(\mathbb R^3)\otimes\{\uparrow,\downarrow\}\,.
	$$
	both in the first-quantized (coordinate-space wavefunctions) and corresponding second-quantized operators.
	\begin{enumerate}
		\item \textbf{Single-particle packaged subspaces.}
		\begin{itemize}
			\item \textbf{Coordinate representation.}
			A single quark is described by a wave‐function		
			$$
			\psi_i(\mathbf x,s)\,,\qquad i=1,2,3,\;\;s=\uparrow,\downarrow,
			$$		
			which transforms as		
			$$
			\psi_i(\mathbf x,s)\;\longmapsto\;U_i^{\;j}\,\psi_j(\mathbf x,s)\,,\quad U\in\mathrm{SU}(3)\,.
			$$
			Generally, a quark’s wavefunction splits into		
			$$
			\Psi(\mathbf x,s)\;=\;\sum_{i=1}^3 \psi_i(\mathbf x,s) \,|\,i\rangle\,,\quad
			|\,i\rangle\in\mathbf3_{\rm color}\,,
			$$		
			Thus, the single-quark packaged subspace is		
			$$
			\mathcal H_1\;=\;\mathbf{3}\;\otimes\;L^2(\mathbb R^3)\otimes\{\!\uparrow,\downarrow\},
			$$		
			which carries an irreducible color factor $\mathbf3$ together with trivial action on the external variables.

			\item \textbf{Second quantization.}
			We introduce quark creation operators		
			$$
			a_i^\dagger(\mathbf p,s)\,,\quad i=1,2,3,
			$$		
			which obey $\{a_i(\mathbf p,s),a_j^\dagger(\mathbf p',s')\}=\delta_{ij}\,\delta_{ss'}\,\delta^3(\mathbf p-\mathbf p')$.
			Then the single-quark packaged subspaces (the fundamental $\mathbf3$ of $\mathrm{SU}(3)$) are
			$$
			\operatorname{span}
			\{a_i^\dagger(\mathbf p,s)\,\lvert0\rangle\}
			\;\cong\;
			\mathbf3\otimes\mathcal H_{\rm ext},
			\quad i=1,2,3,
			$$
		\end{itemize}

		\item \textbf{Multi-particle packaged subspaces.}
		\begin{itemize}			
			\item \textbf{Coordinate representation.}
			\begin{enumerate}
				\item Two-quark: $\mathbf3\otimes\mathbf3= \mathbf6\oplus\overline{\mathbf3}$
				
				The two-quark product wave function			
				$
				\Psi_{ij}(\mathbf x_1,s_1;\mathbf x_2,s_2)
				=\psi_i(\mathbf x_1,s_1)\,\psi_j(\mathbf x_2,s_2)
				$
				decomposes into
				\begin{enumerate}
					\item symmetric $\mathbf6$:				
					$$
					\Psi^{(6)}_{(ij)}=\frac1{\sqrt2}\bigl[\Psi_{ij}+\Psi_{ji}\bigr],  
					$$
					
					\item antisymmetric $\overline{\mathbf3}$:				
					$$
					\Psi^{(\overline3)}_{[ij]}=\frac1{\sqrt2}\bigl[\Psi_{ij}-\Psi_{ji}\bigr].
					$$
				\end{enumerate}
				
				\item Quark-antiquark (meson): $\mathbf3\otimes\overline{\mathbf3}=\mathbf1\oplus\mathbf8$
				
				The quark-antiquark product wave function
				$
				\Phi_i{}^j(\mathbf x,\mathbf y;s,s')
				=\psi_i(\mathbf x,s)\,\phi^j(\mathbf y,s')
				$	
				decomposes into
				\begin{enumerate}
					\item singlet $\mathbf1$:				
					$$
					\Phi^{(1)}=\frac1{\sqrt3}\,\delta^{\,i}_{j}\,\Phi_i{}^j,
					$$
					
					\item octet $\mathbf8$:				
					$$
					\Phi^{(8)a}
					=\sqrt2\,\Phi_i{}^j\,(T^a)^i_{\;j},\quad a=1,\dots,8.
					$$
				\end{enumerate}
			\end{enumerate}

			\item \textbf{Second quantization.}			
			We introduce antiquark creation operators $b^{\dagger i}$ that transforms in the $\overline{\mathbf3}$.
			After a Peter-Weyl projection, we have
			\begin{enumerate}
				\item Two-quark.
				$$
				\begin{aligned}
					&\text{six:}\quad
					\frac1{\sqrt2}\,a_i^\dagger\,a_j^\dagger\bigl|0\bigr\rangle
					+
					(i\leftrightarrow j)
					\;\in\;\mathbf6,\\
					&\text{anti-triplet:}\quad
					\frac1{\sqrt2}\,a_i^\dagger\,a_j^\dagger\bigl|0\bigr\rangle
					-
					(i\leftrightarrow j)
					\;\in\;\overline{\mathbf3}.
				\end{aligned}
				$$
				
				\item Quark-antiquark (meson).	
				$$
				\begin{aligned}
					&|\text{singlet}\rangle
					=\frac1{\sqrt3}\,a_i^\dagger\,b^{\dagger i}\,\lvert0\rangle
					\;\in\;\mathbf1,\\
					&|\text{octet},a\rangle
					=a_i^\dagger\,(T^a)^i_{\;j}\,b^{\dagger j}\,\lvert0\rangle
					\;\in\;\mathbf8.
				\end{aligned}
				$$
			\end{enumerate}
		\end{itemize}

		\item \textbf{Packaged entangled bases.}
		
		Let us take the two-quark antisymmetric $\overline{\mathbf3}$ sector as an example:
		\begin{itemize}
			\item \textbf{Coordinate representation.}
			In the two-quark antisymmetric $\overline{\mathbf3}$ sector, one can choose an orthonormal set of fully hybrid-packaged entangled wavefunctions, e.g.,
			$$
			\phi^{(\alpha)}_{ij}(\mathbf x,\mathbf y;s,s')
			=\sum_{i<j}c^{(\alpha)}_{ij}\,\psi_{[ij]}(\mathbf x,\mathbf y;s,s'),
			\quad
			\alpha=1,2,3,
			$$
			with $\sum_{i<j}\overline{c}^{(\alpha)}_{ij}c^{(\beta)}_{ij}=\delta^{\alpha\beta}$.  Each $\phi^{(\alpha)}$ cannot factorize into $\chi(\mathbf x,s)\,\chi(\mathbf y,s')$.
			
			\item \textbf{Second quantization.}
			Applying the Gram-Schmidt procedure on the three antisymmetric generators
			$\hat A^\dagger_{ij}\equiv\hat a_i^\dagger\hat a_j^\dagger-\hat a_j^\dagger\hat a_i^\dagger$, we obtain an orthonormal basis
			$\{\Psi^{(\alpha)}\}\subset\mathcal H_{Q=\overline3}$ all of which are fully hybrid-packaged entangled.
		\end{itemize}

		\item \textbf{Canonical $G$-module isomorphisms.}
		Once again, we take the two-quark antisymmetric $\overline{\mathbf3}$ sector as an example:
		\begin{itemize}
			\item \textbf{Coordinate representation.}
			Each packaged subspace $\mathcal H_Q$ carries exactly the same transformation law as the abstract irrep $V_Q$.
			For instance
			$$
			\phi_{[ij]}(x,y)\;\mapsto\;
			U(g)\,\phi_{[ij]}(x,y)\;=\;
			\sum_{k\ell}(g_i{}^k\,g_j{}^\ell-g_j{}^k\,g_i{}^\ell)\,\phi_{[k\ell]}(x,y).
			$$		
			Therefore, if we construct a map $\Phi_Q$ that sends the $\mathcal H_Q$ basis to the standard $\overline{\mathbf3}$ basis, then $\Phi_Q$ is the isomorphism.
			
			\item \textbf{Second quantization.}
			Again on each packaged subspace $\mathcal H_Q$, the results of action
			$U(g)\,\hat a_i^\dagger\cdots\ket0 = (g_i{}^k\,\hat a_k^\dagger)\cdots\ket0$
			is exactly the corresponding irrep $V_Q$.
			Choosing an orthonormal basis in $\mathcal H_Q$ and construct a map $\Phi_Q$ that maps it to the canonical basis of $V_Q$, then $\Phi_Q$ is the isomorphism.
		\end{itemize}

		\item \textbf{Physical packaged (singlet) subspaces.}
		Finally, we use the gauge‐projector		
		$$
		\Pi_{\rm phys}
		=\int_{\mathrm{SU}(3)}\!U(g)\,dg
		$$		
		to single out exactly the color‐singlet sector from each $n$‐particle packaged subspace.
		\begin{itemize}
			\item \textbf{Coordinate representation.}
			The meson packaged subspace (color singlet)
			$\mathcal H_{\rm phys}$ is spanned by
			$$
			\Psi_{\rm meson}(x,y)
			=\frac1{\sqrt3}\sum_{i=1}^3\psi_i(x)\,\bar\psi_i(y),
			$$
			and all higher composites whose coordinate wavefunctions live in the trivial $\mathrm{SU}(3)$ sector.
			
			\item \textbf{Second quantization.}
			For quark-antiquark $\hat a_i^\dagger\hat b_i^\dagger\ket0$, we have the color-singlet meson
			$$
			\Pi_{\rm phys}\bigl(\hat a_i^\dagger\hat b_i^\dagger\ket0\bigr)
			=
			\frac1{\sqrt3}\sum_i\hat a_i^\dagger\hat b_i^\dagger\ket0
			$$
			spans $\mathcal H_{\rm phys}$.			
			For three quarks $\hat a_i^\dagger\hat a_j^\dagger\hat a_k^\dagger\ket0$, we have the color singlet baryon
			$$
			\Pi_{\rm phys}\bigl(\hat a_i^\dagger\hat a_j^\dagger\hat a_k^\dagger\ket0\bigr)
			=
			\frac1{\sqrt6}\,\varepsilon_{ijk}\,\hat a_i^\dagger\hat a_j^\dagger\hat a_k^\dagger\ket0
			$$
			spans $\mathcal H_{\rm phys}$.
			Whenever the trivial $\mathbf1$ occurs in some tensor‐power of $\mathbf3$, we have $\mathcal H_{\rm phys}\neq\{0\}$.			
		\end{itemize}
		
		These five bullet-points use the physically familiar $\mathrm{SU}(3)$-color context to illustrate the items (1)-(5) of Theorem \ref{THM:ExistenceOfGAssociatedPackagedStates}.
	\end{enumerate}
\end{example}

\paragraph{Maximal orthonormal packaged entangled basis.}
In an isotypic space, each charge sector $\mathcal{H}_Q$ is a packaged subspace that includes all packaged states with total charge $Q$.
The states in $\mathcal{H}_Q$ can form packaged entangled states, denote by $E_Q$.  
Since the sum of two packaged entangled states can yield a product (separable) state, $E_Q$ is not closed under addition and hence does not form a linear subspace.
According to item 3 of Theorem \ref{THM:ExistenceOfGAssociatedPackagedStates}, however, we can pick out a complete orthonormal basis from $E_Q$ that spans $\mathcal{H}_Q$.
This is crucial for applications in quantum information because it guarantees the existence of a maximal orthonormal set $\{\lvert \Psi_i\rangle\}$ of packaged entangled states spanning $\mathcal{H}_Q$.  
Consequently, Bell-like measurements on $\mathcal{H}_Q$ become possible.

\begin{example}[Maximal orthonormal packaged entangled basis]
	\label{exm:MaximalOrthonormalBasisOfPES}
	Consider a particle-antiparticle system (e.g., an electron and a positron).  
	A simple basis for the neutral sector $Q=0$ is given by the two orthonormal states
	\begin{equation}\label{eq:PESParticleAntiparticle}
		\bigl\lvert \Psi^\pm \bigr\rangle 
		\;=\;
		\frac{1}{\sqrt2}\,\Bigl[
		\hat{a}^\dagger(\mathbf{p}_1)\,\hat{b}^\dagger(\mathbf{p}_2)
		\;\pm\;
		\hat{b}^\dagger(\mathbf{p}_1)\,\hat{a}^\dagger(\mathbf{p}_2)
		\Bigr]
		\lvert 0\rangle,
	\end{equation}
	where $\hat{a}^\dagger$ and $\hat{b}^\dagger$ create opposite charges $\pm e$.  
	Each $\lvert \Psi^\pm\rangle$ is a packaged entangled state, and $\{\lvert\Psi^+\rangle,\lvert\Psi^-\rangle\}$ forms an orthonormal basis for $\mathcal{H}_{Q=0}$ in this simplified two-particle model.
\end{example}

\subsection{Mixed‑state (Density‑matrix) Packaging}

Let us now develop the adjoint (Heisenberg‑picture) version of the isotypic decomposition on trace‑class operators, block‑diagonal channels, and how packaged subalgebras show up in open system models.

\begin{corollary}[Mixed state (density matrix) packaging]
	\label{COR:DensityMatrixPackaging}
	Let $\mathcal H$ be a unitary $G$‐module with isotypic decomposition
	\[
	\mathcal H
	\;=\;
	\bigoplus_{\lambda \in \widehat G}
	\Bigl(V_\lambda \otimes \mathcal M_\lambda\Bigr),
	\]
	where each $V_\lambda$ carries the irrep of type $\lambda$ (dimension $d_\lambda$) and $\mathcal M_\lambda$ is the corresponding multiplicity space.
	Then we have:
	
	\begin{enumerate}[label=(\alph*)]
		\item \emph{Block form.}
		Every $G$-invariant density operator $\rho \in \mathcal B(\mathcal H)$ decomposes uniquely as
		\[
		\rho =
		\bigoplus_{\lambda\in\widehat G}
		\Bigl(\tfrac1{d_\lambda}\,\operatorname{Id}_{V_\lambda}\Bigr)
		\otimes
		\rho_\lambda,
		\qquad
		\rho_\lambda \in \mathcal B(\mathcal M_\lambda),
		\]
		where each $\rho_\lambda\ge0$ and $\sum_\lambda \operatorname{Tr}(\rho_\lambda)=1$.
		
		\item \emph{Operational superselection.}
		If $O \in \mathcal B(\mathcal H)$ is a $G$-invariant observable, then
		\[
		O =
		\bigoplus_{\lambda \in \widehat G}
		\operatorname{Id}_{V_\lambda} \otimes O_\lambda,
		\quad
		O_\lambda\in \mathcal B(\mathcal M_\lambda),
		\]
		and
		\[
		\operatorname{Tr} \bigl[\rho\,O\bigr]
		\;=\;
		\sum_{\lambda\in\widehat G}
		\operatorname{Tr}\bigl[\rho_\lambda\,O_\lambda\bigr].
		\]
		In particular, any off‐diagonal coherence between different $\lambda$-blocks makes no contribution to any such expectation value.
		
		\item \emph{Decoherence stability.}
		Let $\mathcal E: \mathcal B(\mathcal H) \to \mathcal B(\mathcal H)$ be a quantum channel whose Kraus operators $K_i$ all commute with the $G$-action: $[K_i, U(g)]=0$ for every $g \in G$.
		Then for any $G$-invariant input $\rho$, the output $\mathcal E(\rho)$ is also $G$-invariant and still has the block form of part (a).
		Thus, no $G$-forbidden coherences can be generated by such a symmetric channel.
	\end{enumerate}
\end{corollary}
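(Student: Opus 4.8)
The plan is to treat all three parts as consequences of Schur's lemma applied to the $G$-module structure, exactly as in the single- and multi-particle decompositions of Theorem \ref{THM:ExistenceOfGAssociatedPackagedStates}, now transported to the level of trace-class operators. The unitary $G$-action on $\mathcal H$ is block-diagonal with respect to the isotypic decomposition, namely $U(g) = \bigoplus_\lambda \bigl(\sigma_\lambda(g) \otimes \operatorname{Id}_{\mathcal M_\lambda}\bigr)$, and $G$-invariance of an operator $X$ means $U(g)\,X\,U(g)^\dagger = X$ for all $g$, equivalently $[X, U(g)] = 0$. Everything reduces to identifying the commutant of this action.

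For part (a) I would write $\rho$ in block form $\rho = (\rho_{\lambda\mu})$ with $\rho_{\lambda\mu} \colon V_\mu \otimes \mathcal M_\mu \to V_\lambda \otimes \mathcal M_\lambda$. Invariance forces each block to intertwine, $(\sigma_\lambda(g) \otimes \operatorname{Id})\,\rho_{\lambda\mu} = \rho_{\lambda\mu}\,(\sigma_\mu(g) \otimes \operatorname{Id})$. Since $G$ acts only on the $V$-factors, the equivariance condition places $\rho_{\lambda\mu}$ in $\operatorname{Hom}_G(V_\mu, V_\lambda) \otimes \operatorname{Hom}(\mathcal M_\mu, \mathcal M_\lambda)$; Schur's lemma annihilates the first factor when $\lambda \neq \mu$ (inequivalent irreps admit no nonzero intertwiner) and collapses it to a scalar when $\lambda = \mu$. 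Hence the off-diagonal blocks vanish and the diagonal ones take the form $\rho_{\lambda\lambda} = \operatorname{Id}_{V_\lambda} \otimes A_\lambda$ with $A_\lambda \in \mathcal B(\mathcal M_\lambda)$. Setting $A_\lambda = \tfrac1{d_\lambda}\rho_\lambda$ absorbs the dimension factor; positivity $\rho \ge 0$ descends to $\rho_\lambda \ge 0$ because the blocks are orthogonal summands and $\tfrac1{d_\lambda}\operatorname{Id}_{V_\lambda}>0$, while $\operatorname{Tr}\rho = \sum_\lambda \tfrac1{d_\lambda}\operatorname{Tr}(\operatorname{Id}_{V_\lambda})\operatorname{Tr}(\rho_\lambda) = \sum_\lambda \operatorname{Tr}(\rho_\lambda) = 1$ supplies the normalization. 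Uniqueness is immediate, since $A_\lambda$ is determined by $\rho_{\lambda\lambda}$ and the sum is direct.

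Part (b) reuses the same Schur argument verbatim to give $O = \bigoplus_\lambda \operatorname{Id}_{V_\lambda} \otimes O_\lambda$, now with no normalization factor since an observable need not be trace-normalized. The trace identity then follows by multiplying the two block-diagonal operators and taking partial traces factor by factor, $\operatorname{Tr}(\rho O) = \sum_\lambda \tfrac1{d_\lambda}\operatorname{Tr}_{V_\lambda}(\operatorname{Id}_{V_\lambda})\,\operatorname{Tr}_{\mathcal M_\lambda}(\rho_\lambda O_\lambda) = \sum_\lambda \operatorname{Tr}(\rho_\lambda O_\lambda)$. The operational superselection statement is the complementary observation that for an \emph{arbitrary} state $\sigma$ the block-diagonality of $O$ alone forces $\operatorname{Tr}(\sigma O) = \sum_\lambda \operatorname{Tr}(\sigma_{\lambda\lambda} O_\lambda)$, so any inter-block coherence $\sigma_{\lambda\mu}$ with $\lambda \neq \mu$ is invisible to every $G$-invariant observable.

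Part (c) is a short direct computation: for $\mathcal E(\rho) = \sum_i K_i \rho K_i^\dagger$ with $[K_i, U(g)] = 0$, conjugating by $U(g)$ and commuting it past each Kraus operator gives $U(g)\,\mathcal E(\rho)\,U(g)^\dagger = \sum_i K_i\,U(g)\rho U(g)^\dagger\,K_i^\dagger = \mathcal E(\rho)$ by $G$-invariance of $\rho$, so $\mathcal E(\rho)$ is again $G$-invariant and part (a) supplies its block form. I expect the only genuine subtlety — the \emph{main obstacle} — to lie in the bookkeeping of part (a): justifying rigorously that a commuting operator splits as (scalar on the irrep factor) $\otimes$ (unconstrained operator on the multiplicity space) and that cross-block terms vanish, which requires invoking Schur's lemma in both its ``scalar'' and ``no intertwiner'' forms over the tensor-product factorization $V_\lambda \otimes \mathcal M_\lambda$ rather than over bare irreps. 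Parts (b) and (c) are then routine linear algebra and partial-trace manipulation.
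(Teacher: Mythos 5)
Your proposal is correct and follows essentially the same route as the paper's proof: both parts (a) and (b) rest on the two forms of Schur's lemma (no intertwiners between inequivalent irreps, scalars on the diagonal blocks) applied to the block decomposition of a commuting operator, and part (c) is the same direct Kraus-operator computation. Your phrasing of the diagonal blocks as elements of $\operatorname{Hom}_G(V_\lambda,V_\lambda)\otimes\mathcal B(\mathcal M_\lambda)$ is a slightly more explicit justification of the factorization $\operatorname{Id}_{V_\lambda}\otimes A_\lambda$ than the paper gives, but it is the same argument.
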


\begin{proof}
	\leavevmode
	\begin{enumerate}[left=1em]
		\item[\textbf{(a)}]
		Write the isotypic decomposition as
		\[
		\mathcal H =
		\bigoplus_{\lambda} \bigl(V_\lambda\otimes\mathcal M_\lambda\bigr).
		\]
		In this block basis, any operator $\rho$ can be written in matrix‐block form
		\[
		\rho = 
		\sum_{\lambda, \mu}
		\bigl(A_{\lambda\mu}\bigr)
		\quad
		\text{with }
		A_{\lambda\mu} \in
		\mathcal B \bigl(V_\lambda \otimes \mathcal M_\lambda,\;V_\mu\otimes\mathcal M_\mu\bigr).
		\]
		When $V_\lambda\not\cong V_\mu$, by Schur’s Lemma, no nonzero $G$-map exists between in-equivalent irreps.
		Therefore, $G$-invariance,
		$
		U(g)\,\rho\,U(g)^{-1}=\rho
		\;\;\forall g\in G,
		$
		forces every off‐block $A_{\lambda\mu}=0$ for $\lambda\neq\mu$.
		We have
		\[
		\rho
		\;=\;
		\bigoplus_{\lambda}
		A_{\lambda\lambda},
		\quad
		A_{\lambda\lambda}
		\in
		\mathcal B \bigl(V_\lambda \otimes \mathcal M_\lambda\bigr).
		\]
		Now using Schur’s Lemma again, any $G$-invariant operator on $V_\lambda$ is a scalar multiple of the identity.
		Thus
		\[
		A_{\lambda\lambda}
		=
		\bigl(c_\lambda\,\operatorname{Id}_{V_\lambda}\bigr)
		\otimes
		\rho_\lambda,
		\quad
		\rho_\lambda\in\mathcal B(\mathcal M_\lambda),
		\]
		and positivity plus $\operatorname{Tr}\rho=1$ fix
		$
		c_\lambda = 1/d_\lambda
		$
		and $\sum_\lambda \operatorname{Tr}(\rho_\lambda) = 1$.
		
		\item[\textbf{(b)}]
		Any $G$-invariant observable $O$ has the same block structure as $\rho$:
		\[
		O
		=
		\bigoplus_\lambda
		\bigl(\operatorname{Id}_{V_\lambda} \otimes O_\lambda\bigr).
		\]
		Hence
		\[
		\operatorname{Tr}\bigl[\rho\,O\bigr]
		=
		\sum_\lambda
		\operatorname{Tr}\!\bigl[
		(d_\lambda^{-1} \operatorname{Id}_{V_\lambda}\otimes\rho_\lambda)
		(\operatorname{Id}_{V_\lambda} \otimes O_\lambda)
		\bigr]
		=
		\sum_\lambda
		\operatorname{Tr}(\rho_\lambda\,O_\lambda),
		\]
		and any would‐be off‐diagonal term mixing different $\lambda$’s vanishes identically in this trace.
		
		\item[\textbf{(c)}]
		If each Kraus operator $K_i$ commutes with $U(g)$, then for a $G$-invariant input $\rho$,
		\[
		U(g)\,\mathcal E(\rho)\,U(g)^{-1}
		= 
		\sum_i
		U(g)\,K_i\,\rho\,K_i^\dagger\,U(g)^{-1}
		=
		\sum_i
		K_i\,\bigl(U(g)\rho U(g)^{-1}\bigr)\,K_i^\dagger
		= 
		\mathcal E(\rho).
		\]
		Thus, $\mathcal E(\rho)$ is again $G$-invariant and so by part (a) must decompose into the same block form.
		In particular, no cross‐$\lambda$ coherences can ever arise.
	\end{enumerate}
\end{proof}

This corollary shows that the symmetry packaging survives mixing:
purity is not required.
Superselection shows up as block‑diagonal density matrices and decoherence
empties the off‑diagonal blocks faster.

\begin{remark}[Density matrix packaging]
	The same decomposition extends to the space of trace‑class operators via the adjoint action of $G$.
	Packaged blocks correspond to subalgebras of observables and symmetry‑respecting quantum channels act block‑diagonally on each irreducible sector.
\end{remark}

\subsection{Quantifying Packaged Entanglement}
\label{SEC:QuantifyingPackagedEntanglement}

Under some symmetry (gauge or otherwise), the full Hilbert space decomposes into irreducible packets. 
Specifically, if
$$
G\;\curvearrowright\;\mathcal H
\quad\Longrightarrow\quad
\mathcal H
\;\cong\;
\bigoplus_{\lambda\in\widehat G}
\;V_\lambda\;\otimes\;M_\lambda,
$$
then we call each $V_\lambda$ an internal package.
Similarly, if there is a symmetry on some external DOFs (spin, helicity, momentum, $\dots$) with irreps $W_\mu$, then
$$
\mathcal H
\;\cong\;
\bigoplus_{\lambda,\mu}
\;V_\lambda\;\otimes\;W_\mu\;\otimes\;R_{\lambda,\mu}
$$
where each $W_\mu$ is an external package, while $V_\lambda\otimes W_\mu$ is a hybrid package.

Whenever only a subset $\Gamma_{\rm allowed}\subset\widehat G$ of internal charges and/or a subset $\Delta_{\rm allowed}\subset\widehat{G_{\rm ext}}$ of external charges is physically realized, we introduce the hybrid Peter-Weyl projector
$$
P_{\rm hyb}
\;=\;
\sum_{\substack{\lambda\in\Gamma_{\rm allowed}\\\mu\in\Delta_{\rm allowed}}}
\;P^{\rm int}_\lambda\;\otimes\;P^{\rm ext}_\mu,
$$
where
$$
P^{\rm int}_\lambda
\;=\;
d_\lambda
\int_G d\mu(g)\;\chi_\lambda^*(g)\,U(g),
\qquad
P^{\rm ext}_\mu
\;=\;
d_\mu
\int_{G_{\rm ext}} d\nu(h)\;\chi_\mu^*(h)\,V(h).
$$

Physical states obey
$
P_{\rm hyb}\,\ket\Psi
= \ket\Psi
$
and live in
$\mathcal H_{\rm pkg} = \operatorname{Im} P_{\rm hyb} \subset \mathcal H$.
We now give two complementary kinematic measures of the resulting packaged entanglement.

\subsubsection{Packaged Schmidt Rank}

We first do a bi-partition to our system by splitting it into two subsystems $A$, $B$ so that
$$
\mathcal H_{\rm pkg}
\subset
\mathcal H_A \otimes \mathcal H_B.
$$
Then we restrict the Schmidt decomposition:
for any pure $\ket\Psi\in\mathcal H_{\rm pkg}$, we perform the usual Schmidt decomposition within $\mathcal H_{\rm pkg}$:
$$
\ket\Psi
= 
\sum_{k=1}^{R_{\rm pack}}
\lambda_k \ket{\psi_k}_A \otimes \ket{\phi_k}_B,
$$
where by construction each $\ket{\psi_k}_A,\ket{\phi_k}_B$ lies in
$\mathcal H_{A,\rm pkg} = \operatorname{Im}\bigl(P_{\rm hyb}\vert_A\bigr)$ and similarly on $B$.

Finally, we define
$$
\mathrm{Packaged\ Schmidt\ Rank}(\Psi)
\;=\;
R_{\rm pack}
$$
Since $\dim\mathcal H_{\rm pkg}\le\dim(\mathcal H_A\otimes\mathcal H_B)$, one always has $R_{\rm pack}\le R_{\rm full}$, and $R_{\rm pack}=1$ if and only if $\ket\Psi$ is un‐entangled within the packaged subspace.

\subsubsection{Packaged Entanglement Entropy}

We first write out the projected density:
$$
\rho \;=\; \ket\Psi\!\bra\Psi,
\quad
\rho = P_{\rm hyb}\,\rho\,P_{\rm hyb}.
$$
Then take partial trace to obtain the reduced density on $A$ by tracing out $B$:
$$
\rho_A
\;=\;
\operatorname{Tr}_B(\rho),
$$
which satisfies $[\rho_A,P_{\rm hyb}\vert_A]=0$.

Finally, we define the packaged entanglement entropy \cite{Shannon19483,Shannon19484,VedralPlenio1998,Casini2014,DonnellyWall2015} as:
$$
	S_{\rm pack}(A : B)
	\;=\;
	-\,\operatorname{Tr} \bigl(\rho_A\, \log\rho_A\bigr)
$$
This counts only the correlations compatible with both internal and external superselection \cite{Casini2014}.

\begin{example}
	Illustrative Examples:
	
	\begin{itemize}
		\item QCD mesons.
		Internal packaging: $\mathbf3\otimes\bar{\mathbf3}=\mathbf1\oplus\mathbf8$.
		$P_{\rm hyb}=P_{\mathbf1}$ isolates the color singlet component.
		Both $R_{\rm pack}$ and $S_{\rm pack}$ measure genuine color entanglement.
		
		\item $\mathbb Z_N$ Gauge theory.
		Internal packaging: project onto zero‐flux sector.
		The packaged entanglement entropy counts only the flux string entanglement crossing the $A - B$ boundary.
	\end{itemize}
\end{example}

\begin{remark}
	One can refine these further by isolating each irreducible block $(\lambda,\mu)$ individually by defining
	$$
	\rho_{A;\lambda,\mu}=\operatorname{Tr}_B\bigl(P_{\lambda,\mu}\,\rho\,P_{\lambda,\mu}\bigr)
	$$
	and
	$$
	S_{\lambda,\mu}=-\operatorname{Tr}(\rho_{A;\lambda,\mu}\log\rho_{A;\lambda,\mu}).
	$$
	This quantifies hidden correlations inside each internal-external package.
\end{remark}

This unified approach makes it clear that any symmetry yields a projector onto allowed blocks.
Within that block, one can still meaningfully define Schmidt rank and von Neumann entropy measures of packaged entanglement.

\section{Symmetry Packaging under Finite Groups}
\label{SEC:SymmetryPackagingOnFiniteGroups}

A finite group \cite{Artin2014,Serre1977} is one that has a finite number of elements.
All finite groups are special cases of discrete groups (topological groups where the topology is discrete, i.e., every singleton set is open).
Finite groups play a dual role in physics:
they may appear either as local (gauge) symmetries or as global discrete symmetries \cite{Wegner1971,Hooft1978,DijkgraafWitten1990}.
Finite groups also play important roles in many physical systems \cite{Landau1957,Wigner1959} by further constraining or partitioning the IQNs.
In these cases, the packaging principle is refined by imposing that states also lie in definite discrete symmetry sub-sectors.
In either case, the representation theory of a finite group $G$ is completely reducible, which makes the classification of charge sectors and the enforcement of superselection rules transparent.

In this section, we explore how finite groups refine the construction and classification of packaged states and how projection operators \cite{Hellinger1909,ReedSimon1972,Conway2007} and superselection rules ensure that the packaged charges remain intact.

\subsection{General Finite Groups}
\label{SEC:GeneralFiniteGroups}

We now explore symmetry packaging on a general finite group, without restricting ourselves to any particular group structure.
Let us first discuss the finite group $G$ as a local gauge symmetry.

\subsubsection{Finite Group $G$ as a Local Gauge Symmetry}

A finite gauge theory is defined by associating to each link (or edge) of a lattice a group element $U_{xy}\in G$.
It requires that physical states be invariant under local transformations
$ U_{xy} \;\mapsto\; g(x)\,U_{xy}\,g(y)^{-1}, \qquad g(x)\in G \,.$
Matter fields $\psi(x)$ transform at site $x$ by the appropriate finite-group representation:
$$
\psi(x)\;\mapsto\; \rho_R\bigl(g(x)\bigr)\,\psi(x),
$$
where $R$ labels an irrep of dimension $d_R$.
The Hilbert space of a single excitation then decomposes as
$$
\mathcal H_1
= \bigoplus_{R \in \widehat G} \;V_R \otimes \mathcal H_{\rm ext},
$$
where $V_R$ is the carrier space of the irrep $R$ and $\mathcal H_{\rm ext}$ encodes any additional spacetime or flavor DOFs.
The packaging principle asserts that once a local gauge charge $R$ is chosen, the corresponding irrep block $V_R$ cannot be further divided by any gauge-invariant operator.

Confinement or singlet constraints (Gauss law) come from the requirement that global gauge transformations act trivially on physical states:
$$
\bigotimes_{x\in\Lambda} \rho_R\bigl(g(x)\bigr)\,|\mathrm{phys}\rangle
= |\mathrm{phys}\rangle.
$$

Equivalently, the gauge projector onto net charge-zero (or more general net-$R_0$) sectors is built from the group average:
$$
\Pi_{R_0}
= \frac{\dim R_0}{|G|}\sum_{g\in G} \chi_{R_0}^*(g)\,U(g),
$$
which enforces that only the trivial or targeted irrep appears in the global decomposition of the Fock space.

\begin{example}[Packaged states in a finite gauge theory of $S_3$]
	Let $G=S_3$, with three irreps:
	two 1D ($\mathbf1$, $\mathbf{sgn}$) and one 2D ($\mathbf2$).
	A single particle in the $\mathbf2$-sector carries a two-component internal index that cannot be split by gauge-invariant measurements.
	Gauge-invariant multi-particle states live in the trivial sector of the product $R_1\otimes R_2\otimes\cdots$ and may form nontrivial packaged entangled singlets only when $\mathbf2^{\otimes n}$ contains $\mathbf1$ or $\mathbf{sgn}$.
\end{example}

\begin{example}[Symmetry packaging for the quaternion group $Q_8$]
	The quaternion group
	\[
	Q_8 \;=\;\{\pm1,\;\pm i,\;\pm j,\;\pm k\},
	\quad
	i^2=j^2=k^2=ijk=-1,
	\]
	has five irreps over $\mathbb C$:
	\begin{itemize}
		\item Four 1-dimensional irreps (each factors through $Q_8/\{\pm1\}\cong \mathbb Z_2\times\mathbb Z_2$).
		
		\item One 2-dimensional irrep $V_{\mathbf2}$.
	\end{itemize}
	
	By the packaging principle, any state transforming in $V_{\mathbf2}$ furnishes a two-component block that cannot be further split by $Q_8$-invariant operators.
	The projector onto that 2D sector is
	\[
	P_{\mathbf2}
	\;=\;
	\frac{2}{|Q_8|}\,
	\sum_{g\in Q_8}\chi_{\mathbf2}^*(g)\,U(g),
	\]
	where $\chi_{\mathbf2}(g)$ is the character of the 2-dimensional irrep.
	Any packaged state $\ket{\Psi}\in V_{\mathbf2}$ has Schmidt rank at most 2 and remains inseparable under the full $Q_8$ action.
	
	In contrast, states in any of the four 1-dimensional irreps each carry a single discrete charge and behave exactly like $\mathbb Z_2$ quanta.
\end{example}

\begin{example}[Symmetry packaging on local gauge group $\mathbb{Z}_N$]
	Let us now treat the cyclic group $\mathbb{Z}_N$ as a local gauge group.
	Then the $\mathbb{Z}_N$ transformation varies from point to point and a gauge field $U_{xy}$ is introduced to ensure local invariance.
	The physical Hilbert space is restricted to gauge-invariant states.

	Let $g(x) \in \mathbb{Z}_N$ be a local gauge transformation.
	Then for each spacetime point $x$, we have
	\[
	g(x) \;\in\;\{\,\omega^{k(x)}:k(x)\in\{0,\dots,N-1\}\}, 
	\quad \omega = e^{2\pi i/N}.
	\]
	A matter field $\psi(x)$  that carries discrete charge $q\in \{0,1,\dots,N-1\}$ transforms as
	\[
	\psi(x) \;\mapsto\;\omega^{\,q}\,\psi(x).
	\]
	To maintain local gauge-invariance, a gauge field $U_{xy}$ is introduced on links between points $x$ and $y$ so that the covariant derivative transforms homogeneously.
	Its transformation law is:
	\[
	U_{xy} \mapsto \omega^{q(x)-q(y)}U_{xy}.
	\]
	Physical states $|\mathrm{phys}\rangle$ must be invariant under local gauge transformations (must satisfy Gauss-law constraints).
	For instance, if $G(x)$ is the generator of local gauge transformations at site $x$, then
	\[
	G(x)~ |\mathrm{phys}\rangle = |\mathrm{phys}\rangle \quad \forall x.
	\]
	Usually, in confining theories, only the gauge singlet (net charge zero) sector is truly gauge-invariant.

	The irreps of $\mathbb{Z}_N$ (which, in the 1D case, are given by $\rho_d(g) = \exp(2\pi i\,d/N)$) label the local gauge charge at each point.
	However, due to the local nature, the fields must be dressed by gauge fields and the physical Hilbert space is projected onto gauge-invariant states (e.g., net $\mathbb{Z}_N$ charge zero).
	In this way, the IQNs are packaged together, but the constraint is enforced locally via Gauss’s law.
\end{example}

\subsubsection{Finite Group $G$ as a Pure Global Discrete Symmetry}

If $G$ acts globally \cite{Potts1952,Jose1977}, a single operator $U(g)$ implements the same group element everywhere in spacetime:
$$
U(g)\,:\quad \mathcal H \to \mathcal H, \qquad g \in G,
$$
and physical observables or states need only satisfy invariance under the global action.
The full Hilbert space decomposes into isotypic sectors:
$$
\mathcal H
= \bigoplus_{R\in\widehat G}
\Bigl(V_R\otimes\mathcal M_R\Bigr),
$$
where $V_R$ carries the irrep $R$ and $\mathcal M_R$ is its multiplicity space.
A global packaged state is then any vector in a fixed charge sector $V_R\otimes\mathcal M_R$, and no gauge field is needed to enforce superselection.

The corresponding projector onto the $R$-sector is
$$
P_R
= \frac{d_R}{|G|}\sum_{g\in G} \chi_R^*(g)\,U(g),
$$
which picks out from an arbitrary state its component transforming in $R$.
Coherent superpositions between different $R$’s are forbidden by superselection rules, and any global-symmetry-invariant observable is block-diagonal in the $R$-label.

\begin{example}[Global packaging for a non-abelian finite group]
	Consider again $G=S_3$.
	The Hilbert space splits as	
	$$
	\mathcal H = (\mathbf1\otimes\mathcal M_{\mathbf1})\oplus(\mathbf{sgn}\otimes\mathcal M_{\mathbf{sgn}})\oplus(\mathbf2\otimes\mathcal M_{\mathbf2}).
	$$	
	A creation operator $a_{\alpha,i}^\dagger$ that transforms in the 2D irrep satisfies	
	$$U(g),a_{\alpha,i}^\dagger,U(g)^{-1} = D_{\alpha\beta}^{(\mathbf2)}(g),a_{\beta,i}^\dagger
	$$	
	and cannot be decomposed further into smaller symmetry charges.
	Observables commuting with $U(g)$ act as $\operatorname{Id}_{V_R}\otimes O_R$ on each sector.
\end{example}

\begin{example}[Symmetry packaging on a pure global $\mathbb Z_N$]
	Symmetry $\mathbb Z_N$ clock model:
	
	\begin{enumerate}
		\item Local Hilbert space
		At each site (or for each particle) introduce an $N$-dimensional Hilbert space with orthonormal basis		
		$
		\bigl\{\ket{q}\,\big|\,q=0,1,\dots,N-1\bigr\}.
		$
		
		\item Global $\mathbb Z_N$ action
		Let the generator $g$ of $\mathbb Z_N$ act by		
		$
		U(g)\,\ket{q}
		= \ket{q+1 \;(\mathrm{mod}\;N)}.
		$		
		Equivalently, in the character basis (the irreps of $\mathbb Z_N$), define
		$$
		\ket{\chi_k}
		\;=\;\frac{1}{\sqrt{N}}\sum_{q=0}^{N-1}e^{-2\pi i\,kq/N}\,\ket{q},
		\quad k=0,1,\dots,N-1,
		$$		
		on which		
		$
		U(g)\,\ket{\chi_k}
		= e^{2\pi i\,k/N}\,\ket{\chi_k}.
		$	
		Here each $\ket{\chi_k}$ spans a one-dimensional irrep $V_k$ of $\mathbb Z_N$.

		\item Single-particle packaging.
		
		By construction, any local creation operator $a^\dagger_k$ that produces $\ket{\chi_k}$ transforms as		
		$$
		U(g)\,a^\dagger_k\,U(g)^\dagger
		= e^{2\pi i\,k/N}\,a^\dagger_k,
		$$		
		so it carries exactly one irrep $V_k$.
		No operator can split $V_k$ into smaller pieces, since all irreps of $\mathbb Z_N$ are one-dimensional.

		\item Two-particle Hilbert space and net charge sectors.
		
		The two-particle space is		
		$$
		\mathcal H^{(2)}
		= \mathcal H^{(1)}\otimes\mathcal H^{(1)}
		= \bigoplus_{k=0}^{N-1} V_k^{(\rm tot)}
		$$		
		where the total charge $k$ sector is		
		$$
		V_k^{(\rm tot)}
		= \operatorname{span}\!\bigl\{
		\ket{\chi_j}\otimes\ket{\chi_{\,k-j}} \;\big|\;j=0,\dots,N-1
		\bigr\},
		$$		
		since the characters multiply:
		$\chi_j\,\chi_{k-j}=\chi_k$.

		\item Packaged entangled singlet ($k=0$).
		
		In the neutral sector $k=0$, a canonical packaged entangled state is		
		$$
		\ket{\Psi_{\rm singlet}}
		\;=\;\frac{1}{\sqrt{N}}
		\sum_{q=0}^{N-1}
		\ket{q}\otimes\ket{-q\;(\mathrm{mod}\;N)}
		\;=\;
		\frac{1}{\sqrt{N}}
		\sum_{j=0}^{N-1}
		\ket{\chi_j}\otimes\ket{\chi_{-j\;(\mathrm{mod}N)}}.
		$$
		It is invariant under the global $\mathbb Z_N$ since each term carries total charge $0$.
		It is also packaged entangled (Schmidt rank $N$) arising from the irreducible block structure of the neutral sector.
	\end{enumerate}
\end{example}

\subsubsection{Comparison}

In Table \ref{TAB:FiniteGaugeVsGlobal}, we contrast the two realizations of a finite group $G$ as a local gauge symmetry versus a global discrete symmetry.

\begin{table}[h]
	\centering
	\caption{Finite group $G$: gauge vs. global}
	\label{TAB:FiniteGaugeVsGlobal}
	\begin{tabular}{p{3cm}|p{6cm}|p{5.5cm}}
		\toprule
		Feature & Local (Gauge) $G$ & Global Discrete $G$ \\
		\midrule
		Action & $g(x)$ varies on each site, link variables $U_{xy}$ enforce covariance & Single $g$ acts uniformly via $U(g)$ \\
		\hline
		Constraint & Gauss law: physical states lie in gauge-invariant subspace, net charge often trivial & No local constraint beyond global invariance, each charge sector is physical \\
		\hline
		Isotypic decomposition & $\mathcal H = \bigoplus_R (V_R \otimes \mathcal H_{\rm ext})$, then gauge‐project to $R_0$ & $\mathcal H = \bigoplus_R (V_R \otimes \mathcal M_R)$, superselection forbids mixing \\
		\hline
		Projector & $\Pi_{R_0} = \frac{d_{R_0}}{|G|}\sum_{g\in G}\chi_{R_0}^*(g)\;\bigotimes_{x\in\Lambda}U_x(g)$ & $P_R=\frac{d_R}{|G|}\sum_g\chi_R^*(g),U(g)$ \\
		\hline
		Physical sectors & Only gauge‐singlet (or chosen $R_0$) sector survives & All irreps $R$ give distinct, non‐mixing sectors \\
		\bottomrule
	\end{tabular}
\end{table}

This unified view shows that finite groups, whether local or global, impose a rigid block structure on the Hilbert space.
The packaging principle remains the same:
once an internal charge label $R$ is assigned (via an irrep), it cannot be fractionated by any symmetry‐preserving operator.

\subsection{$\mathbb{Z}_2$ Group}
\label{sec:Z2examples}

$\mathbb{Z}_2$ group is the simplest nontrivial discrete group.
It has two elements $\{e,\,\hat{O}\}$, where $e$ is the identity element and $\hat{O}^2 = e$.
Here we will use $\hat{O}$ to represent charge conjugation ($\hat{C}$), parity ($\hat{P}$), time reversal ($\hat{T}$), and their combinations such as $\hat{C}\hat{P}$, $\hat{C}\hat{T}$, $\hat{P}\hat{T}$, and $\hat{C}\hat{P}\hat{T}$.

\subsubsection{Corollary: Packaged Entanglements on $\mathbb{Z}_2$ Symmetry}
\label{SEC:PESBasedOnZ2SymmetryOperators}

Discrete symmetry operators can also induce their own particular form of entanglements.
Here we focus on the simplest discrete group $\mathbb{Z}_2$ \cite{Ising1925,Onsager1944}.
It is simple, but very important in physics.

Referring to Theorem~\ref{THM:ExistenceOfGAssociatedPackagedStates}, we obtain the following corollary:

\begin{corollary}[Bell-like Entanglements under $\mathbb{Z}_{2}$ Symmetries]
	\label{COL:DiscreteSymmetryAndEntanglement}
	Consider a system consisting of two identical subsystems $A$ and $B$.
	Each subsystem is described by a set of quantum numbers $Q = (Q_1, Q_2, \dots, Q_n)$.
	If there is a $\mathbb{Z}_{2}$ symmetry operator $\hat{O}$ acting on the system such that
	$
	\hat{O}\,\lvert Q\rangle = \lvert -Q\rangle, 
	~
	\hat{O}\,\lvert -Q\rangle = \lvert Q\rangle,
	$
	then in the two-level Hilbert space:
	\begin{enumerate}
		\item Mathematically, one can construct four Bell-like entangled states:
		\begin{align}\label{EQ:BellLikeEntangledStates}
			\begin{aligned}
				&\lvert \Psi^{\pm} \rangle_{AB}
				\;=\;
				\frac{1}{\sqrt{2}}\,
				\Bigl(
				\lvert Q \rangle_A \,\lvert -Q \rangle_B
				\;\pm\;
				\lvert -Q \rangle_A \,\lvert Q \rangle_B
				\Bigr), \\[6pt]
				&\lvert \Phi^{\pm} \rangle_{AB}
				\;=\;
				\frac{1}{\sqrt{2}}\,
				\Bigl(
				\lvert Q \rangle_A \,\lvert Q \rangle_B
				\;\pm\;
				\lvert -Q \rangle_A \,\lvert -Q \rangle_B
				\Bigr).
			\end{aligned}
		\end{align}		
		Each of these states is an eigenstate of $\hat{O}\otimes \hat{O}$ with eigenvalue $\pm1$.
		
		\item The physical realization of the states in Eq.~(\ref{EQ:BellLikeEntangledStates}) are subject to superselection rules, i.e., if $\hat{O}$ is the charge conjugation operator $\hat{C}$, then states $\lvert \Phi^{\pm} \rangle_{AB}$ are not allowed.
		
		\item Applying a tensor product to the $\mathbb{Z}_{2}$ groups, we naturally obtain the GHZ states, W states, and Dicke states.
	\end{enumerate}
\end{corollary}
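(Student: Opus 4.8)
The plan is to establish the three claims in order, each following fairly directly from the $\mathbb{Z}_2$ action together with the general packaging framework of Theorem~\ref{THM:ExistenceOfGAssociatedPackagedStates}.

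For part (1) I would simply apply the operator $\hat{O}\otimes\hat{O}$ to each of the four candidate states and use the defining relations $\hat{O}\lvert Q\rangle = \lvert -Q\rangle$ and $\hat{O}\lvert -Q\rangle = \lvert Q\rangle$. Since $\hat{O}$ merely swaps the two basis vectors, one finds directly that $\lvert\Psi^+\rangle$ and $\lvert\Phi^+\rangle$ are fixed (eigenvalue $+1$), while $\lvert\Psi^-\rangle$ and $\lvert\Phi^-\rangle$ each acquire an overall sign (eigenvalue $-1$). Orthonormality and completeness of the four vectors in the two-level tensor space follow from the orthonormality of $\{\lvert Q\rangle,\lvert -Q\rangle\}$, so the set forms a Bell-like basis. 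This step is a routine computation.

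For part (2) I would invoke the charge superselection rule together with the block structure of Corollary~\ref{COR:DensityMatrixPackaging}. When $\hat{O}=\hat{C}$, the label $Q$ is an additive charge sent by $\hat{C}$ to $-Q$. Both terms of $\lvert\Psi^\pm\rangle$ carry total charge $Q+(-Q)=0$, so they lie in the \emph{same} superselection sector and may be coherently superposed; by contrast, the two terms of $\lvert\Phi^\pm\rangle$ carry total charges $+2Q$ and $-2Q$, which belong to \emph{distinct} charge sectors. The superselection rule then forbids their coherent superposition, so the states $\lvert\Phi^\pm\rangle$ of Eq.~\eqref{EQ:BellLikeEntangledStates} are not physically realizable under $\hat{C}$, whereas $\lvert\Psi^\pm\rangle$ are. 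This is the key conceptual content of the corollary.

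For part (3) I would generalize from two subsystems to $n$ by promoting the action to the diagonal flip $\hat{O}^{\otimes n}$ on $n$ copies, and then grade by the additional permutation symmetry $S_n$. The $\lvert\Phi^\pm\rangle$ family generalizes directly to the $n$-party GHZ states $\tfrac{1}{\sqrt2}\bigl(\lvert Q\rangle^{\otimes n}\pm\lvert -Q\rangle^{\otimes n}\bigr)$, which are the $\pm1$ eigenstates of the collective flip $\hat{O}^{\otimes n}$; restricting to the permutation-symmetric subspace and labelling states by the number of $\lvert -Q\rangle$ factors yields the Dicke states, with the single-excitation case recovering the W state. I expect the main obstacle to be precision rather than difficulty: the phrase ``naturally obtain'' is informal, so I would need to specify exactly which combined symmetry (the diagonal $\mathbb{Z}_2$ augmented by $S_n$) selects each family, and then verify that each resulting state is both packaged—lying in a single charge sector whenever superselection requires it, as in part (2)—and genuinely multipartite entangled.
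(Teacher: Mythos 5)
Your proposal is correct and follows essentially the same route as the paper: part (1) by direct application of $\hat{O}\otimes\hat{O}$ using the swap action on the basis, and part (2) by noting that the two terms of $\lvert\Phi^{\pm}\rangle$ carry total charges $+2Q$ and $-2Q$ and hence lie in distinct superselection sectors, while both terms of $\lvert\Psi^{\pm}\rangle$ have total charge zero. For part (3) you actually supply more substance than the paper, whose proof of that item is little more than a restatement of the claim; your identification of the GHZ states as $\pm1$ eigenstates of the collective flip $\hat{O}^{\otimes n}$ and of the W/Dicke states via the permutation-symmetric subspace is a reasonable way to make the paper's informal assertion precise.
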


\begin{proof}
	Let us split the proof into three parts.
	We have assumed that $\mathbb{Z}_{2}$ symmetry operator $\hat{O}$ swaps the physical quantity set, $\hat{O}: \lvert Q\rangle\leftrightarrow \lvert -Q\rangle$.
	\begin{enumerate}
		\item On the two-level Hilbert space, we have:
		\[
		(\hat{O}_A \otimes \hat{O}_B)\,\bigl(\lvert Q\rangle_A \,\lvert -Q\rangle_B\bigr)
		= \lvert -Q\rangle_A \,\lvert Q\rangle_B,
		\quad
		(\hat{O}_A \otimes \hat{O}_B)\,\bigl(\lvert -Q\rangle_A \,\lvert Q\rangle_B\bigr)
		= \lvert Q\rangle_A \,\lvert -Q\rangle_B.
		\]
		
		Applying the tensor product $\hat{O}_A \otimes \hat{O}_B$ to Eq.~\eqref{EQ:BellLikeEntangledStates}, we have
		\[
		(\hat{O}_A \otimes \hat{O}_B)\,\lvert \Psi^\pm\rangle_{AB}
		= \pm\,\lvert \Psi^\pm\rangle_{AB},
		\quad
		(\hat{O}_A \otimes \hat{O}_B)\,\lvert \Phi^\pm\rangle_{AB}
		= \pm\,\lvert \Phi^\pm\rangle_{AB}.
		\]
		
		This shows that each $\lvert \Psi^\pm\rangle$ and $\lvert \Phi^\pm\rangle$ is an eigenstate of $\hat{O}_A \otimes \hat{O}_B)$.
		
		\item In case of charge conjugation ($\hat{O} = \hat{C}$, superselection rules forbid mixing $\lvert +Q\rangle$ and $\lvert -Q\rangle$ unless the total charge of the bipartite system is zero.
		Physically, one must ensure the allowed total charge sector remains consistent with gauge-invariance.
		Therefore, states $\lvert \Phi^{\pm} \rangle_{AB}$ are not allowed because they try to superpose states with different charge.
		
		\item By applying the tensor product of the $\mathbb{Z}_{2}$ groups, we naturally obtain the GHZ states, W states, and Dicke states.
	\end{enumerate}
\end{proof}

According to Corollary~\ref{COL:DiscreteSymmetryAndEntanglement}, a $\mathbb{Z}_2$ symmetry gives rise to $\mathbb{Z}_2$-type entanglements.
Because the $\mathbb{Z}_2$ group here is not treated as a gauge group, it can result in both internal entanglement (packaged entanglement) and external entanglement (spin, momentum, etc.).
This depends on the specific operation represented by the $\mathbb{Z}_2$ group.

\subsubsection{Internal: $\mathbb{Z}_2$ as a Gauge Group}

In the special case of the finite group $\mathbb{Z}_2$ (with elements $\{e, \hat{O}\}$ and $\hat{O}^2 = I$),\cite{Wegner1971} one typically represents the nontrivial element by, say, $U_o = \sigma_x$ in a two-dimensional Hilbert space with basis $\{|+\rangle, |-\rangle\}$. The corresponding projectors
\[
P_{+} = \frac{1}{2}\Bigl(I + U_o\Bigr), \quad P_{-} = \frac{1}{2}\Bigl(I - U_o\Bigr)
\]
then partition the Hilbert space into two superselection sectors. Consequently, a packaged state carrying a local gauge charge $Q$ is assigned the full label $(Q,+)$ or $(Q,-)$, ensuring that its discrete IQNs remain completely inseparable.

Therefore, when $G$ is finite, its complete reducibility and often 1D irreps guarantee that discrete labels cannot be partially changed within a single local gauge charge sector.
The packaging principle thus remains in force:
each single-particle or multiparticle wavefunction belongs to a definite irreps block $\mathcal{H}_{Q,d}$ with no partial factorization allowed.
Hence, packaged quantum states in finite gauge theories are classified by $(Q,d)$, combining the usual gauge charge $Q$ with the discrete label $d$.

\begin{example}[$\mathbf Z_2$ lattice gauge theory]
	Consider a gauge group $G=\mathbf Z_2=\{1,-1\}$ with $|G|=2$.
	
	\begin{itemize}
		\item Packaged single-charge operator.
		
		At a site $x$ introduce $c^\dagger_x$ that transforms as
		$$
		U_{-1}(x)\,c^\dagger_x\,U_{-1}^{-1}(x)=-c^\dagger_x.
		$$
		This is a packaged block in the non-trivial irrep (odd charge).
		
		\item Gauss projector at $x$.
		
		$P_x=\tfrac12\bigl(U_{1}(x)+U_{-1}(x)\bigr)$.
		
		Action on a single charge.		
		$$
		P_x\,c^\dagger_x\ket0
		\;=\;
		\tfrac12\bigl(c^\dagger_x-c^\dagger_x\bigr)\ket0
		\;=\;0 .
		$$
		
		Hence an isolated charge is projected out - it violates Gauss law.
		
		\item Physical excitation.
		
		Take two charges at $x,y$:		
		$$
		\ket{\psi_{xy}}
		=c^\dagger_x\,c^\dagger_y\ket0,
		\qquad
		P_xP_y\,\ket{\psi_{xy}}=\ket{\psi_{xy}} .
		$$
		The pair lives in the $+1$ irrep ($\mathbf 1$) of $\mathbf Z_2$ and survives the projector.
	\end{itemize}
	
	This exactly parallels the color-singlet construction in QCD.
\end{example}

\subsubsection{Internal: Charge Conjugation $\hat{C}$}

In QFT, charge conjugation $\hat{C}$ is an operator invented to perform a reversal operation that transforms each particle $\lvert P\rangle$ into its antiparticle $\lvert \bar{P} \rangle$, and vice versa, $\lvert P\rangle \leftrightarrow \lvert \bar{P} \rangle$ \cite{WWW1952,Luders1954}.
The charge conjugation $\hat{C}$ here is not a gauge symmetry.
Although $\hat{C}$ can be defined as either unitary or anti‑unitary in different conventions, here we treat it as a linear (unitary) operator.

The action of charge conjugation $\hat{C}$ forms a discrete symmetry that is isomorphic to $\mathbb Z_{2}$ group.
It interchanges $\,+q \leftrightarrow -q$ globally, but has no Gauss law, just a global symmetry (may be broken).
The projector is $\frac12(\mathbf 1+\hat C)$, which keeps $\hat C$-even part.
It forbids nothing charged, just relates states.
This shows that the charge conjugation $\hat{C}$ is a useful auxiliary symmetry, but it is not what enforces gauge-invariance.

\paragraph{(1) Single-Particle.}
At single-particle level, the charge conjugation $\hat{C}$ is a unitary operator.
It transforms each particle $\lvert P\rangle$ into its antiparticle $\lvert \bar{P} \rangle$ and vice versa:
$$
\hat{C}\,\lvert P\rangle \;=\; \lvert \bar{P}\rangle,
\quad
\hat{C}\,\lvert \bar{P}\rangle \;=\; \lvert P\rangle,
$$
where $\lvert P\rangle$ has net charge $+Q$ and $\lvert \bar{P}\rangle$ has net charge $-Q$. This indicates that $\hat{C}$ swaps charges, i.e., $\hat{C}: \lvert +Q\rangle\leftrightarrow \lvert -Q\rangle$.
	
In fact, $\hat{C}$ reverses all IQNs (electric charge, flavor, color, baryon number, etc.), 
$$
\hat{C}: \text{IQNs} \leftrightarrow - \text{IQNs}.
$$
	
But $\hat{C}$ does not affect external degree of freedoms (spin, momentum, position, etc.) unless combined with other discrete symmetries like $\hat{P}$ or $\hat{T}$.

Under charge conjugation $\hat{C}$, we can write the basis of a single-particle space as 
$$
\{\lvert P\rangle,\,\lvert \bar{P}\rangle\}.
$$
where $\hat{C}$ swaps these states.

\paragraph{(2) Particle-Antiparticle Pair and Packaged Bell States.}
For a particle-antiparticle system, the product state basis is:
$$
\{
\lvert P\rangle_A\,\lvert P\rangle_B,
\lvert \bar{P}\rangle_A\,\lvert \bar{P}\rangle_B,
\lvert P\rangle_A\,\lvert \bar{P}\rangle_B,
\lvert \bar{P}\rangle_A\,\lvert P\rangle_B
\}.
$$
The corresponding net charges of the product states are:
$$
\{+2Q, -2Q, 0, 0\}
$$

According to Corollary \ref{COL:DiscreteSymmetryAndEntanglement}, superselection forbids mixing different total charges $+2Q$ with $-2Q$.
Only the net-zero charge sector is allowed in a particle-antiparticle pair.
More specifically:
\begin{enumerate}
	\item Forbidden superposition ($+2Q$ vs. $-2Q$):
	Superposing $\lvert P\rangle_A \lvert P\rangle_B$ with $\lvert \bar{P}\rangle_A \lvert \bar{P}\rangle_B$ would mix total charge $+2Q$ and $-2Q$, which is blocked by the charge superselection rule. Thus, it is not physically realizable.
	
	\item Physically allowed superposition (0 vs. 0):
	Superposing $\lvert P\rangle_A \,\lvert \bar{P}\rangle_B$ with $\lvert \bar{P}\rangle_A \,\lvert P\rangle_B$ would mix total charge $0$ and $0$, which obeys the charge superselection rule.
	Thus, it is physically realizable.
	We have packaged entangled states:	
	\begin{equation}\label{EQ:PsiPES}
		\bigl\vert \Psi_P^\pm\bigr\rangle_{AB}
		\;=\;
		\frac{1}{\sqrt{2}}\,
		\Bigl(
		\lvert P\rangle_A \,\lvert \bar{P}\rangle_B 
		\;\pm\;
		\lvert \bar{P}\rangle_A \,\lvert P\rangle_B
		\Bigr).
	\end{equation}
	
	Under tensor product $\hat{C}\otimes \hat{C}$, we have:
	\[
	\hat{C} \otimes \hat{C}
	\;:\;
	\lvert P\rangle_A\,\lvert \bar{P}\rangle_B
	\;\leftrightarrow\;
	\lvert \bar{P}\rangle_A\,\lvert P\rangle_B,
	\]
	and
	\[
	\hat{C} \otimes \hat{C} \,\bigl\vert \Psi_P^\pm\bigr\rangle_{AB} \;=\; \pm\,\bigl\vert \Psi_P^\pm\bigr\rangle_{AB}.
	\]
	We see that packaged entangled states $\bigl\vert \Psi_P^\pm\bigr\rangle_{AB}$ become the eigenstates of $\hat{C}\otimes \hat{C}$ with eigenvalues $\pm1$.
\end{enumerate}

\begin{example}[$\mathbb Z_{2}$ charge conjugation]
	\label{EX:Z2CPackaging}
	\leavevmode
	\begin{enumerate}
		\item Symmetry group.
		$G = \mathbb Z_{2}^{\;C}=\{1,\hat C\}$, with a unitary charge-conjugation
		operator $\hat C$ that flips every additive IQN
		(electric charge, baryon number, flavor, color index, $\cdots$).
		
		\item One-particle carrier space.
		For a single complex field with charge magnitude $q_{0}$ one has the
		occupation‑number basis
		\[
		|{\rm p}\rangle \equiv a^\dagger|0\rangle,
		\qquad
		|\bar{\rm p}\rangle \equiv b^\dagger|0\rangle,
		\quad
		\hat C\,|{\rm p}\rangle = |\bar{\rm p}\rangle,\;
		\hat C\,|\bar{\rm p}\rangle = |{\rm p}\rangle .
		\]
		Introduce the C‑even / C‑odd combinations
		\[
		|{\psi_+}\rangle
		:=\tfrac{1}{\sqrt2}(|{\rm p}\rangle+|\bar{\rm p}\rangle),\qquad
		|{\psi_-}\rangle
		:=\tfrac{1}{\sqrt2}(|{\rm p}\rangle-|\bar{\rm p}\rangle),
		\]
		which satisfy $\hat C|\psi_\pm\rangle=\pm|\psi_\pm\rangle$.
		Thus
		\[
		V_+ := {\rm span}\{|{\psi_+}\rangle\},
		\qquad
		V_- := {\rm span}\{|{\psi_-}\rangle\}
		\]
		are the two one‑dimensional irreps of $\mathbb Z_{2}^{\;C}$.
		
		\item Isotypic decomposition.
		Including momentum and spin, the full one‑particle Hilbert space factorizes as
		\[
		\mathcal H_{\text{1‑p}}
		\;=\;
		\bigl(V_+\otimes\mathcal M_+\bigr)
		\;\oplus\;
		\bigl(V_-\otimes\mathcal M_-\bigr),
		\]
		where the multiplicity spaces
		$\mathcal M_\pm$ carry the usual external labels $\mathbf p,\sigma$.
		
		\item Block diagonality.
		For every $\hat C$‑invariant observable
		$O\in\mathcal O_{\rm phys}^{\;(\hat C)}$ one has, by the
		Symmetry Packaging Theorem,
		\[
		O
		\;=\;
		\mathbf 1_{V_{+}}\otimes O_{+}
		\;\oplus\;
		\mathbf 1_{V_{-}}\otimes O_{-},
		\qquad
		O_{\pm}\in\mathcal B(\mathcal M_{\pm}).
		\]
		Hence an operator that respects charge conjugation can neither
		flip the sign of the charge (that would map $V_\pm\to V_\mp$) nor
		distinguish particle from antiparticle inside a fixed irrep.
		
		\item Physical consequences.
		\begin{itemize}
			\item Confinement of charge sign.
			A $\hat C$‑respecting interaction may move the particle in
			momentum space or rotate its spin, but it always acts with the same scalar factor on $|\rm p\rangle$ and $|\bar{\rm p}\rangle$.
			The sign of the charge is therefore packaged.
			
			\item $\mathbb Z_{2}$ superselection.
			Because $V_{+}$ and $V_{-}$ belong to inequivalent irreps,
			coherent superpositions of even and odd states are forbidden by
			$\hat C$ superselection mirroring the usual charge parity rule in particle physics.
			
			\item Entangled particle-antiparticle pairs.
			For a two‑mode system ($A,B$) the symmetry
			$\hat C_A\otimes\hat C_B$ packages the Bell states
			$
			|\Psi_P^\pm\rangle,\;
			|\Phi_P^\pm\rangle
			$
			of Eq. (2.13) into $\pm1$ eigenvectors.
			Any $\hat C$‑respecting	detector collapses the whole pair.
		\end{itemize}
	\end{enumerate}

	Thus $\mathbb Z_{2}$ charge conjugation furnishes a minimal, discrete example	of internal symmetry packaging: the sign of every additive charge is confined inside a one‑particle or multi‑particle C‑eigenblock, fully analogous to the confinement of color in $\mathrm{SU}(3)$ but with Abelian group structure.
\end{example}

Thus, charge conjugation yields these internal, packaged entangled states within the net-zero charge sector.
They show how gauge or superselection constraints guide us to physically meaningful superpositions.

It should be emphasized that the packaged entangled states Eq.~(\ref{EQ:PsiPES}) induced by the charge conjugation $\hat{C}$ possess the strongest packaged entanglement because they entangled with every IQNs.

\subsubsection{Internal: Fermion Parity}
\label{SEC:Z2FermionParity}

Fermion parity is a special instance of a $\mathbb Z_2$ symmetry that often lives only projectively on the true symmetry group.
Equivalently, the physical symmetry is an extension
\[
1 \;\longrightarrow\;\mathbb Z_2=\{1,P_F\}
\;\longrightarrow\;
\widetilde G
\;\longrightarrow\;
G
\;\longrightarrow\;
1,
\]
where $P_F^2=1$ is the operator that counts fermion number mod 2.
In the language of $G$, states transform projectively:
\[
U(g)\,U(h)
\;=\;\omega(g,h)\;U(gh), 
\qquad
\omega(g,h)\in\{\pm1\},
\]
with $\omega$ a $\mathbb Z_2$-valued 2-cocycle (the Schur multiplier).

\paragraph{(1) Fermion Parity Operator \& Superselection.}

On any fermionic Fock space we have
\[
(-1)^{\hat N_f}
\;=\;
\prod_j
\bigl(1 - 2\,c_j^\dagger c_j\bigr),
\]
whose two one‐dimensional irreps (even/odd) split
\[
\mathcal H
\;=\;
\mathcal H_{\rm even}
\;\oplus\;
\mathcal H_{\rm odd}.
\]
The projectors
\[
P_{\rm even}
=\tfrac12\bigl(1 + (-1)^{\hat N_f}\bigr),
\quad
P_{\rm odd}
=\tfrac12\bigl(1 - (-1)^{\hat N_f}\bigr)
\]
obey $P^2=P$, and any physical (parity‐even) operator commutes with $(-1)^{\hat N_f}$, acting block‐diagonally on $\mathcal H_{\rm even}\oplus\mathcal H_{\rm odd}$.

\paragraph{(2) Twisted Projectors (Peter-Weyl).}

Viewed as a finite‐group symmetry $G=\{e,\;P_F\}\cong\mathbb Z_2$, the irreducible characters are
$\chi_{+}(e)=\chi_{-}(e)=1$, $\chi_{\pm}(P_F)=\pm1$.
Hence
\[
P^{(\epsilon)}
\;=\;
\frac1{2}\sum_{g\in G}\chi_{\epsilon}^*(g)\,U(g)
\;=\;
\tfrac12\bigl(1 + \epsilon\,(-1)^{\hat N_f}\bigr),
\qquad
\epsilon=\pm1,
\]
which recovers exactly $P_{\rm even/odd}$.

If the underlying symmetry were only $G$ but realized projectively, one would instead write
\[
\Pi_{\rm proj}
\;=\;
\frac1{|G|}\sum_{g\in G}\omega(g)\,U(g),
\]
with $\omega(g)\equiv\omega(g,g^{-1})$ the appropriate 2-cocycle insertion.

\paragraph{(3) Packaged Entanglement under Fermion Parity.}

Split two fermion modes $c_1,c_2$ into the basis
$\{|00\rangle,|01\rangle,|10\rangle,|11\rangle\}$.  
Total parity divides it into
\[
\mathcal H_{\rm even}
=\operatorname{span}\{|00\rangle,|11\rangle\},
\quad
\mathcal H_{\rm odd}
=\operatorname{span}\{|01\rangle,|10\rangle\}.
\]
A canonical even packaged entangled state is
\[
\ket{\Psi_{\rm even}}
\;=\;
\tfrac1{\sqrt2}\bigl(\ket{00}+\ket{11}\bigr),
\]
which lies entirely in $\mathcal H_{\rm even}$, has Schmidt rank 2, and cannot be split by any parity‐even operator.

\begin{example}[Majorana qubits]
	In topological superconductors, Majorana zero modes $\gamma_{1,2}$ realize a nonlocal fermionic mode $d=(\gamma_1+i\gamma_2)/2$.
	The two ground states $|0\rangle,|1\rangle=d^\dagger|0\rangle$ differ by parity.
	Fusion or tunneling operations preserve $(-1)^{\hat N_f}$, so quantum information encoded in the parity qubit is packaged and robust against parity-even perturbations.
\end{example}

\subsubsection{External: Parity $\hat{P}$}

In contrast to charge conjugation $\hat{C}$, both parity $\hat{P}$ and time-reversal $\hat{T}$ flip external two-level, but leave the IQNs unchanged.

\paragraph{(1) Parity $\hat{P}$.}
In QFT, parity $\hat{P}$ is an operator invented to perform a spatial mirror operation, i.e., inverts space: $\mathbf{x} \leftrightarrow -\mathbf{x}$.\cite{LeeYang1956,Wu1957}

At single-particle level, the parity $\hat{P}$ is a pure unitary operator.
In addition to the spatial coordinates $\mathbf{x}$, $\hat{P}$ also flips momentum $\mathbf{p} \leftrightarrow -\mathbf{p}$ because momentum is related to spatial coordinates.
Consider a single-particle with two-level momenta.
Let us label right-moving vs. left-moving single-particle states as $\lvert +\rangle$, $\lvert -\rangle$, we have
$$
\hat{P}\,\lvert +\rangle \;=\; \lvert -\rangle,
\quad
\hat{P}\,\lvert -\rangle \;=\; \lvert +\rangle.
$$
Because spin is similar to orbital angular momentum, $\hat{P}$ also flips (or preserves) spin states (depending on spin's vector or pseudovector nature).
In many formulations (especially in the nonrelativistic limit), parity leaves the spin unchanged (up to a phase).
But here we are considering a context where parity acts non-trivially on spin (a flip operation on an external two‑level system).
Let us consider a single-particle with spin-$\frac{1}{2}$ and
label the up vs. down states as $\lvert \uparrow\rangle$, $\lvert \downarrow\rangle$, we have
$$
\hat{P}\,\lvert \uparrow\rangle \;=\; \lvert \downarrow\rangle,
\quad
\hat{P}\,\lvert \downarrow\rangle \;=\; \lvert \uparrow\rangle.
$$

But $\hat{P}$ leaves IQNs (like electric charge or baryon number) unchanged, as they do not transform under spatial inversion.

\subsubsection{External: Time-Reversal $\hat{T}$}

\paragraph{(2) Time-Reversal $\hat{T}$.}
In QFT, time-reversal $\hat{T}$ is an operator invented to perform a time-reversal operation, i.e., inverts time: $t \leftrightarrow -t$.\cite{Wigner1932}

In addition to the direction (flow) of time $t$, $\hat{T}$ flips momentum $\mathbf{p} \leftrightarrow -\mathbf{p}$ because momentum is related to time $t$.

At single-particle level, the time-reversal $\hat{T}$ is an anti‑unitary operator.
This means that $\hat{T}$ includes both a unitary spin/coordinate transformation and a complex-conjugation operation.
There is subtlety for single spin-$\tfrac12$.  
On a single spin-$\tfrac12$ space, $\hat{T}$ satisfies $\hat{T}^2 = -\,\mathbf{I}$, which implies that we cannot have a single spin-$\tfrac12$ state that is strictly an eigenstate of $\hat{T}$ with a real eigenvalue.
This is the essence of Kramers degeneracy: every level is at least twofold degenerate.
However, we can define states that are $\pm1$ eigenstates under $\hat{T}_A \otimes \hat{T}_B$ (see two-particle systems).
This yields the same Bell structure as above, which can exhibit $\hat{T}^2 = +\,\mathbf{I}$ on certain subspaces.
In this case, it is consistent to say certain bipartite states have $\pm1$ eigenvalues under $\hat{T}$.
In this sense, we say that $\hat{T}$ also flips spin/angular momentum $\lvert \uparrow\rangle \leftrightarrow \lvert \downarrow\rangle$.

But $\hat{T}$ leaves IQNs (electric charge $Q$, baryon number $B$, etc.) invariant, since these do not directly depend on the sign of time.

\paragraph{(3) Two-Particle Momentum/Spin Bell States.} Let us now move to the two-particle systems, where we need to consider the tensor product $\hat{P}_A \otimes \hat{P}_B$ and $\hat{T}_A \otimes \hat{T}_B$.

Tensor product $\hat{P}_A \otimes \hat{P}_B$ acts like a reflection on each local subsystem (in spin or momentum space).
This produces Bell-type entangled states with well-defined symmetry under spatial inversion, which are $\pm1$ eigenstates under $\hat{P}_A \otimes \hat{P}_B$.
Similarly, we can also define Bell-type entangled states that are $\pm1$ eigenstates under tensor product $\hat{T}_A \otimes \hat{T}_B$.

\begin{itemize}
	\item \textbf{Momentum:}	
	For a two-particle system of two-level momentum basis, the product state basis is:
	$$
	\{
	\lvert + \rangle_A\,\lvert + \rangle_B, ~
	\lvert - \rangle_A\,\lvert - \rangle_B, ~
	\lvert + \rangle_A\,\lvert - \rangle_B, ~
	\lvert - \rangle_A\,\lvert + \rangle_B
	\}.
	$$
	Both parity $\hat{P}$ and time reversal $\hat{T}$ flip momentum.
	Applying Corollary \ref{COL:DiscreteSymmetryAndEntanglement}, we again obtain four Bell states of momentum in a two-level momentum basis:
	\begin{align}\label{MomentumBellStates}
		\begin{aligned}
			&\lvert \Psi_m^{\pm} \rangle_{AB}
			\;=\;
			\tfrac{1}{\sqrt{2}}
			\Bigl(
			\lvert +\rangle_A \,\lvert -\rangle_B
			\;\pm\;
			\lvert -\rangle_A \,\lvert +\rangle_B
			\Bigr), \\
			&\lvert \Phi_m^{\pm} \rangle_{AB}
			\;=\;
			\tfrac{1}{\sqrt{2}}
			\Bigl(
			\lvert +\rangle_A \,\lvert +\rangle_B
			\;\pm\;
			\lvert -\rangle_A \,\lvert -\rangle_B
			\Bigr),
		\end{aligned}
	\end{align}
	which satisfy
	\[
	\hat{P}_A \otimes \hat{P}_B \,\lvert \Psi_m^{\pm} \rangle_{AB} \;=\; \pm\,\lvert \Psi_m^{\pm} \rangle_{AB},
	\quad
	\hat{P}_A \otimes \hat{P}_B \,\lvert \Phi_m^{\pm} \rangle_{AB} \;=\; \pm\,\lvert \Phi_m^{\pm} \rangle_{AB}.
	\]
	and
	\[
	\hat{T}_A \otimes \hat{T}_B\,\lvert \Psi_m^{\pm} \rangle_{AB} \;=\; \pm\,\lvert \Psi_m^{\pm} \rangle_{AB},
	\quad
	\hat{T}_A \otimes \hat{T}_B\,\lvert \Phi_m^{\pm} \rangle_{AB} \;=\; \pm\, \lvert \Phi_m^{\pm} \rangle_{AB}.
	\]
	
	\item \textbf{Spin:}	
	For a two-particle system of spin-$\frac{1}{2}$, the product state basis for spin is:
	$$
	\{
	\lvert \uparrow   \rangle_A\,\lvert \uparrow   \rangle_B, ~
	\lvert \downarrow \rangle_A\,\lvert \downarrow \rangle_B, ~
	\lvert \uparrow   \rangle_A\,\lvert \downarrow \rangle_B, ~
	\lvert \downarrow \rangle_A\,\lvert \uparrow   \rangle_B
	\}.
	$$
	Both parity $\hat{P}$ and time reversal $\hat{T}$ flip spin.
	Applying Corollary \ref{COL:DiscreteSymmetryAndEntanglement}, we again obtain four Bell states of spin:
	\begin{align}\label{SpinBellStates}
		\begin{aligned}
			&\lvert \Psi_s^{\pm} \rangle_{AB}
			\;=\;
			\tfrac{1}{\sqrt{2}}
			\Bigl(
			\lvert \uparrow\rangle_A \,\lvert \downarrow\rangle_B
			\;\pm\;
			\lvert \downarrow\rangle_A \,\lvert \uparrow\rangle_B
			\Bigr), \\
			&\lvert \Phi_s^{\pm} \rangle_{AB}
			\;=\;
			\tfrac{1}{\sqrt{2}}
			\Bigl(
			\lvert \uparrow \rangle_A \,\lvert \uparrow \rangle_B
			\;\pm\;
			\lvert \downarrow \rangle_A \,\lvert \downarrow \rangle_B
			\Bigr),
		\end{aligned}
	\end{align}
	which satisfy	
	\[
	\hat{P}_A \otimes \hat{P}_B \,\lvert \Psi_s^{\pm} \rangle_{AB} \;=\; \pm\,\lvert \Psi_s^{\pm} \rangle_{AB},
	\quad
	\hat{P}_A \otimes \hat{P}_B \,\lvert \Phi_s^{\pm} \rangle_{AB} \;=\; \pm\,\lvert \Phi_s^{\pm} \rangle_{AB}.
	\]
	and
	\[
	\hat{T}_A \otimes \hat{T}_B\,\lvert \Psi_s^{\pm} \rangle_{AB} \;=\; \pm\,\lvert \Psi_s^{\pm} \rangle_{AB},
	\quad
	\hat{T}_A \otimes \hat{T}_B\,\lvert \Phi_s^{\pm} \rangle_{AB} \;=\; \pm\,\lvert \Phi_s^{\pm} \rangle_{AB},
	\]
\end{itemize}

From above discussion, we see that entangled states
\[
\lvert \Psi_m^{\pm} \rangle_{AB},~
\lvert \Phi_m^{\pm} \rangle_{AB},~
\lvert \Psi_s^{\pm} \rangle_{AB},~
\lvert \Phi_s^{\pm} \rangle_{AB}
\]
are the $\pm1$ eigenstates of both parity $\hat{P}$ and time reversal $\hat{T}$.

Especially, in either momentum or spin space, time reversal $\hat{T}$ classifies certain bipartite states into $\pm1$ symmetry sectors and therefore show a neat link between $\hat{T}$ invariance and Bell-type entanglement.
Even though $\hat{T}$ is anti‑unitary, for these two-subsystem states (with integer total spin or balanced momentum pairs), it is meaningful to speak of $\hat{T}$-eigenstates.

Finally, we conclude that $\mathbb{Z}_2$ symmetry operators exemplify the simplest instance of Theorem~\ref{THM:ExistenceOfGAssociatedPackagedStates}: whenever a flip transformation has a nontrivial irrep (dimension $2$), it forces a 1D submodule to vanish and yields entangled states in that subspace.
If the transformation flips IQNs (charge), then the resulting states are necessarily Bell-like packaged entangled states.

\subsubsection{Combined $\hat{P}\hat{T}$, $\hat{C}\hat{P}$, $\hat{C}\hat{T}$, and $\hat{C}\hat{P}\hat{T}$}

We have seen how the discrete symmetries $\hat{C}$ (charge conjugation), $\hat{P}$ (parity), and $\hat{T}$ (time reversal) each generates entangled states (Bell-like states) in the bipartite systems.
Now we would like to go a step further to combine these operators into products such as $\hat{P}\hat{T}$, $\hat{C}\hat{P}$, $\hat{C}\hat{T}$, and $\hat{C}\hat{P}\hat{T}$.\cite{Luders1957,Hooft1976,Banks1991,KimCarosi2010}
The combined operators $\hat{C}\hat{P}$, $\hat{C}\hat{T}$, and $\hat{C}\hat{P}\hat{T}$ act on both external (spin, momentum) and internal (charge) quantum numbers simultaneously.
We then obtains hybrid-packaged entangled states that cannot be factorized into a pure spin part times a pure charge part.

The hybrid-packaged entangled states are especially relevant for analyzing CP or CPT violation in meson systems (e.g., kaons $\lvert K^0\rangle,\lvert\bar{K}^0\rangle$), where one forms CP eigenstates with definite entanglement structure.
For example, we can use the hybrid-packaged entangled states to
classify states by $\pm1$ (or sometimes complex phases) under each discrete transformation, 
analyzing discrete-symmetry tests or violations (e.g., CP violation in neutral mesons),  
and enforcing superselection rules (e.g., total charge) while still forming valid entangled superpositions.

Below we will discuss each of these combined symmetries and show how they produce hybridized entangled states.

\paragraph{(1) $\hat{P}\hat{T}$ Symmetry.}
From earlier discussions, we know that $\hat{P}$ inverts space ($\mathbf{x} \leftrightarrow -\mathbf{x}$, $\mathbf{p} \leftrightarrow -\mathbf{p}$), and $\hat{T}$ reverses time ($t \leftrightarrow -t$, again $\mathbf{p} \leftrightarrow -\mathbf{p}$, flips spin, etc.).
Acting twice on external two-level sometimes leads to a trivial net action on certain subspaces (e.g., $\hat{P}\hat{T} \approx \hat{I}$ on spin-0 or momentum-symmetric states).
Symbolically:
\[
\hat{P}\,\hat{T}\,\lvert \Psi_s^{\pm} \rangle_{AB} 
\;=\; \pm \,\hat{P}\,\lvert \Psi_s^{\pm} \rangle_{AB}
\;=\; \lvert \Psi_s^{\pm} \rangle_{AB},
\quad
\hat{P}\,\hat{T}\,\lvert \Phi_s^{\pm} \rangle_{AB}
\;=\;\pm \,\hat{P}\,\lvert \Phi_s^{\pm} \rangle_{AB}
\;=\;\lvert \Phi_s^{\pm} \rangle_{AB},
\]
and similarly for momentum Bell states $\lvert \Psi_m^{\pm} \rangle_{AB},\lvert \Phi_m^\pm\rangle_{AB}$.
Thus, certain two-particle states become invariant (up to signs) under $\hat{P}\hat{T}$.
In some frameworks (particularly integer-spin subspaces), $\hat{P}\hat{T}$ can act as the identity.

\paragraph{(2) $\hat{C}\hat{P}$ and $\hat{C}\hat{T}$ Symmetry.}
$\hat{C}\hat{P}$ symmetry flips both IQNs and spatial coordinates (momenta/spins).
Similarly, $\hat{C}\hat{T}$ symmetry also flip both IQNs and spatial coordinates (momenta/spins).
As a result, these two combined symmetric operations mix internal (packaged) and external (un-packaged) two-level systems, therefore, allow one to define states with definite eigenvalues under $\hat{C}\hat{P}$ or $\hat{C}\hat{T}$. 
Here we will focus on $\hat{C}\hat{P}$ for illustration purpose.
The same reasoning applies to $\hat{C}\hat{T}$ by replacing the parity $\hat{P}$ with time reversal $\hat{T}$.

For a two-particle system, if we consider the IQNs of $\lvert P\rangle, \lvert\bar{P}\rangle$, external spin-$\tfrac12$ states $\lvert \uparrow\rangle, \lvert\downarrow\rangle$, and external momentum states $\lvert+\rangle, \lvert-\rangle$, then we obtain hybrid-packaged entangled states:
\begin{align}\label{EQ:HybridPES}
	\begin{aligned}
		&\lvert \Psi_h^{\pm} \rangle_{AB}
		\;=\;
		\frac{1}{\sqrt{2}}
		\Bigl(
		\lvert P, \uparrow, +\rangle_A \,\lvert \bar{P}, \downarrow, -\rangle_B
		\;\pm\;
		\lvert \bar{P}, \downarrow, -\rangle_A \,\lvert P, \uparrow, +\rangle_B
		\Bigr), \\
		&\lvert \Phi_h^{\pm} \rangle_{AB}
		\;=\;
		\frac{1}{\sqrt{2}}
		\Bigl(
		\lvert P, \uparrow, +\rangle_A \,\lvert \bar{P}, \uparrow, +\rangle_B
		\;\pm\;
		\lvert \bar{P}, \downarrow, -\rangle_A \,\lvert P, \downarrow, -\rangle_B
		\Bigr).
	\end{aligned}
\end{align}
These are the eigenstates of $(\hat{C}\hat{P})_A \otimes (\hat{C}\hat{P})_B$ and $(\hat{C}\hat{T})_A \otimes (\hat{C}\hat{T})_B$ with eigenvalue $\pm 1$, i.e.,
\[
(\hat{C}\hat{P})_A \otimes (\hat{C}\hat{P})_B \,\lvert \Psi_h^{\pm} \rangle_{AB}
\;=\;
\pm\,\lvert \Psi_h^{\pm} \rangle_{AB},
\quad
(\hat{C}\hat{P})_A \otimes (\hat{C}\hat{P})_B \,\lvert \Phi_h^{\pm} \rangle_{AB}
\;=\;
\pm\,\lvert \Phi_h^{\pm} \rangle_{AB}
\]
and
\[
(\hat{C}\hat{T})_A \otimes (\hat{C}\hat{T})_B \,\lvert \Psi_h^{\pm} \rangle_{AB}
\;=\;
\pm\,\lvert \Psi_h^{\pm} \rangle_{AB},
\quad
(\hat{C}\hat{T})_A \otimes (\hat{C}\hat{T})_B \,\lvert \Phi_h^{\pm} \rangle_{AB}
\;=\;
\pm\,\lvert \Phi_h^{\pm} \rangle_{AB}.
\]
These show that $\hat{C}\hat{P}$ and $\hat{C}\hat{T}$ mixed IQNs, external spin, and external momentum in hybridized packaged entangled states.

\begin{example}[Neutral-Meson Systems ($K^0$, $B^0$, etc.)]
	An important application arises in neutral mesons, such as kaons $\lvert K^0, \uparrow\rangle$ vs.\ $\lvert \bar{K}^0,\downarrow\rangle$. In analyzing CP violation or correlated decays, we often looks at eigenstates of $\hat{C}\hat{P}$.
	\begin{enumerate}
		\item For a single meson:
		\[
		\hat{C}\hat{P} \,\lvert K^0,\uparrow\rangle = e^{i\alpha}\,\lvert \bar{K}^0,\downarrow\rangle,
		\quad
		\hat{C}\hat{P} \,\lvert \bar{K}^0,\downarrow\rangle = e^{-i\alpha}\,\lvert K^0,\uparrow\rangle,
		\]
		This leads to the well-known CP-eigenstates
		\[
		\lvert K_{1,2}\rangle = \tfrac{1}{\sqrt{2}}
		\bigl(\lvert K^0,\uparrow\rangle \pm \lvert \bar{K}^0,\downarrow\rangle\bigr).
		\]
		
		\item For two mesons (e.g., in $\phi \to K^0\bar{K}^0$ decays), we can build correlated states that are eigenstates of $(\hat{C}\hat{P})_A \otimes (\hat{C}\hat{P})_B$.
		In that sense, they become packaged or hybrid entangled states in the internal (strangeness) degree of freedom and can exhibit correlated decays that test CP symmetry.
	\end{enumerate}
	
\end{example}

\paragraph{(3) $\hat{C}\hat{P}\hat{T}$ Symmetry.}
In quantum field theory, CPT is a fundamental combined symmetry guaranteed by the CPT theorem: any Lorentz-invariant local QFT must be invariant under $\hat{C}\hat{P}\hat{T}$.
In this case, one indeed consider the particles and antiparticles as an integrated system, which is invariant under the flip of charge $\hat{C}: Q \mapsto -Q$, spatial coordinates $\hat{P}: \mathbf{x} \mapsto -\mathbf{x}$, and time $\hat{T}: t \mapsto -t$.

However, here we are dealing with packaged entangled states, $\hat{P}\hat{T}$ may act trivially (or as an overall reflection) on certain subspaces.
Consequently, $\hat{C}\hat{P}\hat{T}$ effectively reduces to $\hat{C}$. Regardless, for analyzing discrete-symmetry properties in meson-antimeson pairs, or more generally for exploring whether a process respects or breaks CPT, one can construct states with well-defined $\hat{C}\hat{P}\hat{T}$ transformation properties. If time reversal $\hat{T}$ is anti‑unitary, extra care in defining eigenstates is needed, but in multiparticle systems it can still be consistent to speak of $\pm 1$ CPT eigenvalues in certain subspaces.

Thus, the combined symmetries ($\hat{P}\hat{T}$, $\hat{C}\hat{P}$, $\hat{C}\hat{T}$, $\hat{C}\hat{P}\hat{T}$) yield a rich structure of hybrid entangled states, which can be systematically constructed by how each operator acts on the internal and external DOFs.
This construction underpins many discrete-symmetry tests in particle physics, such as detecting CP violation in neutral-meson decays, or examining whether CPT invariance might be violated in exotic scenarios.

\subsection{Dihedral Group}

In this subsection, we show that the symmetry packaging we have built also works well for dihedral group $D_n$.
Specifically:

\paragraph{(1) The dihedral group and its irreps.}

The dihedral group $D_n$ of order $2n$ is generated by a rotation $r$ of order $n$ and a reflection $s$ with
$$
r^n = e,\quad s^2=e,\quad s\,r\,s = r^{-1}.
$$
Its irreps over $\mathbb C$ are:
\begin{itemize}
	\item Four 1‑dimensional irreps (for all $n\ge2$) when $n$ is even, or two when $n$ is odd, characterized by how $r$ and $s$ act by $\pm1$.
	
	\item $\lfloor (n-1)/2\rfloor$ two‑dimensional irreps, carried by the sin/cos pairs under rotation, with $s$ swapping them.
\end{itemize}
Denote the set of irreps by $\widehat{D_n}=\{\chi_1,\dots,\chi_{k},\rho_1,\dots,\rho_{m}\}$ with $\chi_i$ one‑dimensional and $\rho_j$ two‑dimensional.

\paragraph{(2) Packaging spaces for $D_n$.}

By Theorem I (finite/compact packaging) the full Hilbert space carrying a unitary $D_n$-action decomposes as
$$
\mathcal H
\;\cong\;
\Bigl(\bigoplus_{i=1}^{k} V_{\chi_i}\otimes M_{\chi_i}\Bigr)
\;\oplus\;
\Bigl(\bigoplus_{j=1}^{m} V_{\rho_j}\otimes M_{\rho_j}\Bigr),
$$
where each $V_{\chi_i}\cong\mathbb C$ and $V_{\rho_j}\cong\mathbb C^2$.
These are the internal packages.
If you also have an external dihedral action (say on orbital labels), you’d similarly decompose by its irreps into external packages and then form hybrid packages $V_\lambda\otimes W_\mu$.

\paragraph{(3) Projectors and Packaged Subspaces.}

The projector onto the $\lambda$-irrep is
$$
P_\lambda
\;=\;
\frac{d_\lambda}{|D_n|}\sum_{g\in D_n}\chi_\lambda^*(g)\,U(g),
$$
where $d_\lambda$ is the irrep’s dimension.
If only a subset $\Gamma_{\rm allowed}\subset\widehat{D_n}$ (e.g., the trivial 1‑dimensional irrep, or a particular 2‑dimensional doublet) is allowed by physics, then
$$
P_{\rm hyb}
=\sum_{\lambda\in\Gamma_{\rm allowed}}P_\lambda,
\qquad
\mathcal H_{\rm hyb}=\operatorname{Im} P_{\rm hyb}.
$$

\paragraph{(4) Packaged Entanglement in a $D_n$‑Symmetric System.}

Everything from our previous section applies:
\begin{itemize}
	\item Packaged Schmidt Rank:
	decompose $\mathcal H_{\rm pkg}\subset\mathcal H_A\otimes\mathcal H_B$ and count nonzero Schmidt coefficients.
	
	\item Packaged Entropy:
	form $\rho=P_{\rm hyb}|\Psi\rangle\langle\Psi|P_{\rm hyb}$, trace out $B$, and compute $- \operatorname{Tr}(\rho_A\log\rho_A)$.
\end{itemize}

\begin{example}[Internal $D_4$ packaging: Square‑Planar $d$‑Orbitals]
	A classic realization of a $D_4$-action on IQNs is the splitting of the five $d$‑orbitals of a transition‐metal ion in a square‑planar ligand field (point group $D_{4h}$, dihedral part $D_4$).
	\begin{itemize}
		\item Symmetry group:
		the square‐planar ligand field has (at least) the point‐group
		$$
		D_{4h}\;\cong\;C_{4v}\times C_i
		$$
		acting on the five $d$‐orbitals
		\begin{table}[h]
			\centering
			\caption{$D_4$ Symmetry packaging on square‑planar $d$‑orbitals}
			\begin{tabular}[hbt!]{|p{3cm}|p{3cm}|p{3cm}|}
				\hline\hline
				Orbital           & Irrep of $D_4$  & Dimension \\
				\hline
				$d_{z^2}$         & $A_1$ (trivial) & 1         \\
				$d_{x^2-y^2}$     & $B_1$           & 1         \\
				$d_{xy}$          & $B_2$           & 1         \\
				$(d_{xz},d_{yz})$ & $E$ (doublet)   & 2         \\
				\hline
			\end{tabular}
		\end{table}
		
		\item Irreps:
		Four one‑dimensional irreps $A_1,B_1,B_2,B_2'$ and the two‑dimensional $E$.
		
		\item Packaging:
		Packaging then says:
		\begin{itemize}
			\item The internal subspace $V_{A_1}$ (the $d_{z^2}$ orbital) is a one‑dimensional package.
			
			\item $V_E$ (the $\{d_{xz},d_{yz}\}$ pair) is a two‑dimensional package.
		\end{itemize}
		Any ligand or electron‐transfer operator (crystal‐field Hamiltonian or perturbation) commuting with the full $D_4$ symmetry must act block‑diagonally on these orbital packages (preserve each of these four subspaces):	
		$$
		\mathbf1_{V_{A_1}}\otimes M_{A_1}
		\;\oplus\;
		\mathbf1_{V_{B_1}}\otimes M_{B_1}
		\;\oplus\;
		\mathbf1_{V_{E}}\otimes M_{E}
		\;\oplus\;\cdots
		$$
		No symmetry‐respecting perturbation can split, say, $d_{xz}$ from $d_{yz}$ without also mixing with the entire 2‑dimensional $E$-block.
		In other words, the 2‐dimensional $E$ block cannot be split (you cannot, by a $D_{4h}$‐symmetric perturbation, isolate $d_{xz}$ without dragging $d_{yz}$ along).
	\end{itemize}	
\end{example}

\section{Symmetry Packaging under Compact Groups}
\label{SEC:SymmetryPackagingOnCompactLieGroups}

Compact groups such as $\mathrm{U}(1)$ and $\mathrm{SU}(N)$ possess a rich representation theory \cite{Yang1954,Gross1973,Weyl1929,GellMann1961}.
In gauge theories, the IQNs of single-particle excitations are packaged as inseparable units by virtue of their irreducible transformation properties.
As a result, multiparticle states inherit a structure that reflects the underlying representation theory of the gauge group $G$ \cite{Utiyama1956,Cartan1913}.

In this section, we briefly review the main results of the representation-theoretic decomposition and explain how projection operators play a leading role in identifying the physically relevant (e.g., color-singlet) subspaces that underlie phenomena such as confinement and superselection.
We treat Abelian and non-Abelian cases separately.

\subsection{Abelian Group: $\mathrm{U}(1)$}

For the Abelian gauge group $\mathrm{U}(1)$, every irrep is one‐dimensional and can be labeled by a continuous or discrete charge $q$ (integer in many physical settings) \cite{Weyl1929}.
Specifically, one has
\[
\rho_q\bigl(e^{i\theta}\bigr) \;=\; e^{\,i\,q\,\theta}.
\]
In standard QED, for instance, $q$ takes integer values (in units of the elementary charge).
By the Peter‐Weyl theorem, the full Hilbert space decomposes as
\begin{equation}\label{EQ:U(1)Decomposition}
	\mathcal{H}
	\;\cong\;
	\bigoplus_{q \,\in\, \mathbb{Z}} \mathcal{H}_q,
\end{equation}
where $\mathcal{H}_q$ is the subspace of states carrying net $\mathrm{U}(1)$ charge $q$.

\paragraph{(1) Single‐Particle Representations.}
Suppose a single‐particle field operator $\hat{\psi}(x)$ has charge $q$. Under a local gauge transformation $g(x) = e^{i\,\alpha(x)}$, it transforms by
\[
\hat{\psi}(x) \;\mapsto\;
e^{\,i\,q\,\alpha(x)}\,\hat{\psi}(x).
\]
Because each irrep of $\mathrm{U}(1)$ is one‐dimensional, the phase $e^{\,i\,q\,\alpha(x)}$ is an inseparable block.
One cannot factor out or split the charge $q$ among multiple sub-pieces within the same operator, e.g., you cannot split an electron’s charge $-1$ into fractional parts.
Formally, for the single‐particle creation operator $\hat{a}^\dagger_{q}$, one writes:
\[
U\bigl(e^{\,i\theta}\bigr)\,\hat{a}^\dagger_{q}\,U\bigl(e^{\,i\theta}\bigr)^{-1}
\;=\;
e^{\,i\,q\,\theta}\,\hat{a}^\dagger_{q},
\]
illustrating that $\hat{a}^\dagger_{q}$ itself furnishes a 1D representation with charge $q$.

\paragraph{(2) Multi‐Particle States and Additive Charges.}
For multi‐particle excitations, let us first consider the product of field operators:
\[
\hat{\psi}^\dagger_{q_1}(x_1)\;\hat{\psi}^\dagger_{q_2}(x_2)
\;\cdots\;\hat{\psi}^\dagger_{q_n}(x_n),
\]
where each operator $\hat{\psi}^\dagger_{q_i}(x_i)$ transforms as
\[
\hat{\psi}^\dagger_{q_i}(x_i) \mapsto e^{i q_i \alpha(x_i)}\, \hat{\psi}^\dagger_{q_i}(x_i)
\]
and picks up a phase $e^{\,i\,q_i\,\alpha(x_i)}$ under $g(x)=e^{\,i\,\alpha(x)}$.

Since $\mathrm{U}(1)$ is Abelian, the total gauge transformation only leads to a product of phases
\[
\prod_{i=1}^n
e^{\,i\,q_i\,\alpha(x_i)}
\;=\;
e^{\,i\,\bigl(q_1+q_2+\cdots+q_n\bigr)\,\alpha}.
\]
So the net charge of the multi‐particle state is
\[
Q_\text{tot} = q_1 + q_2 + \cdots + q_n.
\]

We see that, in a non‐confining theory like QED, states with nonzero net charge can in principle appear as free asymptotic states.
However, the superselection rules prohibit the mixing of states from different charge sectors $\mathcal{H}_q$ as shown in Eq.(\ref{EQ:U(1)Decomposition}).

\paragraph{(3) Projecting onto a Definite Charge Sector.}

$\mathrm{U}(1)$ is a continuous Lie group that acts via a local phase factor $e^{i\alpha(\mathbf x)}$ at each point.
Physical states satisfy the Gauss law $\hat G(\mathbf x)\,\ket{\mathrm{phys}} = 0$, where $\hat G(\mathbf x)=\nabla\!\cdot\!\mathbf E - \hat\rho(\mathbf x)$ is the Gauss operator.
The gauge projector can be written as
$$
\Pi_{\rm phys}
= \int\!\mathcal D\alpha \,
\exp\!\biggl[i\!\int\!{\rm d}^3x\,\alpha(\mathbf x)\,\hat G(\mathbf x)\biggr],
$$
which projects onto locally neutral (gauge-singlet) states and filters out any configuration with unscreened charge.

Because $\mathrm{U}(1)$ is a continuous group, one can define the packaged projection operator $P_q$ using the normalized Haar measure on $\mathrm{U}(1)$, i.e.,
\[
P_q
\;=\;
\int_{0}^{2\pi}
\frac{d\theta}{2\pi}
\;e^{-\,i\,q\,\theta}
\;U\!\bigl(e^{\,i\,\theta}\bigr),
\]
where $U\bigl(e^{\,i\,\theta}\bigr)$ is the unitary operator that implements the global gauge transformation $e^{\,i\,\theta}\in \mathrm{U}(1)$.
From the physical meaning of $P_q$, we should have
\[
P_q^2=P_q 
\text{  and  }
P_q\,\mathcal{H}
\;=\;\mathcal{H}_q.
\]
Hence, $\forall ~ \vert \psi\rangle \in \mathcal{H}$, the component $\vert \psi_q\rangle = P_q\,\vert \psi\rangle$ completely lies in the charge‐$q$ sector.
This superselection structure ensures that a physical state must carry a definite net charge. Cross‐sector interference is disallowed.

Thus, for an Abelian gauge group $\mathrm{U}(1)$, each single-particle creation operator $\hat a^\dagger\_q$ furnishes a one-dimensional irrep labeled by charge $q$, which cannot be further decomposed.
Multi‐particle states acquire net charge $Q_\text{tot}= q_1+\cdots +q_n$ in an additive fashion.
The superselection rule, enforced via the projection operator $P_q$, forbids coherent superpositions of different net charges $q\neq q'$. 
All IQNs relevant to $\mathrm{U}(1)$ remain packaged and cannot be split among multiple excitations.
This is precisely the packaging principle as applied to an Abelian local gauge group.

\begin{example}[Two-particle example in $\mathrm{U}(1)$]	
	Let $P^\dagger(\mathbf x)$ create a charge $+q$ excitation and
	$\bar P^\dagger(\mathbf x)$ create $-q$.
	A bare two-particle Fock state is, for instance,
	$$
	|\mathbf x,\mathbf y;\,+q,-q\rangle\;=\;
	P^\dagger(\mathbf x)\,\bar P^\dagger(\mathbf y)\,|0\rangle .
	$$
	The gauge projector is
	$$
	\Pi_{\rm phys}\;=\;
	\prod_{\mathbf z}\int_{0}^{2\pi}\!\!\frac{{\rm d}\alpha(\mathbf z)}{2\pi}
	\;e^{\,i\!\int{\rm d}^3z\,\alpha(\mathbf z)\hat G(\mathbf z)} ,
	\qquad
	\hat G(\mathbf z)=\nabla\!\cdot\!\mathbf E(\mathbf z)-\hat\rho(\mathbf z).
	$$
	Because $\hat\rho(\mathbf z)=q\bigl[\delta^{3}(\mathbf z-\mathbf x)-\delta^{3}(\mathbf z-\mathbf y)\bigr]$ on the state above, the phase picked up under a local gauge transformation
	$e^{i\alpha(\mathbf z)}$ is	
	$$
	\exp \Bigl\{ i q \bigl[\alpha(\mathbf x)-\alpha(\mathbf y)\bigr] \Bigr\}.
	$$
	Integrating over all $\alpha(\mathbf z)$ kills the state unless we dress it with a Wilson line that carries the opposite phase factor and makes the composite locally neutral:
	$$
	\Bigl[\bar P^\dagger(\mathbf y)\;
	\exp\!\Bigl(+iq\!\int_{\mathbf y}^{\mathbf x}\!A_i\,{\rm d}x^{i}\Bigr)
	P^\dagger(\mathbf x)\Bigr]\,|0\rangle
	\quad\longmapsto\quad
	\text{survives under } \Pi_{\rm phys}.
	$$
	Thus,
	\begin{itemize}
		\item $|P P\rangle$ ($+2q$) and $|\bar P\bar P\rangle$ ($-2q$):
		annihilated by $\Pi_{\rm phys}$ (globally and locally charged).
		
		\item $|P\bar P\rangle$ without the Wilson line:
		also annihilated (two opposite charges sitting at different points violate local Gauss law).
		
		\item Gauge-dressed meson $P^\dagger W\bar P^\dagger|0\rangle$:
		physical color singlet of $\mathrm{U}(1)$.
	\end{itemize}
\end{example}

\begin{example}[Internal: $\mathrm{U}(1)$ electric charge]
	\label{EX:U1Packaging}
	
	\leavevmode
	\begin{itemize}
		\item Symmetry group.
		$G=\mathrm{U}(1)$ with unitary action $U(\theta)=e^{\,i\theta Q}$, where $Q\in\mathbb Z$ is the integer‑valued electric charge operator.
		
		\item Hilbert space.
		The Fock space decomposes into one‑dimensional $\mathrm{U}(1)$ irreps
		$
		V_Q \cong \mathbb C
		$:
		\[
		\mathcal H
		\;=\;
		\bigoplus_{Q\in\mathbb Z}
		\bigl(V_Q \otimes\mathcal M_Q\bigr)
		=
		\bigoplus_{Q\in\mathbb Z}
		\mathcal H_Q,
		\]
		where $\mathcal M_Q$ carries momentum, spin, flavour, etc.
		
		\item Packaging statement.
		For every gauge‑invariant operator
		$O\in\mathcal O_{\mathrm{phys}}\subseteq U(\mathrm{U}(1))'$,
		\[
		O
		\;=\;
		\bigoplus_{Q\in\mathbb Z}
		\bigl(\mathbf 1_{V_Q}\otimes O_Q\bigr),
		\qquad
		O_Q\in\mathcal B(\mathcal M_Q).
		\]
		Hence the irrep factor $V_Q$ is a $\mathrm{U}(1)$‑packaged degree of freedom.
		No physical process can act on half the charge or transform a
		$Q=+1$ state into $Q=-1$ while respecting gauge-invariance.
		
		\item Physical consequences.
		\begin{itemize}
			\item Superselection.
			States with different $Q$ cannot be	coherently superposed,
			The global phase of each charge sector is unobservable.
			
			\item Charge conservation.
			Local creation/annihilation	operators shift the sector according to
			$Q\!\to\!Q\pm1$ but never break the package.
		\end{itemize}
	\end{itemize}		
	This is the simplest instance of degree‑of‑freedom confinement:
	the label $Q$ is locked and invisible to any $\mathrm{U}(1)$‑respecting detector.
\end{example}

\paragraph{(4) Comparison: Abelian $\mathrm{U}(1)$ vs. Discrete $\mathbb{Z}_N$.}
In both $\mathrm{U}(1)$ gauge group and $\mathbb{Z}_N$ gauge group, irreps are 1‐dimensional, so the group action is a pure phase.
We now compare them in Table \ref{TAB:U1VsZN}:

\begin{table}[!h]
	\centering
	\caption{Abelian $\mathrm{U}(1)$ vs. Discrete $\mathbb{Z}_N$}
	\label{TAB:U1VsZN}
	\begin{tabular}{p{4cm}|p{5cm}|p{6cm}}
		\toprule
		Feature  & Abelian $\mathrm{U}(1)$ & Discrete $\mathbb{Z}_N$ \\
		\midrule
		Irreducible representations & 1-dimensional, labeled by integer charge $q\in\mathbb{Z}$ & 1-dimensional, labeled by residue $a\in\{0,1,\dots,N-1\}$ \\
		\hline
		Group action on single-particle states & Multiplication by phase $e^{i q\theta}$ & Multiplication by root of unity $\omega^a$, $\omega=e^{2\pi i/N}$ \\
		\hline
		Packaging mechanism & Charge $q$ cannot be split among separate operators & Residue $a$ cannot be fractionated, each excitation is a discrete block \\
		\hline
		Free vs. confined charges & Non-confining: individual charges $q\neq0$ are freely observable (electrons, protons) & Confining: only net flux $a_{\rm tot}\equiv0\pmod N$ propagates freely, nonzero $a$ must bind into composites summing to zero mod $N$ \\
		\hline
		Superselection & States with different $q$ lie in distinct superselection sectors & States with different $a$ lie in distinct superselection sectors \\
		\hline
		Multi-particle total charge & Additive: $q_{\rm tot}=\sum_i q_i\in\mathbb Z$ & Additive mod $N$: $a_{\rm tot}=\sum_i a_i\pmod N$  \\
		\bottomrule
	\end{tabular}
\end{table}

This highlights how both gauge groups enforce atomic (indivisible) charge labels, yet differ in spectrum (infinite $\mathbb Z$ vs. finite $\mathbb Z_N$) and in which total‐charge sectors can appear as isolated, deconfined excitations.

\subsection{Non-Abelian Group $\mathrm{SU}(N)$}

In non-Abelian gauge theories \cite{Utiyama1956,Cartan1913,Hooft1978}, the packaging principle implies that each single-particle operator carries an indivisible set of gauge quantum numbers.
For $ \mathrm{SU}(N) $, single-particle operators transform in irreps labeled by Young diagrams (with up to $N-1$ rows).
Multi-particle states are constructed as tensor products of these irreps and decompose into a direct sum of irreps via Peter-Weyl projection.
In a local gauge theory (e.g., QCD), only the gauge-singlet subspace (the $\mathbf{1}$) is physically observed, whereas in global symmetries (such as isospin or flavor), non-singlet multiplets may appear externally.
We now describe the general structure before specializing to $\mathrm{SU}(2)$ and $\mathrm{SU}(3)$.

\begin{remark}[Global vs. Covering Groups]
	If one replaces $\mathrm{SU}(N)$ by $\mathrm{PSU}(N)=\mathrm{SU}(N)/\mathbb Z_N$ or by $\mathrm{U}(N)$, then the same decomposition still holds, but the allowed irreps and the package labels shift.
\end{remark}

\paragraph{(1) Single-particle irreps \cite{PeskinSchroeder,WeinbergBook}.}  
Let $\hat{a}^\dagger_{i}$ denote a creation operator that transforms in an irrep $ \mathbf{R} $ of $\mathrm{SU}(N)$.
Its transformation under $g \in \mathrm{SU}(N)$ is given by
\[
U(g)\,\hat{a}^\dagger_{i}\,U(g)^{-1} = \sum_{j=1}^{d_\mathbf{R}} D^{(\mathbf{R})}_{ji}(g)\,\hat{a}^\dagger_{j},
\]
where $D^{(\mathbf{R})}(g)$ is a $d_\mathbf{R}\times d_\mathbf{R}$ matrix.
No proper invariant subspace exists within an irrep.
Thus, the IQNs (e.g., color or flavor) are packaged as a single block.

\paragraph{(2) Multi-particle tensor products and decomposition.}  
For a system of $n$ excitations with each particle transforming in an irrep $ \mathbf{R}_{\alpha_i} $, the multiparticle state belongs to
\[
\mathbf{R}_{\alpha_1} \otimes \mathbf{R}_{\alpha_2} \otimes \cdots \otimes \mathbf{R}_{\alpha_n}.
\]
Using Peter-Weyl projection (or Young diagram) methods, this product decomposes into a direct sum of irreps:
\[
\mathbf{R}_{\alpha_1} \otimes \mathbf{R}_{\alpha_2} \otimes \cdots \otimes \mathbf{R}_{\alpha_n} \cong \bigoplus_{\beta} N_{\beta}\,\mathbf{R}_{\beta}.
\]
For local gauge groups (e.g., color $\mathrm{SU}(3)$), physical states are restricted to the singlet (or net-neutral) subspace.
For global symmetries (e.g., isospin), non-singlet multiplets may be observed as free states.

\paragraph{(3) Module isomorphism.}  
Once the Hilbert space is decomposed as
\[
\mathcal{H} \cong \bigoplus_{\beta} \mathcal{H}_{\beta},
\]
the theorem guarantees an isomorphism of $\mathrm{SU}(N)$-modules:
\[
\Phi_{\beta}: \mathcal{H}_{\beta} \to V_{\beta},
\]
where $V_{\beta}$ is the abstract representation space for the irrep $\mathbf{R}_\beta$.
This means that for all $g \in \mathrm{SU}(N)$ and $\psi \in \mathcal{H}_{\beta}$,
\[
\Phi_{\beta}\bigl(U(g)\psi\bigr) = \rho_{\beta}(g)\,\Phi_{\beta}(\psi),
\]
with $\rho_{\beta}(g)$ the matrix representation of $g$ in $V_{\beta}$.

\begin{example}[Global $\mathrm{SU}(2)$ Example: Isospin in Quantum Mechanics]
	Even without any gauge constraint, a purely global $\mathrm{SU}(2)$ doublet (e.g., proton-neutron isospin) packs into irreducible multiplets by exactly the same Peter-Weyl projector machinery:
	\[
	V_{\tfrac12}\otimes V_{\tfrac12}
	\;\cong\;
	V_1\oplus V_0,
	\]
	and the two-nucleon projector
	\[
	P_{\,J}
	= (2J+1)\int_{\mathrm{SU}(2)}dg\;\chi^{(J)}(g)\,U(g)\otimes U(g)
	\]
	selects the triplet ($J=1$) or singlet ($J=0$) subspace.
	No Gauss law is needed.
	This is pure global packaging.
\end{example}

\paragraph{(4) $\mathrm{SU}(N)$ gauge group.}
	
For the general $\mathrm{SU}(N)$ gauge group,	
Take a quark $q^{a}$ in the fundamental $\mathbf N$ ($a=1,\dots,N$) and an antiquark $\bar q_{b}$ in the $\overline{\mathbf N}$.

\begin{enumerate}
	\item The product decomposes as	
	$$
	\mathbf N\;\otimes\;\overline{\mathbf N}
	\;=\;
	\mathbf 1\;\oplus\;\mathbf{Adj},\qquad
	\dim\mathbf{Adj}=N^{2}-1.
	$$	

	\item A color octet (adjoint) state can be written as
	$$
	\Psi^A_{\rm adj} = \sqrt2\,T^A{}_{a}{}^{b}\,q^a\bar q_b\ket0, ~ A=1,\dots,N^2-1,
	$$
	where ${T^A}$ are the generators normalized by $\operatorname{Tr} [T^A T^B] = \tfrac12 \delta^{AB}$.
	
	\item Packaging.
	The quark and antiquark are individually packaged ($\mathbf N$ and $\overline{\mathbf N}$).
	The superposition lives in a fixed color sector (Adj), so it is a perfectly valid packaged-entangled state.
	
	\item Projector.
	$\Pi_{\text{phys}}$ averages over all local $\mathrm{SU}(N)$ transformations and keeps only the singlet component.
	Because $\ket{\Psi_{\text{adj}}}$ transforms in the adjoint irrep, the projector kills it:	
	$$
	\Pi_{\text{phys}}\ket{\Psi_{\text{adj}}}=0.
	$$	
	Thus the state is non-physical.
\end{enumerate}

\subsection{$\mathrm{SU}(2)$ Group}
\label{SEC:SU2Packaging}

In this section, we work out the full machinery of symmetry packaging when the gauge group is $\mathrm{SU}(2)$ \cite{Yang1954}.
Recall that the finite-dimensional irreps of $\mathrm{SU}(2)$ are labeled by spin $j\in\{0,\tfrac12,1,\tfrac32,\dots\}$, each of dimension $2j+1$.

\subsubsection*{1. Single-particle packaging}

Let $V_{\frac12}\cong\mathbb C^2$ be the fundamental (doublet) of $\mathrm{SU}(2)$, with carrier index $i=1,2$.  Any one-particle creation operator $\hat q^\dagger_i$ transforms under the unitary representation
\[
U(g)\,\hat q^\dagger_i\,U(g)^{-1}
\;=\;
D^{(\frac12)}_{ji}(g)\,\hat q^\dagger_j,
\quad
g\in\mathrm{SU}(2),
\]
where $D^{(\frac12)}(g)\in\mathrm{U}(2)$.  Since $V_{\frac12}$ is irreducible, no nontrivial $\mathrm{SU}(2)$-invariant operator can split the doublet into smaller blocks: the entire two-component spinor is an inseparable package.

More generally, if the internal multiplet space is 
\[
V_{\rm int}
\;\cong\;
\bigoplus_{j\in\frac12\mathbb N} m_j\,V_j,
\]
then, by Maschke/Peter-Weyl theorem, each summand $V_j$ is a single-particle packaged subspace of dimension $2j+1$.

\subsubsection*{2. Two-particle Decomposition and CG Packaging}

On the two-particle sector one considers the tensor product
\[
V_{\tfrac12}\otimes V_{\tfrac12}
\;\cong\;
V_1\;\oplus\;V_0,
\]
i.e.,
\[
\mathbf2\otimes\mathbf2
=
\underbrace{\mathbf3}_{j=1}
\;\oplus\;
\underbrace{\mathbf1}_{j=0}.
\]
Using Clebsch-Gordan coefficients \cite{Racah1942,Wigner1959}, one writes the coupled basis
\[
\bigl|\,j,m\,\bigr\rangle
\;=\;
\sum_{m_1,m_2}
\bigl\langle\tfrac12\,m_1;\tfrac12\,m_2\bigm|j\,m\bigr\rangle
\;\bigl|\,\tfrac12,m_1\,\bigr\rangle
\otimes
\bigl|\,\tfrac12,m_2\,\bigr\rangle,
\]
with
\[
(j,m)\;=\;(1,1),\,(1,0),\,(1,-1)\quad\text{(triplet)},
\quad
(0,0)\quad\text{(singlet)}.
\]
The singlet state is
$$
\ket{\Psi_{0,0}}
= \frac1{\sqrt2}\Bigl(\ket{\tfrac12,+\tfrac12;\tfrac12,-\tfrac12}
- \ket{\tfrac12,-\tfrac12;\tfrac12,+\tfrac12}\Bigr).
$$
equivalently $ \epsilon_{ij}\,\hat q^\dagger_i\hat q^\dagger_j\ket0$.
Since it carries $j=0$, it is invariant under every $U(g)$ and hence is a fully packaged singlet.

\subsubsection*{3. Projectors onto Irreducible Sectors}

In general, the projector onto spin-$j$ in the $n$-fold tensor product is
$$
P_j = (2j+1) \int_{\mathrm{SU}(2)} dg ~ \chi^{(j)}(g)^* U(g)^{\otimes n},
\qquad \int_{\mathrm{SU}(2)} dg = 1.
$$
where $\chi^{(j)}(g)=\mathrm{Tr}\,D^{(j)}(g)$.
Acting on any $n$-particle state, $P_j$ isolates the packaged subspace $V_j\otimes \mathcal H_{\rm ext}^{\otimes n}$.

\subsubsection*{4. Multi-particle Packaging: General Clebsch-Gordan Series}

Under repeated Clebsch-Gordan coupling
\[
V_{j_1}\otimes V_{j_2}
\;\cong\;
\bigoplus_{J=|j_1-j_2|}^{\,j_1+j_2}\,V_J,
\]
and similarly for more factors, one obtains the full isotypic decomposition
\[
\mathcal H_{\rm iso}
\;=\;
\bigoplus_{n\ge0}
\Bigl(V_{\rm int}^{\otimes n}\otimes\mathcal H_{\rm ext}^{\otimes n}\Bigr)
\;=\;
\bigoplus_{J\in\frac12\mathbb N}
\Bigl(V_J\;\otimes\;\mathcal M_J\Bigr),
\]
where $\mathcal M_J$ is the multiplicity (external) space.
Each $V_J$ is a multi-particle packaged subspace of total spin $J$.

\subsubsection*{5. Packaged Entangled Bases and Schmidt Rank}

Within a fixed $J$-sector ($\dim>1$), pick the orthonormal coupled basis $\{|J,m\rangle\}_{m=-J}^J$.  
Any pure state in that sector $\ket\Psi=\sum_m c_m\ket{J,m}$ has Schmidt rank $\le2J+1$ and is entangled whenever more than one $c_m\neq0$.

\subsubsection*{6. Physical (singlet) packaged subspaces}

Finally, the gauge-invariant physical subspace is the total-singlet sector
\[
\mathcal H_{\rm phys}
\;=\;
P_{0}\,\mathcal H_{\rm iso},
\]
which is nontrivial exactly if some tensor power of $V_{\rm int}$ contains $V_0$.
In $\mathrm{SU}(2)$ that is always true for even numbers of fundamentals.

\bigskip
\noindent
This completes a model-independent derivation of how $\mathrm{SU}(2)$ packages its irreducible spin blocks into indecomposable gauge-charge units, how one projects onto them, and how one builds packaged entangled bases via Clebsch-Gordan coupling.

\begin{remark}[Isomorphism and Classification]
	By applying the projector to the two-color system, one projects the full Hilbert space $\mathcal{H}$ onto the singlet subspace $\mathcal{H}_{\mathbf{1}}$.
	In this subspace, for all $g\in\mathrm{SU}(2)$, the gauge transformation $U(g)$ acts on the projected packaged state $|\Psi_{\mathbf{1}}\rangle \in \mathcal{H}_1$ trivially, i.e.,
	\[
	U(g)\,|\Psi_{\mathbf{1}}\rangle = |\Psi_{\mathbf{1}}\rangle.
	\]
	Thus, we have an isomorphism
	\[
	\Phi_{\mathbf{1}}: \mathcal{H}_{\mathbf{1}} \to V_{\mathbf{1}},
	\]
	where $V_{\mathbf{1}}$ is the one-dimensional abstract representation space corresponding to the singlet.
	This example shows that each single-particle operator carries an indivisible gauge charge and multiparticle states can be projected onto gauge-invariant and packaged subspaces.
\end{remark}

\begin{example}[weak isospin]	
	Let $\ell^{\alpha}$ and $\ell^{\beta}$ be two fermions in the doublet $\mathbf 2$.
	The decomposition	
	$$
	\mathbf 2\;\otimes\;\mathbf 2
	\;=\;
	\mathbf 1_A\;\oplus\;\mathbf 3_S
	$$	
	produces
	\begin{itemize}
		\item Singlet (physical):
		$$
		\ket{\Psi_{1}} =\tfrac1{\sqrt2}\bigl(
		\ket{\uparrow\downarrow}-\ket{\downarrow\uparrow}
		\bigr).
		$$
		
		\item Triplet (non-physical):
		
		$$
		\ket{\Psi_{3}} =\ket{\uparrow\uparrow}
		\quad\text{or}\quad
		\tfrac1{\sqrt2}\bigl(
		\ket{\uparrow\downarrow}+\ket{\downarrow\uparrow}
		\bigr),\;
		\ket{\downarrow\downarrow}.
		$$
		
		The triplet states are again packaged-entangled but get removed by $\Pi_{\text{phys}}$ because they transform in the adjoint $\mathbf 3$.
	\end{itemize}
\end{example}

\subsection{$\mathrm{SU}(3)$ Group}
\label{SEC:SU3Packaging}

The compact Lie group $\mathrm{SU}(3)$ has irreps labeled by a pair of non-negative integers $(p,q)$, with dimension
$$
\dim V_{(p,q)} = \tfrac12\,(p+1)(q+1)(p+q+2).
$$
In physics one often writes
\[
\mathbf{3}\;=\;V_{(1,0)},\quad
\bar{\mathbf{3}}\;=\;V_{(0,1)},\quad
\mathbf{6}=V_{(2,0)},\quad
\mathbf{8}=V_{(1,1)},\;\dots
\]
but the same packaging logic holds for any $(p,q)$.

\subsubsection*{1. Single-particle Packaged Subspaces}

Let
\[
V_{\rm int}
\;\cong\;
\bigoplus_{(p,q)\in\widehat{\mathrm{SU}(3)}} m_{(p,q)}\,V_{(p,q)}
\]
be the internal representation carried by one particle, and let $\mathcal H_{\rm ext}$ be any external Hilbert space (spin, momentum, flavor, $\cdots$) on which $\mathrm{SU}(3)$ acts trivially.
Then
\[
\mathcal H_{\mathbf 1}
=
V_{\rm int} \otimes \mathcal H_{\rm ext}
\cong
\bigoplus_{(p,q)} m_{(p,q)} \bigl(V_{(p,q)}\otimes\mathcal H_{\rm ext}\bigr).
\]
Each summand $V_{(p,q)}\otimes\mathcal H_{\rm ext}$ is a single-particle packaged subspace:
no gauge-invariant operator can split the full $(p,q)$ block into smaller pieces.

\subsubsection*{2. Two-particle Clebsch-Gordan decomposition \cite{Racah1942,Wigner1959}}

On the two-particle sector we have the tensor product of two fundamentals, or more generally any two irreps:
\[
V_{(p_1,q_1)}\otimes V_{(p_2,q_2)}
\;\cong\;
\bigoplus_{(r,s)} c^{(r,s)}_{(p_1,q_1),(p_2,q_2)}\;V_{(r,s)},
\]
where the Clebsch-Gordan multiplicities $c^{(r,s)}_{\cdots}$ are determined by the usual $\mathrm{SU}(3)$ weight-diagram or Young-tableau rules \cite{Georgi2000,Gross1973,GellMann1961}.
Specifically
\[
\mathbf{3}\otimes\mathbf{3}
=\;V_{(1,0)}\!\otimes V_{(1,0)}
\;=\;
V_{(2,0)}\;\oplus\;V_{(0,1)}
\;=\;
\mathbf{6}\;\oplus\;\bar{\mathbf3},
\]
\[
\mathbf{3}\otimes\bar{\mathbf{3}}
=\;
V_{(1,0)}\!\otimes V_{(0,1)}
\;=\;
V_{(0,0)}\;\oplus\;V_{(1,1)}
\;=\;
\mathbf{1}\;\oplus\;\mathbf{8},
\]
and so on.

Choosing an orthonormal basis $\{\ket{(r,s);\,\alpha}\}$ of each $V_{(r,s)}$, one writes
\[
\ket{(r,s);\alpha}
\;=\;
\sum_{i,j}
\braket{(p_1,q_1),i;\,(p_2,q_2),j}{(r,s);\alpha}
\;\ket{(p_1,q_1),i}\otimes\ket{(p_2,q_2),j},
\]
where the coefficients are the $\mathrm{SU}(3)$ Clebsch-Gordan coefficients.
Each $V_{(r,s)}$ is a two-particle packaged subspace of dimension $\dim V_{(r,s)}$.

\subsubsection*{3. General $n$-particle Packaging and Isotypic Decomposition}

Extending to the Fock space
\[
\mathcal H_{\rm Fock}(\mathcal H_{\mathbf{1}})
\;=\;
\bigoplus_{n=0}^\infty
\begin{cases}
	\operatorname{Sym}^n(\mathcal H_{\mathbf{1}}) & (\text{bosons}),\\
	\wedge^n(\mathcal H_{\mathbf{1}}) & (\text{fermions}),
\end{cases}
\]
Maschke/Peter-Weyl gives an isotypic decomposition
\[
\mathcal H_{\rm iso}
\;=\;
\bigoplus_{n\ge0}\bigl(\mathcal H_{\mathbf{1}}^{\,(\otimes_s\,\text{or}\,\wedge)^n}\bigr)
\;=\;
\bigoplus_{(p,q)}
\Bigl[V_{(p,q)} \otimes \mathcal M_{(p,q)}\Bigr],
\]
where $\mathcal M_{(p,q)}$ is the multiplicity (external) space.
Each $\mathcal H_{(p,q)} = V_{(p,q)} \otimes \mathcal M_{(p,q)}$ is a multi-particle packaged subspace of total color type $(p,q)$.

\subsubsection*{4. Projectors via Peter-Weyl Theorem}

The projector onto the irrep $(p,q)$ in the $n$-fold sector is
\[
P_{(p,q)}
\;=\;
\dim V_{(p,q)}
\int_{\mathrm{SU}(3)}dg\;\chi_{(p,q)}(g)^*\,
\bigl[\,U(g)\bigr]^{\otimes n},
\]
with $\chi_{(p,q)}(g)=\operatorname{Tr} D^{(p,q)}(g)$ and Haar measure normalized to 1.
On any state
$\ket\Psi \in \mathcal H_{\rm iso}$,
$P_{(p,q)}\ket\Psi\in \mathcal H_{(p,q)}$, and 
$\sum_{(p,q)} P_{(p,q)} = \mathbf{1}$.

\subsubsection*{5. Packaged Entangled (Clebsch-Gordan) Bases}

Within each $\mathcal H_{(p,q)}$ of dimension $>1$ one may choose the coupled basis
$\{\ket{(p,q);\alpha}\}$.
Any vector mixing two or more $\alpha$-labels is entangled in color, i.e., it cannot be written as a product of lower-spin building blocks.
The Schmidt rank in the factorization $\mathcal H_{(p,q)} \cong V_{(p,q)} \otimes \mathcal M_{(p,q)}$ is the packaged Schmidt rank.

\subsubsection*{6. Physical (Color Singlet) Packaged Subspace}

Finally, the local Gauss law (or full gauge‐averaging) projects onto the trivial $(0,0)$ block:
\[
\mathcal H_{\rm phys}
\;=\;
P_{(0,0)}\,\mathcal H_{\rm iso},
\]
and is nonzero exactly when some combination of fundamentals and anti-fundamentals contain the singlet.
Specifically, in QCD $\mathbf3\otimes\bar{\mathbf3}\supset\mathbf1$ and $\mathbf3^{\otimes3}\supset\mathbf1$, etc., so mesons and baryons survive while any isolated colored block is annihilated.

\bigskip
\noindent
In this way the entire tower of $\mathrm{SU}(3)$ irreps, their Clebsch-Gordan decomposition, and the Peter-Weyl projector machinery combine to give a fully general packaging principle:
\[
\text{any }n\text{-particle state}
\;\longrightarrow\;
\bigoplus_{(p,q)}
\underbrace{V_{(p,q)}}_{\substack{\text{inseparable}\\\text{`color' package}}}
\;\otimes\;
\underbrace{\mathcal M_{(p,q)}}_{\substack{\text{external}\\\text{multiplicities}}}.
\]
Only the trivial $(0,0)$ block is physical in a confining gauge theory, and every nontrivial $(p,q)\neq(0,0)$ is an indivisible $\mathrm{SU}(3)$ package.

\begin{remark}[Local Versus Global Symmetries]
	For local gauge symmetries (e.g., QCD color $\mathrm{SU}(3)$), only the gauge-singlet ($\mathbf{1}$) states are free due to confinement.
	Non-singlet states remain internal.
	For global symmetries such as isospin or flavor, non-singlet multiplets (e.g., the baryon octet or decuplet in $\mathrm{SU}(3)_{\text{flavor}}$) can appear as observable free states.
	In both cases, however, the packaging principle holds:
	each single-particle operator carries an irreducible set of quantum numbers that cannot be partially factored.
\end{remark}

\begin{example}[QCD colors]	
	For concreteness write the quark colors as $\{r,g,b\}$.
	
	\begin{itemize}
		\item Singlet (physical)
		
		In $\mathbf 3\otimes\overline{\mathbf 3}$ the color singlet is	
		$$
		\ket{\Psi_{1}}
		\;=\;
		\tfrac1{\sqrt3}\bigl(
		\ket{r\bar r}\!+\!\ket{g\bar g}\!+\!\ket{b\bar b}
		\bigr)
		$$	
		
		\item $\mathbf 8$ example (unphysical)
		
		The eight color octet states span $\mathbf 8$.
		Pick one of them, e.g.,
		$$
		\ket{\Psi_{8}}
		\;=\;
		\tfrac1{\sqrt2}\bigl(
		\ket{r\bar r}-\ket{g\bar g}
		\bigr).
		$$		
		Here $\ket{\Psi_{8}}$ is packaged entangled between quark and anti-quark and obeys the fixed-charge (color) condition.		
		Under any color rotation $U \in \mathrm{SU}(3)$, it transforms as an octet, not as a singlet.
		Consequently		
		$$
		\Pi_{\text{phys}}\ket{\Psi_{8}}=0 .
		$$		
		The state is nonphysical.
	\end{itemize}	
\end{example}

Thus, the packaged states in non-Abelian gauge theories are classified by the irreducible subrepresentations obtained from the tensor product of single-particle blocks.
This construction explains physical phenomena such as color confinement \cite{Gross1973,Politzer1973} and superselection, highlighting that the internal DOFs are intrinsically entangled and cannot be partially separated.
In QCD, only packaged entangled states like color singlet are observable.
While in global symmetries non-singlet multiplets may appear.

\subsection{Combined Finite $\times$ Compact Symmetry: $G \times D$}

\paragraph{(1) Combined symmetry group.}
In many theories, the symmetry group comprises both a continuous local gauge symmetry $G$ and an additional discrete symmetry $D$ \cite{KraussWilczek1989,IbanezRoss1992,BanksDine1992}.
For example, in quantum chromodynamics (QCD) the gauge group $G$ might be $\mathrm{SU}(3)$ (local color symmetry), while an additional discrete symmetry $D$ (such as charge conjugation $C$ or baryon parity) is present.
In such cases, the full classification of packaged states is refined by a combined label.
The total symmetry group is given by the direct product
\[
\mathcal{G} = G \times D.
\]
The Hilbert space $\mathcal{H}$ of the theory then naturally decomposes into sectors characterized by both the net gauge (local) charge and the discrete charge,
i.e., if $\hat{Q}$ is the self-adjoint operator for the gauge charge and $d$ labels the eigenvalues of the discrete symmetry, then one may write
\[
\mathcal{H} = \bigoplus_{Q\in \sigma(\hat{Q})} \mathcal{H}_Q \quad \text{with} \quad \mathcal{H}_Q = \bigoplus_{d\in D}\, \mathcal{H}_{Q,d},
\]
where $Q$ runs over the gauge-charge spectrum $\sigma(\hat Q)$ and $d$ over the discrete labels $D$.

Here, $\mathcal{H}_{Q,d}$ is the subspace of states with net gauge charge $Q$ and discrete label $d$.
For example, if baryon number is broken down to $\mathbb{Z}_N$, then each state carries a combined label $(Q,d)$.
In the case of a local gauge symmetry, superselection rules ensure that states belonging to different $\mathcal{H}_Q$ cannot interfere.
In contrast, since $D$ is usually a global symmetry, it often allows additional structure within a given gauge sector.
In other words, the discrete symmetry further partitions $\mathcal{H}_Q$ into finer sub-sectors and packages the wavefunction by imposing additional superselection rules.

\paragraph{(2) Irreps of the combined group $\boldsymbol{G \times D}$.}
When a field transforms under both $G$ and $D$, a single-particle creation operator $\hat{a}^\dagger$ carries a composite index:
\[
\hat{a}^\dagger_{\alpha,d}(\mathbf{p}),
\]
where $\alpha$ labels the irrep of the continuous gauge group $G$ (for example, $\mathbf{3}$ for quarks in $\mathrm{SU}(3)$) and $d$ labels the discrete charge under $D$ (for example, the eigenvalue $\pm 1$ under a $\mathbb{Z}_2$ symmetry).
Under a combined transformation $(g,\delta)\in G\times D$, the operator transforms as
\[
U_{(g,\delta)}\,\hat{a}^\dagger_{\alpha,d}\,U_{(g,\delta)}^{-1}
=\sum_{\alpha',d'} \Bigl[D^{(\alpha,d)}(g,\delta)\Bigr]_{\alpha',\,d';\;\alpha,\,d}\,\hat{a}^\dagger_{\alpha',d'}\,.
\]
If the discrete group $D$ is Abelian (e.g., $\mathbb{Z}_N$), the discrete part of the representation is typically one-dimensional (a phase factor, e.g., $\omega^k$ with $\omega=e^{2\pi i/N}$).
Thus, the total transformation is a tensor product of the continuous and discrete parts, and the state's full charge is given by the combined label $(Q,d)$.

\paragraph{(3) Physical interpretation.}
While the local gauge symmetry $G$ imposes a strict superselection rule (so that no state may be a superposition of different net gauge charges), the discrete symmetry $D$ - being a global symmetry - allows an additional refinement within each gauge-charge sector.
For instance, even if a state has net gauge charge $Q$, it can still carry a discrete label $d$ from $D$.
The decomposition
\[
\mathcal{H}_Q = \bigoplus_{d\in D}\, \mathcal{H}_{Q,d}
\]
means that states with different discrete labels cannot interfere with one another.
In this sense, the discrete symmetry further packages the internal DOFs, refining the overall structure to a combined charge $(Q,d)$.

\subsubsection{Representation‐theoretic Decomposition for $G\times D$}
\label{sec:GD-decomp}

Since $G$ and $D$ commute, every irrep of $G\times D$ is a tensor product of an irrep of $G$ and an irrep of $D$.
Writing
\[
\widehat G=\{\,R\},\quad \widehat D=\{\delta\},
\]
we have the isomorphism of $G\times D$‐modules
\[
\mathcal H
\;\cong\;
\Bigl(\bigoplus_{R\in\widehat G}V_R\otimes\mathcal M_R\Bigr)
\;\otimes\;
\Bigl(\bigoplus_{\delta\in\widehat D}W_\delta\Bigr)
\;=\;
\bigoplus_{R,\delta}
\Bigl(V_R\otimes W_\delta\otimes\mathcal M_R\Bigr),
\]
where
$V_R$ is the carrier space of the continuous irrep $R$,  
$W_\delta$ is the carrier of the discrete irrep $\delta$,  
and $\mathcal M_R$ is the multiplicity (external) space for $G$.

Equivalently, one may collect the two sums into a single double‐sum over combined labels $(R,\delta)$:
\[
\mathcal H
=\bigoplus_{(R,\delta)}
\mathcal H_{R,\delta},
\qquad
\mathcal H_{R,\delta}
:=V_R\otimes W_\delta\otimes \mathcal M_R.
\]
Each summand is an inseparable package carrying both a continuous charge $R$ and a discrete charge $\delta$.

\subsubsection{Joint Projectors onto $(R,\delta)$‐sectors}
\label{sec:GD-projector}

By Peter-Weyl on $G$ and ordinary orthogonality on $D$, the projector onto the $(R,\delta)$ block is simply the product of the two
\[
P_{R,\delta}
=\underbrace{\frac{d_R}{|G|}\sum_{g\in G}\chi_R^*(g)\,U_G(g)}_{P_R}
\;\times\;
\underbrace{\frac{1}{|D|}\sum_{h\in D}\chi_\delta^*(h)\,U_D(h)}_{P_\delta}\,.
\]
Since $[U_G(g),U_D(h)]=0$, $P_{R,\delta}$ satisfies
\[
P_{R,\delta}^2=P_{R,\delta},
\quad
P_{R,\delta}P_{R',\delta'}=0\;\text{for }(R,\delta)\neq(R',\delta'),
\quad
\sum_{R,\delta}P_{R,\delta}=\mathbf1.
\]
Acting on any state $\ket\Psi$, $P_{R,\delta}\ket\Psi\in\mathcal H_{R,\delta}$ exactly.

\subsubsection{Packaged entanglement in $G\times D$}
\label{sec:GD-entanglement}

Within a fixed $(R,\delta)$‐sector, the Hilbert space factorizes as
\[
\mathcal H_{R,\delta}
\;\cong\;
V_R\;\otimes\;W_\delta\;\otimes\;\mathcal M_R.
\]
We call a pure state in $\mathcal H_{R,\delta}$ hybrid‐package‐entangled if its Schmidt rank across the split 
$
V_R\otimes W_\delta\;\big|\;\mathcal M_R
$
or across 
$
V_R\;\big|\;W_\delta\otimes\mathcal M_R
$
is strictly greater than 1.
In particular, any non‐trivial superposition
$
\sum_i c_i\,\phi_i\otimes\psi_i
\in V_R\otimes W_\delta
$
is a packaged‐entangled state in the combined discrete×continuous label.

\begin{example}[$\mathrm{SU}(2)\times\mathbb Z_2$ fermion‐parity]
	Consider the symmetry groups $G=\mathrm{SU}(2)$ and $D = \mathbb Z_2^{\rm f} = \{+1,-1\}$ be fermion‐parity, with irreps
	$
	\widehat G=\{j=0,\tfrac12,1,\dots\}
	$
	and
	$
	\widehat D=\{\delta=+1\,,\,-1\}.
	$
	\begin{enumerate}
		\item \textbf{Single‐particle.}
		
		\begin{itemize}
			\item Single‐particle operator.
			A fermion creation operator
			$
			\hat\psi^\dagger_{i,p}
			$
			carries an $\mathrm{SU}(2)$‐index $i=1,2$ (fundamental $j=\tfrac12$) and a parity label $p=\pm1$ under $\mathbb Z_2^{\rm f}$.
			
			\item Combined transformation.
			Under
			$
			(g,p')\in \mathrm{SU}(2)\times\mathbb Z_2,
			$
			it transforms as
			\[
			U(g,p')\,\hat\psi^\dagger_{i,p}\,U(g,p')^{-1}
			=\sum_{j=1}^2D^{(1/2)}_{ji}(g)\;\chi_p(p')\;\hat\psi^\dagger_{j,p},
			\]
			where the $\mathbb Z_2$‐character is 
			$\chi_{+1}(p')=+1$, $\chi_{-1}(p')=p'$.
			
			\item Single-particle packages.
			There are two irreducible blocks
			\[
			V_{1/2}\otimes W_{+1}
			\quad\text{and}\quad
			V_{1/2}\otimes W_{-1}
			\]
			corresponding to fermion-parity even/odd doublets.
						
			\item Projector onto $(j,\delta)$‐sector.
			For example,
			\[
			P_{\tfrac12,-1}
			\;=\;
			\underbrace{2\!\int_{\mathrm{SU}(2)}dg\;\chi^{(1/2)}(g)^*\,U_{\mathrm{SU}(2)}(g)}_{P_{1/2}}
			\;\times\;
			\underbrace{\tfrac12\sum_{p'=\pm1}p'\,U_{\mathbb Z_2}(p')}_{P_{-1}}.
			\]
			This isolates exactly the $(j=\tfrac12,\delta=-1)$ subspace.
		\end{itemize}
		
		\item \textbf{Two-particle packaged entangled state.}
		
		An explicit example in the $((\tfrac12,-1)^{\otimes2})$ sector is
		\[
		\ket{\Psi}
		= \frac{1}{\sqrt2}\Bigl(
		\hat\psi^\dagger_{1,-}(A)\,\hat\psi^\dagger_{2,-}(B)
		+ \hat\psi^\dagger_{2,-}(A)\,\hat\psi^\dagger_{1,-}(B)
		\Bigr)\ket0,
		\]
		which has nontrivial Schmidt rank across the $\mathrm{SU}(2)\otimes\mathbb Z_2$ factor and therefore is a genuine hybrid‐packaged entangled state.
	\end{enumerate}
\end{example}

\begin{example}[$\mathrm{SU}(2) \times C$ (Charge Conjugation)]
	For $\mathrm{SU}(2)$ gauge symmetry, the fundamental representation is 2-dimensional.
	Let $C$ denote charge conjugation, which, in this context, interchanges particles and antiparticles (in $\mathrm{SU}(2)$, a quark and an antiquark both transform as doublets, but with conjugate labels).
	\begin{itemize}
		\item A single-particle operator might be written as $\hat{a}^\dagger_{\alpha,c}(\mathbf{p})$, where $\alpha\in\{1,2\}$ labels the $\mathrm{SU}(2)$ index and $c\in\{+,-\}$ denotes the eigenvalue of the charge conjugation operator $C$ (i.e., whether it is a particle or an antiparticle).  
		
		\item Under a combined transformation $(g,C)\in \mathrm{SU}(2)\times \mathbb{Z}_2$, the operator transforms as 
		\[
		U(g,C)\,\hat{a}^\dagger_{\alpha,c}\,U(g,C)^{-1} \;=\; \sum_{\beta} D^{(\mathbf{2})}_{\beta\alpha}(g)\, \chi_c(C)\,\hat{a}^\dagger_{\beta,c'}\,,
		\]
		where $\chi_c(C)$ is the one-dimensional character of $\mathbb{Z}_2$ (for instance, $\chi_+(C)=+1$ and $\chi_-(C)=-1$).  
		
		\item The full Hilbert space then decomposes into sectors labeled by the net $\mathrm{SU}(2)$ charge (which may be chosen by the rules of confinement) and a discrete label from $C$.
	\end{itemize}
\end{example}

\begin{example}[$\mathrm{SU}(3) \times \mathbb{Z}_N$ (Baryon Parity)]
	In QCD, the local gauge group is $\mathrm{SU}(3)$ (color).
	Suppose, in addition, there is an exact discrete symmetry $\mathbb{Z}_N$ associated with baryon parity (a remnant of a broken $\mathrm{U}(1)$).
	\begin{itemize}
		\item Each quark creation operator is in the $\mathbf{3}$ of $\mathrm{SU}(3)$ and carries a definite baryon number (say, $-1/3$) along with a discrete $\mathbb{Z}_N$ label.
		For simplicity, denote the discrete charge by $d\in\{0,1,\dots,N-1\}$.  
		
		\item A meson state, for example, is built from one quark and one antiquark:
		\[
		\vert M\rangle \;=\; \frac{1}{\sqrt{3}} \sum_{c=1}^3 \vert q_c,d\rangle \otimes \vert \bar{q}_c,\bar{d}\rangle~,
		\]
		where the color indices are summed so that the state is a color singlet, and the discrete charges satisfy $d+\bar{d} \equiv 0 \pmod{N}$.  
		
		\item The Hilbert space is then decomposed as 
		\[
		\mathcal{H}_0 = \bigoplus_{d\in \mathbb{Z}_N} \mathcal{H}_{0,d}\,,
		\]
		where $\mathcal{H}_{0,d}$ is the subspace with net color singlet and discrete label $d$.  
		
		\item The discrete symmetry further refines the state classification:
		states with different $\mathbb{Z}_N$ labels belong to distinct superselection sectors, so that no physical observable can create interference between, say, $\mathcal{H}_{0,0}$ and $\mathcal{H}_{0,1}$.
	\end{itemize}
\end{example}

\begin{remark}[Spontaneous symmetry breaking]
	When a compact gauge group $G$ is Higgsed down to a subgroup $H \subset G$, the packaging principle still applies:
	irreps of $G$ decompose under $H$ into blocks $\bigoplus_{\mu}V_\mu$.
	One then projects onto those sub-blocks via the same Peter-Weyl projectors, now built from the residual $H$-action.
\end{remark}

\section{Symmetry Packaging under Higher-Form Symmetries}
\label{SEC:SymmetryPackagingOnDifferentialForms}

In conventional gauge theories, we usually work with pointlike particles or local fields.
We call the symmetries related to these objects 0‑form symmetries.
In modern physics, however, $p$-form symmetries are also important \cite{Gaiotto2015,Polyakov1975,Luscher1980,Bali1995,Bali2001,Takahashi2002}.
In a $p$-form symmetry, the symmetry acts on objects supported on $p$-dimensional submanifolds, which can also be in packaged states.

In this section, 
we introduce $p$-form symmetries, focusing on 1-form and 2-form cases in confining gauge theories.
We will generalize the packaging principle (no partial factorization) to extended flux lines and surfaces (see Theorem VI: Fusion-Category Packaging in \cite{MaSPI2025}).
In other words, the flux carried by a Wilson loop or the flux on a membrane must appear as a single, inseparable unit.
We will also discuss mixed packaged objects and symmetries that combine 0-form charges and 1-form or 2-form fluxes.
Finally, we will relate winding (topological) superselection on lattices to 1-form (or higher-form) charges.
These topological sectors remain superselected because local operators cannot alter global flux quantum numbers.

\subsection{Extended Packaged Objects and $p$-Form Symmetries}

We now describe extended packaged objects in detail and show how it naturally leads to the notion of $p$-form symmetries acting on these extended packaged operators.

\paragraph{(1) Extended Packaged Objects.}
Consider an extended operator $O^{(p)}$ defined on a $p$-dimensional submanifold $\mathcal{M}_p$ of spacetime.\cite{Polchinski1995,Witten1995,Kitaev2006}
Such an operator might represent a flux tube (if $p=1$) or a surface defect (if $p=2$).
Suppose $O^{(p)}$ is labeled by a charge (or flux) $\gamma$ taking values in a group $G$ (or in one of its representations).
This charge $\gamma$ is analogous to the electric charge or color charge in pointlike particles.
It is fully packaged and cannot be subdivided.
In analogy with the pointlike case, the packaging principle asserts that the operator $O^{(p)}$ transforms irreducibly under $G$.
In other words, the full flux label $\gamma$ appears as an inseparable unit.

Assume that under a group transformation $g\in G$, the extended operator transforms as
\[
U(g)\, O^{(p)}\, U(g)^{-1} = \rho(g)\, O^{(p)},
\]
where $\rho$ is a representation of $G$ on the space of such operators.
Since $\rho(g)$ is irreducible, there is no invariant subspace of $O^{(p)}$.
Hence, the entire flux label $\gamma$ remains intact.
If one attempted to split the operator into two parts,
\[
O^{(p)} \stackrel{?}{=} O_1^{(p)} \otimes O_2^{(p)},
\]
with individual labels $\gamma_1$ and $\gamma_2$ such that $\gamma_1 \cdot \gamma_2 = \gamma$.
Then by the linearity of the representation, one would have
\[
U(g)\,(O_1^{(p)} \otimes O_2^{(p)})\,U(g)^{-1} = \rho_1(g)\,O_1^{(p)} \otimes \rho_2(g)\,O_2^{(p)}.
\]
If the original $O^{(p)}$ transforms irreducibly, no nontrivial decomposition into invariant subspaces exists.
Therefore, the entire extended object must be treated as a single packaged block, and its flux cannot be fractionated.
This is the extended analogue of the statement that, for pointlike excitations, the full internal charge is inseparable.

\begin{example}[Flux lines in confining theories]
	In a confining $\mathrm{SU}(N)$ gauge theory, a flux line (or Wilson loop) carries a center charge (e.g., in $\mathbb{Z}_N$).
	Under a center transformation,
	\[
	U(g)\, W(C)\, U(g)^{-1} = e^{2\pi i k/N}\, W(C),
	\]
	where $k$ is the charge of the loop.
	Since $W(C)$ transforms irreducibly under the center, it cannot be split into segments carrying fractional center charge.
	The entire flux line is a packaged object:
	its center charge is inseparable, similar to how a quark's color charge cannot be split.
\end{example}

In lattice formulations or theories with periodic boundary conditions, extended objects (flux lines or surfaces) may wrap around non-contractible cycles.
This leads to topologically protected superselection sectors where the total flux is conserved.

\paragraph{(2) $p$-Form Symmetries.}
When dealing with extended operators, it is natural to introduce the concept of a $p$-form symmetry \cite{KapustinSeiberg2014,Gaiotto2015,Tachikawa2020,BanksSeiberg2011}. In $d$-dimensional spacetime, a $p$-form global symmetry is characterized by conserved charges measured on $(d-p)$-dimensional manifolds. The charged operators under a $p$-form symmetry are supported on $p$-dimensional submanifolds.

\begin{definition}[$p$-Form Symmetries]
	A $p$-form symmetry in $d$ dimensions is defined by the existence of topological operators $U_g(\Sigma_{d-p})$ for each $g\in G^{(p)}$, where $\Sigma_{d-p}$ is a closed $(d-p)$-dimensional manifold.
	These operators act on charged $p$-dimensional objects (such as Wilson loops for $p=1$ or surface operators for $p=2$) by
	\[
	U_g(\Sigma_{d-p})\, O^{(p)}\, U_g(\Sigma_{d-p})^{-1} = \chi(g)\, O^{(p)},
	\]
	where $\chi(g)$ is a character (or, more generally, a matrix) of $G^{(p)}$.
\end{definition}

In our context, a $p$-form symmetry ensures that the extended operator (which may represent a flux or topological defect) is an inseparable entity.
It carries its full charge as one packaged unit.
For example, in an $\mathrm{SU}(N)$ gauge theory the center symmetry $\mathbb{Z}_N$ acts as a 1-form symmetry on Wilson loops.
Under a center transformation, a Wilson loop in the fundamental representation picks up a phase $e^{2\pi i/N}$, confirming that it carries an indivisible 1-form charge.

\begin{example}[SU($N$) Center as a 1-Form Symmetry]
	If an $\mathrm{SU}(N)$ gauge theory has a center $\mathbb{Z}_N$, then a Wilson loop $W(C)$ transforms as follows:
	\[
	U(g)\, W(C)\, U(g)^{-1} = e^{2\pi i k/N}\, W(C),
	\]
	where $k$ is the 1-form charge.
	The irreducibility of this transformation prevents the Wilson loop $W(C)$ to be split into separate components with fractional charges.
\end{example}

\subsection{Corollary: Existence of $p$-Form Symmetry Associated Packaged States}

Let us now extend Theorem \ref{THM:ExistenceOfGAssociatedPackagedStates} so that it can apply to $p$-form symmetries by replacing pointlike charged excitations with extended charged excitations (e.g., line objects for $p=1$ or surface objects for $p=2$).
In this case, the internal or topological charge associated with the extended object is determined by a $p$-form symmetry group $G^{(p)}$.
This leads to the following corollary:

\begin{corollary}[Existence of $p$-Form Symmetry Associated Packaged States]
	Let $G^{(p)}$ be a finite or compact $p$-form symmetry group (for example, groups such as $\mathrm{U}(1)$) in a $(d+1)$-dimensional spacetime.
	Assume that $G^{(p)}$ acts on extended operators defined on $p$-dimensional submanifolds and each extended operator carries an irreducible $p$-form charge that cannot be split into fractional parts.
	Then there exists a Hilbert space $\mathcal{H}$ of packaged states associated to the group $G^{(p)}$ with the following properties:
	\begin{enumerate}
		\item \textbf{Existence of Packaged States of Single-Extended Objects:}  
		Let $O^{(p)}$ be an extended operator defined on a $p$-dimensional submanifold. Due to the $G^{(p)}$-invariance, for any $g \in G^{(p)}$ and for any closed $(d-p)$-dimensional manifold $\Sigma_{d-p}$ which links with the support of $O^{(p)}$, the action of the topological operator $U_g(\Sigma_{d-p})$ is given by
		\[
		U_g(\Sigma_{d-p})\, O^{(p)}\, U_g(\Sigma_{d-p})^{-1} \;=\; \chi(g)\, O^{(p)},
		\]
		where $\chi(g)$ is the character or a corresponding representation matrix associated with an irrep of $G^{(p)}$.
		In this way, $O^{(p)}$ creates a single and extended packaged state with a completely and indivisibly $p$-form charge.
		
		\item \textbf{Existence of Packaged Product States of Multi-Extended Objects:}  
		When multiple extended operators $O^{(p)}_i$ are present, their tensor product naturally creates a packaged product state in the full Hilbert space.
		This packaged product state can be expressed as
		\[
		|\Theta\rangle \;=\; O^{(p)}_1 \otimes O^{(p)}_2 \otimes \cdots \otimes O^{(p)}_n\,|0\rangle.
		\]
		The full Hilbert space then decomposes into charge sectors according to the net $p$-form charge obtained by adding the individual charges (either via the group product or by summation in the Abelian case), i.e.,
		\begin{equation}\label{PFormHilbertSpaceDecomposition}
			\mathcal{H} \cong \bigoplus_{Q\in\hat{G}^{(p)}} \mathcal{H}_Q,
		\end{equation}
		where each subspace $\mathcal{H}_Q$ consists of states with a definite total $p$-form charge $Q$.
				
		\item \textbf{Existence of an Orthonormal Basis of Packaged Entangled States:}  
		Within a fixed charge sector $\mathcal{H}_Q$, one can obtain packaged entangled (non-factorizable) states by performing appropriate linear combinations of product states.
		Specifically, by applying a Gram-Schmidt orthogonalization or using Clebsch-Gordan techniques, one can construct a complete orthonormal basis $\{|\Psi_Q^{(i)}\rangle\}$ in $\mathcal{H}_Q$.
		Each basis state in this subspace is a packaged entangled state in which the IQNs of each individual extended operators are inseparably entangled and the entire state carries the total $p$-form charge $Q$.
		
		\item \textbf{Existence of Isomorphism to the Abstract Representation:}  
		Let $V_Q$ denote the representation space associated with the irrep $\rho_Q$ of $G^{(p)}$ corresponding to the total charge $Q$. Then $\forall~ g \in G^{(p)}$ and $\forall~ |\Psi_Q^{(i)}\rangle \in \mathcal{H}_Q$, there exists an isomorphism of $G^{(p)}$-modules
		\[
		\Phi_Q: \mathcal{H}_Q \to V_Q,
		\]
		which satisfies
		\[
		\Phi_Q\Bigl(U(g)\,|\Psi_Q^{(i)}\rangle\Bigr) \;=\; \rho_Q(g)\,\Phi_Q\Bigl(|\Psi_Q^{(i)}\rangle\Bigr),
		\]
		This isomorphism guarantees that the Hilbert space $\mathcal{H}_Q$ of packaged states is precisely equivalent (as a $G^{(p)}$-module) to the abstract representation space determined by the group's structure.
		In particular, when extended operators with charges $Q_1$ and $Q_2$ are combined, the composite state automatically carries the net charge defined by the group product (or sum, in the Abelian case) $Q_1 \cdot Q_2$.
		The resulting state transforms under the representation obtained through the appropriate tensor product decomposition.
	\end{enumerate}
\end{corollary}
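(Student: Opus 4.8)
The plan is to transport the five-step proof of Theorem~\ref{THM:ExistenceOfGAssociatedPackagedStates} to the higher-form setting, exploiting the fact that that argument rested only on the \emph{linearity} of the $G$-action together with complete reducibility and Schur's lemma---none of which cares whether the charged objects are pointlike or extended. The essential observation is that, for a fixed $p$-dimensional support, the extended operators $O^{(p)}$ close into a linear space on which $G^{(p)}$ acts through conjugation $O^{(p)}\mapsto U_g(\Sigma_{d-p})\,O^{(p)}\,U_g(\Sigma_{d-p})^{-1}$, an assignment that depends only on the topological linking of $\Sigma_{d-p}$ with the support of $O^{(p)}$. Once this is established, the representation theory runs in parallel with the pointlike case.

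For part~(1), I would invoke the $p$-form symmetry definition directly: the hypothesis that each extended operator carries an irreducible, unsplittable $p$-form charge says precisely that $O^{(p)}$ spans an irreducible block, so $U_g(\Sigma_{d-p})\,O^{(p)}\,U_g(\Sigma_{d-p})^{-1}=\chi(g)\,O^{(p)}$ admits no proper invariant subspace---the extended analogue of Step~1, with non-fractionability of the flux following from irreducibility exactly as non-splittability of color did there. For part~(2), I would take the tensor product of $n$ extended operators, note that fusion of $p$-form charges is governed by the group law on $\widehat{G}^{(p)}$ (addition in the abelian case), and apply the orthogonality/Peter-Weyl projectors to obtain the charge-sector decomposition~\eqref{PFormHilbertSpaceDecomposition}. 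For part~(3), within a fixed sector $\mathcal{H}_Q$ I would superpose linearly independent packaged product configurations---each such nontrivial superposition being forced to Schmidt rank $\ge 2$ by the same reasoning as in Theorem~\ref{THM:ExistenceOfGAssociatedPackagedStates}---and run Gram-Schmidt to extract an orthonormal packaged entangled basis. For part~(4), I would map this basis to the standard basis of the abstract carrier $V_Q$ and check $G^{(p)}$-equivariance on generators, producing $\Phi_Q$.

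The main obstacle---and the genuine departure from the pointlike theorem---is that every $p$-form symmetry with $p\ge 1$ is necessarily abelian, so that all irreps $V_Q$ are one-dimensional. The literal reading of part~(4), an isomorphism $\mathcal{H}_Q\cong V_Q$ onto a one-dimensional space, then fails whenever $\dim\mathcal{H}_Q>1$, and the claim must be reinterpreted: $\mathcal{H}_Q$ is \emph{isotypic} of type $Q$, i.e.\ isomorphic to $V_Q\otimes\mathcal{M}_Q$ with $G^{(p)}$ acting as the scalar $\chi_Q(g)$ on the whole sector, while the multiplicity space $\mathcal{M}_Q$ records the many inequivalent ways a fixed total flux $Q$ can be apportioned among the constituent objects and among topological winding sectors on non-contractible cycles. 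The genuine packaged entanglement therefore lives not \emph{inside} a single one-dimensional block but within this multiplicity/winding structure, and I would have to phrase the equivariance of $\Phi_Q$ at the level of $V_Q\otimes\mathcal{M}_Q$. A secondary technicality is finite-dimensionality: because extended objects carry continuous geometric moduli the full operator space is infinite-dimensional, so the Gram-Schmidt step of part~(3) must be executed within a fixed finite-dimensional charge slice---legitimate precisely because the flux label is discrete and the topological operators act trivially on the continuous moduli.
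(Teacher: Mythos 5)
Your proposal takes essentially the same route as the paper: the paper offers no separate proof of this corollary, justifying it only by the remark that one extends Theorem~\ref{THM:ExistenceOfGAssociatedPackagedStates} by ``replacing pointlike charged excitations with extended charged excitations,'' which is precisely the transport argument you carry out (in considerably more detail). Your additional observation---that every $p$-form symmetry with $p\ge 1$ is abelian, so each $V_Q$ is one-dimensional and the claimed isomorphism $\Phi_Q:\mathcal H_Q\to V_Q$ must be read isotypically as $\mathcal H_Q\cong V_Q\otimes\mathcal M_Q$ with the entanglement residing in the multiplicity/winding space---is a correct refinement that the paper does not address.
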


Now we see that any collection of $p$-form charged operators can form a composite state that transforms in an irrep of $G^{(p)}$.
This corollary extends the packaging principle stated in Theorem \ref{THM:ExistenceOfGAssociatedPackagedStates} to higher-form symmetries and demonstrates that the space of multi-object packaged states naturally decomposes into irreducible charge sectors as controlled by the group structure.

\subsection{Packaged States Associated with 1-Form Symmetry}

A 1-form symmetry acts on line operators (such as Wilson loops or vortex lines) rather than local (pointlike) fields.
In lattice gauge theory \cite{Wilson1974,KogutSusskind1975,Zohar2016,Sala2018}, there are plenty of these examples where flux loops or vortex lines carry definite 1-form charges.

\paragraph{(1) Flux-Loop States in Lattice Gauge Theory.}
In lattice gauge theory, a flux-loop state \cite{Zohar2016} can be written as
\[
\lvert \Psi_{\text{flux}}\rangle
\;=\;
\sum_{\{r_\ell\}} 
\beta \bigl(\{r_\ell\}\bigr)
\;\bigotimes_{\ell}\lvert r_\ell\rangle,
\]
where each state $\lvert r_\ell\rangle$ is a basis state on link $\ell$ (carrying electric flux or a gauge element) and the coefficients $\beta(\{r_\ell\})$ enforce Gauss's law at every vertex.
In particular, nonzero amplitudes $\beta$ select a specific winding sector, so this wavefunction describes a packaged flux configuration.
A 1-form packaged state, such as a flux line, carries its entire flux as a single, inseparable unit.
Any attempt to divide the flux among parts would violate the irreducibility (packaging) condition.

In field theory, a 1-form symmetry may be represented by a topological current $J^{\mu_1\cdots\mu_{d-1}}$ or a two-form gauge field $B_{\mu\nu}$.
In either case, the 1-form charge is on a codimension-1 manifold and captures the total flux or vortex linking through that manifold.
Practically, we are often interested in how a Wilson loop picks up a phase.
Because local operators only act on small and localized regions, they cannot change flux lines that wrap non-contractible cycles.
This leads to the existence of winding numbers or global flux sectors and topological superselection.
Therefore, one cannot locally cut or rearrange entire flux loops.

\paragraph{(2) Winding (Topological) Superselection.}
In a 2D or 3D lattice gauge theory with periodic boundary conditions, say a torus, we can defines integer fluxes such as
\[
W_x \;=\; \sum_{\ell \in \mathrm{loop}_x} E_\ell, 
\quad
W_y \;=\; \sum_{\ell \in \mathrm{loop}_y} E_\ell,
\quad \dots,
\]
where $\mathrm{loop}_x$ and $\mathrm{loop}_y$ are traverse non-contractible cycles and $E_\ell$ is an integer electric flux on link $\ell$.
Each set $\{W_x,\,W_y,\dots\}$ labels a distinct global flux (winding) sector.
A state $\lvert W_x,\,W_y,\dots\rangle$ is topologically different from another state with different flux numbers.
Thus, local operators cannot change the net flux around a cycle and these winding sectors do not mix.

Such invariance shows how a 1-form symmetry controls the behave of a flux line.
The winding number behaves as the charge for this 1-form symmetry.
States like $\ket{W}$ and $\ket{W'}$ represent different 1-form irreps, i.e., discrete flux (from $\mathbb{Z}_N$) or continuous real-valued flux from $\mathrm{U}(1)$.

By Gauss's law, a flux line must be either closed or ends on charges.
A local Hamiltonian cannot change the net flux that goes around a cycle.
To change the $W_x$ or $W_y$ on a torus or similar periodic manifold, therefore, we need nonlocal flux manipulation.
Hence, any flux loop that circles the system remains stable unless a large instanton-like process intervenes.\cite{Polyakov1975,Luscher1980}
Thus, states $\lvert W=0\rangle$ and $\lvert W=1\rangle$ form distinct superselection sectors.
There is no interference between these different sectors.

Of course, in certain special cases, like lower dimensionality or special boundary conditions, large gauge transformations or instanton events can connect states with different winding \cite{Belavin1975}.
This may create a finite amplitude for transitions between different sectors $\ket{W} \leftrightarrow \ket{W'}$.
However, such tunneling is usually negligible and the winding sectors remain fixed.

\paragraph{(3) 1-Form Packaged Entanglement.}
Line operators with 1-form charges obey the above topological superselection rules.
Their irreps prevent any partial splitting of the charge.
Using packaged projection operators, we can project the packaged states onto a subset of packaged entangled states.
Let us now illustrate this with simple examples.

\begin{example}[Discrete 1-Form $\mathbb{Z}_N$ Flux]
	Let us consider extended objects with 1-form symmetry group $\mathbb{Z}_N$.
	A line operator under $\mathbb{Z}_N$ transforms in a 1D irrep, which can be labeled by $k \in \{0,\dots,N-1\}$.
	Let us denote the line operators by	
	\[
	\hat{\mathcal{W}}^\dagger_{(k)}(\mathrm{loop}).
	\]	
	Then it picks up a phase $\omega^k$ with $\omega = e^{2\pi i /N}$.
	
	Suppose the system have two lines, i.e.,	
	\[
	\hat{\mathcal{W}}^\dagger_{(k_1)}(\mathrm{loop}_1)
	\;\hat{\mathcal{W}}^\dagger_{(k_2)}(\mathrm{loop}_2)
	\;\lvert 0\rangle
	\;\in\;
	\mathbf{R}_{k_1} \;\otimes\;\mathbf{R}_{k_2}.
	\]
	Because each $\mathbf{R}_k$ is 1D, a multi-line state must also have dimension $\dim(\mathbf{R}_{k_1}\otimes\mathbf{R}_{k_2})=1$ and the total 1-form charge is $(k_1 + k_2)\bmod N$.
	
	If physical boundary conditions require net zero flux (like no external source), we project onto $k_1 + k_2 \equiv 0 \pmod N$.
	Mathematically, the projection operator can be written as
	\[
	P_{\mathrm{net}=0}
	\;=\;\frac{1}{N} \sum_{j=0}^{N-1} \bigl(U_j\bigr),
	\]
	where $U_j$ are the 1-form transformations shifting each line operator's label.
	Finally, only states with $k_1 + k_2=0\mod N$ remain.
	
	Such projection can result in entangled line operators if they are localizable or partially distinct in the wavefunction.
	We cannot have something like: half a line with charge $k_1$ and half with charge $k_2$ inside the same line.
	Instead, we either have two separate lines or a single line in irreps $k_1 + k_2$.	
	Although each line operator is one-dimensional, the full label $k_1 + k_2 \mod N$ remains intact, which means that the flux is packaged as a whole.
\end{example}

\begin{example}[Continuous 1-Form $\mathrm{U}(1)$ Flux]
	Under a continuous 1-form symmetry group $\mathrm{U}(1)$, a line operator may pick up a phase $e^{i \theta \,\Phi}$, where $\Phi \in \mathbb{R}$ is the flux.
	If two flux lines have fluxes $\Phi_1$ and $\Phi_2$, then the total flux is $\Phi_1 + \Phi_2$.
	If physical states require net zero flux, then $\Phi_1 + \Phi_2 = 0$.
	The two lines are then entangled and non-separable.
\end{example}

\begin{example}[Purely global compact example: Maxwell 1-form symmetries in 4D]
	The free Maxwell action possesses two continuous 1-form symmetries, electric and magnetic, generated by 
	\[
	J_e = \star F,\quad J_m = F,
	\]
	with charges $\displaystyle Q_e(\Sigma)=\int_\Sigma \star F$ and $Q_m(\Sigma)=\int_\Sigma F$.  
	Wilson lines $W(C)=\exp(i\oint_C A)$ and ’t Hooft lines $H(C)=\exp(i\oint_C\tilde A)$ transform irreducibly under these symmetries, and the projector 
	\[
	P_q = \int_{0}^{2\pi}\frac{d\alpha}{2\pi}\,e^{i\alpha (Q_e-q)}
	\]
	isolates the sector of fixed electric flux $q$.  
	Superselection of flux sectors then forces Wilson loops of different charge to live in orthogonal subspaces, exactly as above.
\end{example}

In both examples, the packaged entanglements exist due to the requirement that line operators must form a complete net representation (often net zero flux).

\subsection{Packaged States Associated with 2-Form Symmetry}

A 2-form symmetry in 3+1D (i.e., $d=4$) acts on codimension-2 surfaces, i.e., 2D surfaces in spacetime \cite{KapustinSeiberg2014,Gaiotto2015,Polchinski1995,Witten1995,Kitaev2006,Tachikawa2020,BanksSeiberg2011}.
Examples include Wilson surfaces in certain topological field theories or membrane excitations in higher-rank gauge theories.

\paragraph{(1) Basic Ideas.}
A 2-form symmetry can be envisioned as a topological current $J^{\mu_1\mu_2}$ (a 2-form current) whose charge is measured on codimension-2 surfaces.
In 3+1D, a closed 2D surface $\Sigma$ can detect the 2-form charge by
\[
Q(\Sigma) \;=\; \int_{\Sigma} \star\,J.
\]
Under a 2-form symmetry transformation, a 2D surface operator $U(\Sigma)$ may be multiplied by a phase or a more general group element.
In discrete cases (for example, $\mathbb{Z}_N$), each surface operator is in a 1D irrep of $\mathbb{Z}_N$.
If that irrep label is nontrivial, one cannot split the surface into parts with fractional labels.
This mirrors the packaging principle for 0-form or 1-form charges.

\begin{example}[2-Form Center Symmetry]
	In some 4D gauge theories, there is a 2-form center symmetry acting on surfaces that measure the flux of a higher-rank gauge field $B_{\mu\nu}$.
	This symmetry fixes the net sheet flux so that local operators cannot change part of the flux on a surface.
\end{example}

\paragraph{(2) Surfaces as 2-Form Charges.}
In 3+1D, a surface operator $\hat{\mathcal{S}}^\dagger(\Sigma)$ creates a 2D surface $\Sigma$.
It transforms in an irrep of the 2-form group.
If its irrep label is $\ell$, then one cannot break the surface into parts with different label.
By irrep indivisibility, a 2D surface is packaged with its full 2-form charge:
For discrete $\mathbb{Z}_N$ case, the surface operator might pick up a phase $\exp\bigl(2\pi i\,k/N\bigr)$ upon linking with certain defects.
The label $k \mod N$ cannot be split between partial surfaces.  
For continuous $\mathrm{U}(1)$ case, the operator might accumulate a continuous phase $\exp\bigl(i\,\alpha\,Q(\Sigma)\bigr)$.
In both cases, the whole surface is a single unit.
Just as a quark's color charge is packaged, a surface operator's flux label is packaged as an inseparable unit.

Similar to that of 1-form flux line charges, we cannot split 2-form charges, which is the direct analogue of the packaging principle for extended objects.

\begin{example}[3+1D Membrane/Surface Excitations]
	A topological membrane is a closed 2D surface carrying discrete/continuous flux of a 2-form field $B_{\mu\nu}$.  
	A Wilson surface generalizes a Wilson loop to a 2D operator with $\exp(\int B)$ for a 2-form field.  
	
	If the theory confines such surfaces (energy grows with area), then they cannot appear as free excitations.
	This leads to superselection for net 2-form flux much like flux tubes in 1-form symmetry.
\end{example}

\paragraph{(3) 2-Form Surface Operators in Detail.}
Consider a surface operator $\hat{\mathcal{S}}^\dagger(\Sigma)$ that inserts a 2D surface $\Sigma$.
It transforms in some irreps of the 2-form group:
\begin{enumerate}
	\item Discrete $\mathbb{Z}_N$:  
	The operator may pick up a phase $\exp(2\pi i\,k/N)$ if it links with a topological defect carrying $\mathbb{Z}_N$ 2-form charge.
	The label $k \mod N$ is fixed for the entire surface.
	
	\item Continuous $\mathrm{U}(1)$:  
	The operator may accumulate a continuous phase $\exp\bigl(i\alpha\,Q(\Sigma)\bigr)$ and the entire surface is a single block.
\end{enumerate}

This is similar to the Peter-Weyl projection for color or 1-form flux lines.
The 2D surfaces combine additively.
The irreducibility means that one cannot split a surface into pieces with different topological labels.

\paragraph{(4) Topological Sectors and 2-Form Superselection.}
If the 2-form symmetry is confining (energy scaling with surface area),
then a single nontrivial surface is not a free particle.
We usually need boundary excitations or additional surfaces to cancel out the net 2-form flux.
Thus, different total 2-form charges on large surfaces can form distinct superselection sectors.
\begin{itemize}
	\item 1-Form vs. 2-Form:  
	In 1-form symmetry, flux tubes \cite{Bali1995,Bali2001} exhibit an area or perimeter law.
	In 2-form symmetry, membranes can exhibit an area law that ensures they close on themselves or attach to lower-dimensional excitations.  
	
	\item Local Operators:  
	They cannot globally re-assign the 2-form flux carried by a large surface.
	Only a globally nontrivial transformation can change it.
	Locally, the 2-form charge is protected and results in superselection among 2D defect sectors.
\end{itemize}

Like 1-form winding around a torus, a 2-form symmetry in 3+1D can define flux $\int_{T^2} B_{\mu\nu}$ on a 2D cycle.
The fluxes in different flux sectors are not connectable and therefore are superselected.

\paragraph{(5) Mathematical Derivations.}
For discrete $\mathbb{Z}_N$ 2-form symmetry in 3+1D, we often introduces a 2-form gauge field $B_{\mu\nu}\in \mathbb{Z}_N$.

Then the action can be written as:  
\[
S = \frac{N}{2\pi}\,\sum_{\text{plaquettes }p}\; B_p \wedge d B_p 
\quad\text{(schematic)}.
\]
If a flux encloses a hole or boundary, the gauge transformations only shift $B_{\mu\nu}$ but cannot break the flux.

We can define a surface operator:  
\[
U(\Sigma) 
\;=\;
\exp\!\Bigl(i \,\sum_{p\subset \Sigma} B_p\Bigr),
\]
to label the discrete flux crossing $\Sigma$.
If $U(\Sigma)$ belongs to irrep $\ell\mod N$, then it picks up $\exp\bigl(2\pi i \,\ell / N\bigr)$ that links a topological defect.

The entire surface $\Sigma$ carries the same label $\ell$.
Splitting it to $\ell'\neq \ell$ and $\ell-\ell'$ would yield two distinct surface operators, not a single irrep.
This is forbidden on 2-form surfaces, which is similar to no partial color or no partial 1-form flux.

\paragraph{(6) 2-Form Packaged Entanglement.}
Surface operators with 2-form charges are packaged as a single block.
When several surfaces are present, the net 2-form flux is the sum (modulo the group law).
If physical conditions require net zero flux, then the surfaces must combine into an entangled state that cannot be factored into independent parts.

\begin{example}[Surfaces in Discrete 2-Form Groups]
	Take $\mathbb{Z}_N$ again, but now as a 2-form group.
	Each 2D surface operator $\hat{\mathcal{S}}^\dagger_{(m)}(\Sigma)$ transforms with a phase $\omega^m$.
	Suppose we have two surfaces with charges $m_1$ and $m_2$.
	The total system is:	
	\[
	\hat{\mathcal{S}}^\dagger_{(m_1)}(\Sigma_1)
	\;\hat{\mathcal{S}}^\dagger_{(m_2)}(\Sigma_2)
	\;\lvert 0\rangle
	\;\in\;\mathbf{R}_{m_1}\otimes \mathbf{R}_{m_2}.
	\]	
	Again, each factor is 1D, but the net 2-form flux is $(m_1+m_2)\bmod N$.
	If the theory confines 2-form flux (like an area law for membranes), only total flux 0 might appear as a stable or gauge-allowed boundary condition.
	Therefore, we project onto $m_1+m_2=0\mod N$.
	If you had one big surface carrying label $m$, then you cannot fractionate it into partial sub-surfaces with $m_1\neq m_2$ inside the same operator because that would break the irrep structure.
\end{example}

\begin{example}[Surfaces in Continuous 2-Form]
	One can also have a continuous 2-form gauge field $B_{\mu\nu}\in \mathrm{U}(1)$.
	A surface operator obtains a phase $\exp\bigl(i\,\alpha \cdot \Phi_{\Sigma}\bigr)$, where $\Phi_{\Sigma}$ is the total flux through $\Sigma$.
	Combining surfaces to form net flux zero can again entangle them if the boundary conditions or global constraints require neutrality.
\end{example}

Thus, the entire 2D surface remains a single block in a 2-form irrep label, and partial fractionation is disallowed.
If you have multiple surfaces, net flux neutrality can tie them together in a single wavefunction.

\subsection{Packaged States Associated with Higher $p$-Form Symmetries}
\label{SEC:HigherPForm}

So far we have focused on 0-form (pointlike) and 1-form/2-form (line and surface) symmetries.
All of the packaging logic extends directly to $p\ge3$.
We now give a concise, self-contained treatment of packaging for general $p$-form charges, gauging a $p$-form symmetry and its dual BF-type description, and non-invertible (fusion-categorical) $p$-form symmetries.

\subsubsection{Packaging for $p\ge3$}

In a $(d+1)$-dimensional theory a $p$-form symmetry acts on operators supported on $p$-dimensional submanifolds, with topological charge measured on closed $(d-p)$-manifolds.
In full generality:
\[
U_g(\Sigma_{d-p}) \;:\; \mathcal{H}\;\longrightarrow\;\mathcal{H},
\]
\[
U_g(\Sigma_{d-p})\,O^{(p)}(\mathcal M_p)\,U_g(\Sigma_{d-p})^{-1}
\;=\;
\rho(g)\,O^{(p)}(\mathcal M_p),
\]
where $\rho$ is an irrep of the $p$-form group $G^{(p)}$.
Exactly as for $p=0,1,2$, one cannot split $O^{(p)}$ into two nontrivial sub-operators carrying fractional flux.
Any attempt
\[
O^{(p)}\stackrel{?}{=}O_1^{(p)}\otimes O_2^{(p)}
\]
would induce a decomposition $\rho\cong\rho_1\otimes\rho_2$, contradicting irreducibility.
Thus, the full charge label is always packaged.

\subsubsection{Gauging a $p$-Form Symmetry and Duality}

Just as a 0-form symmetry is gauged by introducing a 1-form connection $A$ and imposing
$\delta(\text{Gauss law})=\int D\alpha\,e^{i\!\int \alpha\,\hat G}$, a $p$-form symmetry is gauged by coupling to a $(p+1)$-form gauge field $B^{(p+1)}$.
Equivalently, one inserts the projector
\[
P_Q
\;=\;
\int\!\mathcal D\!\alpha^{(d-p-1)}\;
\exp\!\Bigl[i\!\int \alpha^{(d-p-1)}\wedge J^{(p+1)}\Bigr],
\]
where $J^{(p+1)}$ is the conserved $(p+1)$-form current and $\alpha^{(d-p-1)}$ enforces ``net $p$-form charge = 0''.

In the path integral this leads to the familiar BF-type dual action.
For instance, gauging a 1-form $\mathrm{U}(1)$ in $d$ dimensions yields
\[
S[a,B]
\;=\;
\frac{i}{2\pi}\int B^{(d-2)}\wedge da
\;+\;S_{\rm matter}[a],
\]
where $a$ is the original 1-form gauge field and $B$ the dual $(d-2)$-form.
The Wilson surfaces of $B$ become the dual extended defects (e.g., ’t Hooft lines), each remaining an indecomposable package under the gauged symmetry.

\subsubsection{Non-Invertible (Categorical) $p$-Form Symmetries}

In many topological theories, the extended operators no longer form an ordinary group but instead generate a fusion category $\mathcal C$ \cite{Etingof2005}.
Specifically:
\begin{itemize}
	\item A finite collection of simple objects $\{X_i\}$.
	
	\item Fusion rules
	\[
	X_i\otimes X_j \;\cong\;\bigoplus_k N_{ij}^k\,X_k,
	\quad
	N_{ij}^k\in\mathbb N.
	\]
	
	\item Quantum dimensions $d_i$ satisfying	$\sum_jN_{ij}^k\,d_j=d_i\,d_k$,
	and total dimension $\mathcal D=\sqrt{\sum_i d_i^2}$.
\end{itemize}

Exactly as for group representations, one can isolate each simple object $X_i$ by an idempotent Verlinde-like projector:
\[
P_i
\;=\;
\frac{d_i}{\mathcal D^2}\,
\sum_j S_{ij}^*\,X_j,
\]
where $S_{ij}$ is the modular $S$-matrix of $\mathcal C$.
One checks
\[
P_i\otimes P_i = P_i,
\quad
P_i\otimes P_j = 0\quad(i\neq j),
\]
so each $X_i$ is an indecomposable package in the same sense as before.

\begin{example}[3d Ising TQFT]
	The non-invertible line $\sigma$ obeys $\sigma\otimes\sigma=\mathbf1\oplus\psi$.
	Its categorical projector $P_\sigma$ isolates the simple object $\sigma$ just as above.
\end{example}

\subsection{Mixed Objects and Symmetries}

In realistic gauge theories, multiple forms of charges (0-form, 1-form, 2-form, etc.) often coexist \cite{Cordova2019}.
For instance,
pointlike excitations with 0-form charges (quarks or electrons),  
line operators with 1-form flux (Wilson loops or vortex lines),  
and surface operators with 2-form charges (membranes or domain walls in 3+1D).
A natural question is how these distinct charges combine into a single physical state and why cannot be partially separated.

\paragraph{(1) Constructing a Mixed State.}
Suppose we have a local operator $\hat{a}^\dagger_{\alpha}$ (0-form), a line operator $\hat{\mathcal{W}}^\dagger_{\gamma}$ (1-form), and a surface operator $\hat{\mathcal{S}}^\dagger_{\delta}$ (2-form).
A general state may be written as:
\[
\hat{a}^\dagger_{\alpha_1}(\mathbf{x}_1)
\,\hat{a}^\dagger_{\alpha_2}(\mathbf{x}_2)
\,\hat{\mathcal{W}}^\dagger_{\gamma_1}(\mathrm{loop}_1)
\,\hat{\mathcal{S}}^\dagger_{\delta_1}(\mathrm{surface}_1)
\,\bigl\lvert 0 \bigr\rangle,
\]
where
each $\hat{a}^\dagger_{\alpha_i}$ carries a 0-form irrep $\mathbf{R}_{\alpha_i}$ (for example, color $\mathbf{3}$ or electric charge $\pm1$),
$\hat{\mathcal{W}}^\dagger_{\gamma}$ is a line operator in a 1-form irrep $\mathbf{R}_\gamma$ (for example, center $\mathbb{Z}_N$),  
and $\hat{\mathcal{S}}^\dagger_{\delta}$ is a surface operator in a 2-form irrep $\mathbf{R}_\delta$.

This state may represent a hadron that encloses a flux loop or domain wall.
Each part is a complete irreducible block:
the local charges (0-form quarks) must neutralize color,
the line operator (1-form flux) might close into a loop or attach to sources,
the surface operator (2-form) might represent a discrete membrane or domain.

\paragraph{(2) Product of irreps.}
Let $\mathcal{G}_{0}$ be the 0-form gauge group (for example, $\mathrm{SU}(3)$ color),
$\mathcal{G}_{1}$ be the 1-form group (for example, center symmetry $\mathbb{Z}_N$),
and $\mathcal{G}_{2}$ be the 2-form group (if applicable).
The full symmetry group may be written as
\[
\mathcal{G}
\;=\;
\mathcal{G}_0 \;\times\; \mathcal{G}_1 \;\times\; \mathcal{G}_2,
\]

If you pick four excitations labeled $\alpha, \beta, \gamma, \delta$,
then the combined color/flavor/flux wavefunction transforms in the tensor product
\[
\mathbf{R}_\alpha \otimes \mathbf{R}_\beta \otimes \mathbf{R}_\gamma \otimes \mathbf{R}_\delta.
\]

Physical states must either project onto the neutral representation in each sector (for example, a color singlet or net zero flux) or meet boundary conditions that allow a nonzero net charge.
This projection step creates an entangled wavefunction across local and extended parts.

\paragraph{(3) No Partial Factorization.}
The packaging principle requires that one cannot split a single irrep of the 1-form or 2-form group.
For example, a line operator $\hat{\mathcal{W}}^\dagger_{\gamma}$ with label $\gamma$ cannot be divided into two parts unless $\hat{\mathcal{W}}^\dagger_{\gamma}$ includes two separate operators.
The same is true for a surface operator.

Let us now apply the above ideas to the extended version of color quarks with the following example. 

A flux line with an irrep $\gamma$ or a membrane with an irrep $\delta$ is locked as a single block.

\begin{example}
	Mixed Objects:
	
	\begin{enumerate}
		\item Hadron + Flux Tube:  
		Consider two quarks in color tensor product $\mathbf{3}\times \mathbf{3}$ with a line operator that carries a center flux from a $\mathbf{Z}_N$ group.
		If the system is a color singlet $\mathbf{1}$ and has net zero flux, then the wavefunction will mix the color and flux DOFs in an entangled manner.
		No partial splitting of color or flux is allowed.  
		
		\item Vortex Lines in 2D:  
		In 2D, a vortex line is pointlike. Similarly, one cannot split the vortex line because it forms an irrep of the discrete 1-form symmetry group.  
		
		\item Membrane in 3+1D:  
		A 2-form surface operator with irrep labeled by $\ell \in \{0,\dots,N-1\}$ must appear as a single block.
		When applying this surface operator to a local excitations, the total packaged state must have net zero 2-form charge.
	\end{enumerate}
\end{example}

\paragraph{(4) Superselection Sectors of Mixed Objects.}
A superselection sector of mixed objects is a subspace of the Hilbert space labeled by a charge (0-form, 1-form, or 2-form) that cannot be changed by local operators.

If a gauge theory confines color (0-form) or flux (1-form, 2-form),
then non-singlet (or non-neutral) states exist only as internal excitations or short-range bound states.
They do not appear as free states.
In toroidal lattices or nontrivial manifolds, the net flux/winding on non-contractible loops or surfaces is discrete.
Local moves cannot reassign that global flux, so each sector is superselected.

Thus, states with different net 1-form or 2-form charges are disjoint, much as color singlet and nonsinglet sectors are for 0-form charges.
This extends to all forms simultaneously:
each sector for 0-form, 1-form, 2-form, etc. must satisfy the neutrality or boundary condition.

\paragraph{(5) Mixed ’t Hooft anomalies between 0-form and 1-form symmetries.}
If a 0-form and a 1-form symmetry have a mixed ’t Hooft anomaly, then certain combined superselection sectors cannot be gauged independently.
The projector onto the anomaly-free combination must include a compensating topological term in the path integral, but the underlying packaging principle (no partial factorization of any irreducible block) still holds.

\paragraph{(6) Mathematical Illustrations.}

\begin{enumerate}
	\item Tensor Product of Multiple Sectors:  
	Let $\mathbf{R}_0$ be the 0-form color irrep, 
	$\mathbf{R}_1$ be the 1-form flux irrep, 
	and $\mathbf{R}_2$ be the 2-form surface irrep, etc.
	A combined multiparticle state belongs to	
	\[
	\mathbf{R}_{0}\otimes\mathbf{R}_{1}\otimes\mathbf{R}_{2}\,\otimes\cdots.
	\]
	If the theory requires net neutrality in each sector, we project onto the trivial subrep in each sector.
	This step usually produces an entangled wavefunction across color, flux lines, and surfaces.
	
	\item Projector onto Neutral/Trivial Representations:  
	In each sector, one defines
	\[
	P_{\mathrm{trivial}} \;=\; \frac{1}{\vert G_{\mathrm{form}}\vert }\sum_{g\in G_{\mathrm{form}}} U_g,
	\]
	where $U_g$ implements the gauge transformation or global symmetry transformation in that sector.
	Only states with net charge 0 survive.
	Attempting to remain in a single subrep also enforces packaging: no partial irrep mixing.
	
	\item Local versus Global Operations:  
	Local operators act on small regions and cannot change the global charge or flux that runs around noncontractible cycles.
	Only large gauge transformations can change the sector.
\end{enumerate}

Thus, the extended packaging principle emerges mathematically:
the irrep structure in each form group forbids partial fractionation, and the superselection or confinement conditions enforce that physically observed states are net neutral across all forms.

\paragraph{(7) Packaged Entanglement of Mixed Objects.}
Because each local or extended operator is an irrep block, the full wavefunction must lie in a single overall representation.
This requirement produces nontrivial packaged entanglement among 0-form charges, 1-form flux, or 2-form flux.
For example, a hadron may contain both color charges (quarks) and a flux loop or domain wall.
The flux and local charges become interwoven, and the state cannot be factored into separate parts.
In this way, line operators can bind with local excitations to form a gauge-invariant whole.
Such mixed states show that the packaging principle applies to both local and extended
charges.

Now consider a gauge theory with 0-form (color/electric) charges, plus 1-form flux lines, plus 2-form surfaces in 3+1D.
A general multi-object state may be written as:
\[
\bigl(\hat{a}^\dagger_{\alpha_1}\hat{a}^\dagger_{\alpha_2}\cdots\bigr)_{\text{0-form}}
\;\times\;
\bigl(\hat{\mathcal{W}}^\dagger_{\gamma_1}\hat{\mathcal{W}}^\dagger_{\gamma_2}\cdots\bigr)_{\text{1-form}}
\;\times\;
\bigl(\hat{\mathcal{S}}^\dagger_{\delta_1}\hat{\mathcal{S}}^\dagger_{\delta_2}\cdots\bigr)_{\text{2-form}}
\;\lvert 0\rangle,
\]
where each local or extended operator is an irrep block under its respective group.  
The total state then belongs to
\[
\mathbf{R}_{\alpha_1} \otimes \cdots \;\otimes\; \mathbf{R}_{\gamma_1}\otimes\cdots
\;\otimes\; \mathbf{R}_{\delta_1}\otimes \cdots.
\]

Physical states often require net neutrality in each form:
\begin{itemize}
	\item 0-form: color singlet $\mathbf{1}$, or electric net zero charge.  
	
	\item 1-form: total net zero charge or consistent with boundary conditions.  
	
	\item 2-form: net surface net zero charge or allowed by boundary data.
\end{itemize}

Mathematically, one introduces projectors onto the trivial representation in each sector.
For example, one can applies a projector
\[
P^{(0\text{-form})}_{\mathrm{singlet}} \; \wedge\;
P^{(1\text{-form})}_{\mathrm{net}=0} \; \wedge\;
P^{(2\text{-form})}_{\mathrm{net}=0}
\]
to the entire multi-object Hilbert space.
Only states that pass all three projectors remain as physically valid states.

Because these conditions tie together the local charges (0-form) and the extended fluxes (1-form or 2-form), the resulting state can be entangled across different forms of charge.
For example, if a quark system tries to neutralize color while a line operator tries to neutralize flux, the constraints can force an overall superposition that correlates quark color states with flux line states.

\begin{example}[Example: Baryon + Flux Tube + Membrane]
	Suppose we have 3 quarks in color
	$\mathbf{3} \times \mathbf{3} \times \mathbf{3}$.
	A line operator with center flux label $\gamma$ and a surface operator with some 2-form label $\delta$.	
	The entire system is in  
	\[
	\mathbf{3} \otimes \mathbf{3} \otimes \mathbf{3}
	\;\otimes\;
	\mathbf{R}_\gamma\;\otimes\;\mathbf{R}_\delta.
	\]
	
	We require color neutrality $\mathbf{1} \subset \mathbf{3}^{\otimes 3}$, plus 1-form net zero charge, plus 2-form net zero charge.
	Applying projectors, we obtain a physical subspace that is usually a small dimensional space of superpositions.
	We correlate color states with the line and surface flux states to ensure net $(0, 0, 0)$ in each form.
	Finally, we obtain a packaged entangled state that cannot be factored into separate ``color $\times$ flux $\times$ surface'' parts.
\end{example}

\subsection{Extended Packaging Principle}

Based on the original packaging principle defined for pointlike charges (0-form) \cite{MaSPI2025}, we now extend it to higher-form charges:

\begin{definition}[Extended Packaging Principle]
	Let a gauge theory include higher-form symmetries (1-form, 2-form, or in general $p$-form).
	Under these symmetries, extended objects (lines, surfaces, etc.) carry flux or charge labels in irreps of the $p$-form group.
	Also, superselection rules prevent coherent mixing of different net flux (or winding) sectors. Then:
	\begin{enumerate}
		\item \textbf{No Partial Factorization of $p$-Form Charges:}  
		Each creation or annihilation operator for an extended excitation (for example, a line with 1-form charge $\gamma$) must transform as a complete irrep of the $p$-form group.
		One cannot split or partly factor the flux label $\gamma$.
		In other words, no single extended object can be divided into fractional flux parts.
		
		\item \textbf{Net-Flux Superselection:}  
		Physical states cannot mix different net flux or winding sectors.
		All superpositions must lie in a single net-flux sector.
		The total flux around noncontractible loops or surfaces is fixed.
		This prevents interference between different sectors.
		
		\item \textbf{Flux Packaged Entanglement:}  
		Within a fixed flux sector, multiple extended excitations (lines, surfaces) can form a nonseparable superposition.
		We call this flux packaged entanglement because it uses irreps of the $p$-form group in a way that forbids partial flux splitting.
		One cannot entangle part of the flux while leaving the rest separate, just as color or electric charge is locked together in the 0-form case.
		
		\item \textbf{Hybridization with 0-Form or External DOFs:}  
		In theories that also have ordinary gauge charges (0-form) or external DOFs (such as spin or momentum), one can form hybrid states that mix local particles with extended excitations.
		Measuring an external or 0-form degree of freedom can collapse the entire extended flux wavefunction while preserving the net flux.
		Similarly, internal 0-form charges can be entangled with 1-form or 2-form fluxes, each subject to its own superselection rule.
	\end{enumerate}
\end{definition}

The extended packaging principle asserts that all charges and fluxes remain as inseparable, irreducible blocks.
It doesn't matter whether they are 0-form (local) charges like a color charge in $\mathbf{3}$, 1-form charges such as a flux line with label $\gamma$, or even higher-form charges.
In each sector, the corresponding charge is fully packaged and cannot be broken into smaller pieces.
Moreover, net flux or winding rules impose superselection so that states in different topological sectors do not mix.
When these sectors combine, the full state is given by the tensor product of these packaged irreducible blocks.
This ensures that one cannot peel off a fraction of a charge from any single sector.

\section{Representation-Theoretic Invariants and Entanglement Measures}
\label{SEC:RepresentationTheoreticInvariantsAndEntanglementMeasures}

In previous sections, we constructed packaged states by decomposing the Hilbert space into charge sectors corresponding to irreps of the gauge group $G$ \cite{DHR1971,DHR1974,StreaterWightman2001}.
This construction reveals the structure of IQNs and how they are inseparably packaged within each single-particle operator.
However, this construction alone does not quantify the degree of entanglement among these internal DOFs.
In gauge theories, additional constraints arise from superselection rules and the requirement that physical observables be gauge-invariant.
As a result, standard entanglement measures (such as concurrence or negativity) must be re-defined so they act inside each fixed superselection sector \cite{Peres1996,Barnum2004,BartlettRudolphSpekkens2006,BuerschaperAguado2009}.

In this section, we develop representation-theoretic invariants that serve as entanglement measures tailored to these packaged states \cite{Wootters1998,VidalTarrach1999,VidalCirac2000,Rains2000}.
These invariants are constructed so that they remain unchanged under the action of the gauge group $G$ and respect the superselection rules.
In particular, we focus on invariants obtained by projecting onto singlet subspaces or by constructing gauge-invariant combinations of the state coefficients.
This analysis complements the earlier construction by providing quantitative tools to assess entanglement in the presence of gauge constraints.

\subsection{Packaged Entanglements Associated with Group Invariants}

Let $|\Psi\rangle\in\mathcal H_Q$ be a state in a fixed-charge (or color) sector.
In gauge theories, only states within a single irreducible charge sector are physically allowed, and all observables must commute with the gauge transformations.
Consequently, any entanglement measure must be constructed from gauge-invariant quantities.
Here we describe two complementary approaches.

\paragraph{(1) Invariants from Projection onto Singlet Subspaces.}
Suppose that the representation space $V$ in which $|\Psi\rangle$ lives can be decomposed as 
\[
V \cong V_{\mathbf{1}} \oplus V_{\text{rest}},
\]
where $V_{\mathbf{1}}$ is the singlet subspace.
Then one may define the gauge-invariant quantity
\[
p_{\mathbf{1}} = \langle \Psi| \hat{\mathcal{P}}_{\mathbf{1}}|\Psi\rangle,
\]
where $\hat{\mathcal{P}}_{\mathbf{1}}$ is the projection operator onto $V_{\mathbf{1}}$.
By construction, $p_{\mathbf{1}}$ is invariant under $G$ and measures the singlet (i.e., fully packaged) component of $|\Psi\rangle$.
In confining theories, only states with $p_{\mathbf{1}} = 1$ may appear as free particles.

\paragraph{(2) Gauge-Invariant Partial Transpose and Concurrence.}
In conventional quantum information, the partial transpose (which defines the negativity) or the spin-flip construction (which yields concurrence) is a common measure of entanglement.
However, when the Hilbert space carries gauge indices, the usual partial-transpose may map one outside the gauge-invariant subspace.
One strategy is to first project the state onto a gauge-invariant subspace and then perform the partial transpose on the remaining (external) DOFs. Specifically, let 
\[
\rho = |\Psi\rangle\langle\Psi|
\]
be the density matrix of the state, and let $\hat P_{\mathbf1}$ be the singlet projector.
Then set
\[
\tilde\rho \;=\;\hat P_{\mathbf1}\,\rho\,\hat P_{\mathbf1}. 
\]
The partial transpose is taken on the DOFs that are not associated with the gauge symmetry.
This ensures that the resulting negativity (or other measure) is gauge-invariant.

\paragraph{Detailed Derivation:}
Let $\mathcal H_Q$ be a fixed total charge sector.
For a state $|\Psi\rangle$ in $\mathcal{H}_Q$, assume we have expanded it in a basis $\{|\Psi^{(i)}\rangle\}$ that is adapted to the decomposition
\[
\mathcal{H}_Q \cong V_{\mathbf{1}} \oplus V_{\text{rest}}.
\]
The projection operator $\hat{\mathcal{P}}_{\mathbf{1}}$ is constructed from the invariant tensors (or by the projection methods discussed earlier) so that for any $g\in G$,
\[
U(g)\,\hat{\mathcal{P}}_{\mathbf{1}}\,U(g)^{-1} = \hat{\mathcal{P}}_{\mathbf{1}}.
\]
Then the invariant quantity 
\[
P_{\mathbf{1}} = \langle\Psi|\hat{\mathcal{P}}_{\mathbf{1}}|\Psi\rangle
\]
remains constant under the gauge transformation, i.e.,
\[
P_{\mathbf{1}} = \langle\Psi|U(g)^{-1}\,\hat{\mathcal{P}}_{\mathbf{1}}\,U(g)|\Psi\rangle.
\]
This is a clear, representation-theoretic invariant that quantifies the packaged (singlet) content of $|\Psi\rangle$.

\subsection{Relating Superselection Rules to the Center of the Group}

For non-Abelian gauge groups, the center $Z(G)$ plays a crucial role in enforcing superselection rules.
For example, in $\mathrm{SU}(N)$ the center is $\mathbb{Z}_N$.
States that transform nontrivially under $Z(G)$ cannot be coherently superposed with states invariant under $Z(G)$.

\paragraph{(1) Center Charges and Superselection.}  
If a state $|\Psi\rangle$ satisfies
\[
U(z)|\Psi\rangle = \chi(z)|\Psi\rangle \quad \forall z\in Z(G),
\]
then the character $\chi(z)$ labels a superselection sector. In a confining theory, only states with trivial center charge (e.g., the color singlet in QCD) are observed as free states.

\paragraph{(2) Invariant Quantities from Center Charges.}  
We define the singlet overlap
\[
p_{Z=1} \;=\;
\langle\Psi|\;\hat{\mathcal P}_{Z=1}\;|\Psi\rangle,
\]
which lies in $[0,1]$ and is gauge-invariant.
Here $\hat{\mathcal{P}}_{Z=1}$ projects onto the sector with trivial center charge.
Such invariants ensure that any entanglement measure is computed within a single superselection sector, preserving gauge-invariance.

\subsection{Representation‐Theoretic and Group‐Invariant Entanglement Measures}

The isotypic decomposition 
$
\mathcal H = \bigoplus_{Q\in\widehat G}V_Q\otimes\mathcal M_Q
$
makes it natural to quantify entanglement only through gauge-invariant functionals.

In this section, we introduce a hierarchy of entanglement measures that are manifestly invariant under the action of the gauge group $G$.
These measures detect exactly the departure from a product of irreps (no packaged entanglement) and remain well‐defined in the presence of superselection constraints.

\subsubsection*{1. Singlet Overlap}

We begin with the simplest gauge‐invariant quantity:
the overlap with the singlet sector.

\begin{definition}[Gauge‐Invariant Singlet Overlap]
	Let $\mathcal H$ carry a decomposition into irreducible $G$‐sectors including the singlet $\mathbf1$.
	Denote by $\mathcal P_{\mathbf1}$ the orthogonal projector onto the singlet subspace.
	For any normalized $\ket{\Psi}\in\mathcal H$, define
	\[
	P_{\mathbf1}\;=\;\bra{\Psi}\,\mathcal P_{\mathbf1}\,\ket{\Psi}\,.
	\]
	We call $P_{\mathbf1}$ the \textbf{gauge‐invariant singlet overlap}.
	In particular:
	\begin{itemize}
		\item $P_{\mathbf1}=1$ if and only if $\ket{\Psi}$ lies entirely in the singlet sector.
		
		\item $P_{\mathbf1}=0$ if and only if $\ket{\Psi}$ has no singlet component.
	\end{itemize}
\end{definition}

\begin{example}[Color Singlet Overlap in QCD]
	In QCD a quark-antiquark pair transforms in 
	$\mathbf3\otimes\overline{\mathbf3}=\mathbf1\oplus\mathbf8$.  
	A general meson state can be expanded as
	\[
	\ket{\Psi_M}
	\;=\;
	\alpha\,\ket{\mathbf1}
	\;+\;
	\sum_{a=1}^{8}\beta_{a}\,\ket{\mathbf8,a},
	\quad
	|\alpha|^2+\sum_a|\beta_a|^2=1.
	\]
	The singlet overlap
	$\;P_{\mathbf1}=|\alpha|^2$
	measures the exact probability that $\ket{\Psi_M}$ is a color singlet.
\end{example}

\subsubsection*{2. Gauge‐Invariant Logarithmic Negativity \cite{VidalWerner2002}}

Next we extend the logarithmic negativity to enforce gauge invariance before taking partial transposes.

\begin{definition}[Gauge‐Invariant Logarithmic Negativity]
	Let $\mathcal H=\mathcal H_A\otimes\mathcal H_B$, and let $\mathcal P_{\mathbf1}$ be the global singlet projector on $\mathcal H$.  For any density operator $\rho$ on $\mathcal H$, define the normalized projection
	\[
	\rho_{\rm GI}
	:=\frac{\mathcal P_{\mathbf1}\,\rho\,\mathcal P_{\mathbf1}}
	{\operatorname{Tr}(\mathcal P_{\mathbf1}\,\rho)},
	\]
	and let $(\cdot)^{T_B}$ denote the partial transpose on the $B$ factor, taken in a basis that carries no gauge indices.  Then the \textbf{gauge‐invariant logarithmic negativity} is
	\[
	\mathcal N_G(\rho)
	:=\ln\bigl\|\rho_{\rm GI}^{\,T_B}\bigr\|_1,
	\]
	where $\|\cdot\|_1$ is the trace norm.
\end{definition}

\begin{remark}
	\leavevmode
	\begin{itemize}
		\item If $G$ is trivial, $\mathcal N_G$ reduces to the usual logarithmic negativity.
		\item $\mathcal N_G(\rho)=0$ exactly when $\rho_{\rm GI}^{T_B}$ is positive semidefinite, i.e., the gauge‐invariant state is PPT and thus unentangled across the $A|B$ split.
	\end{itemize}
\end{remark}

\begin{example}[Bipartite gauge‐invariant negativity]
	In a bipartite system where each party carries gauge and external DOFs, a naive partial transpose may violate gauge invariance.
	Instead one first projects $\rho$ onto the singlet subspace, and then transposes only the external indices.
	The resulting $\mathcal N_G$ is a measure of the physically accessible entanglement.
\end{example}

\subsubsection*{3. Relative Entropy of Entanglement \cite{VedralPlenio1998,Umegaki1962}}

We now introduce a distance‐based measure that compares $\ket{\Psi}\bra{\Psi}$ to the nearest separable, gauge‐invariant state.

\begin{definition}[Relative Entropy of Entanglement]
	Let $\mathcal H_Q$ carry fixed charge $Q$ under compact $G$.  Define
	\[
	\mathscr S
	:=
	\bigl\{
	\sigma\in\mathcal B(\mathcal H_Q)\,\big|\,
	\sigma\ge0,\;\operatorname{Tr}\sigma=1,\;
	[\sigma,U(g)]=0~\forall g\in G,\;
	\sigma\text{ separable}
	\bigr\}.
	\]
	For a pure $\ket\Psi\in\mathcal H_Q$, its \textbf{relative entropy of entanglement} is
	\[
	E_R(\Psi)
	:=\inf_{\sigma\in\mathscr S}
	S\bigl(\ket\Psi\bra\Psi\big\|\sigma\bigr),
	\]
	where $S(\rho\|\sigma)=\operatorname{Tr}[\rho(\ln\rho-\ln\sigma)]$.
	One shows:
	\begin{enumerate}
		\item $E_R(\Psi)\ge0$ and is finite.
		\item $E_R(\Psi)=0$ precisely when $\ket\Psi\bra\Psi\in\mathscr S$, i.e., $\ket\Psi$ is a product of irreducible blocks.
		\item $E_R$ is non‐increasing under any CPTP map commuting with $G$.
	\end{enumerate}
\end{definition}

Vedral and Plenio \cite{VedralPlenio1998} prove that the minimizing $\sigma$ is the block‐diagonal projection of $\ket\Psi\bra\Psi$ onto the maximal abelian subalgebra generated by the $G$-Casimirs.

\subsubsection*{4. Schur‐Concave Entropy of the Charge‐Sector Distribution}

After projecting a pure state $\ket\Psi$ onto the irreducible $G$-sectors, one obtains a probability vector of sector weights,
\[
p_R \;=\;\bra\Psi\,P_R\,\ket\Psi,
\qquad
\sum_{R\in\widehat G}p_R=1,
\]
where $\{P_R\}_{R\in\widehat G}$ are the orthogonal projectors onto each irrep $R$.

\begin{definition}[Charge‐sector entropy]
	The \textbf{charge‐sector entropy} of $\ket\Psi$ with respect to a Schur‐concave function $f$ (for example, the Shannon entropy or a Rényi entropy) is
	\[
	E_f(\Psi)
	\;=\;
	f\bigl(p_R\bigr)_{R\in\widehat G}\,.
	\]
	In particular, for the Shannon choice $f(p)=-\sum_R p_R\ln p_R$, one writes
	\[
	H_{\rm sector}(\Psi)
	\;=\;
	-\sum_{R\in\widehat G} p_R\,\ln p_R\,.
	\]
	This quantity measures the spread of $\ket\Psi$ over different charge sectors:  
	\[
	E_f(\Psi)=0
	\quad\Longleftrightarrow\quad
	\exists\,R_0:\;p_{R_0}=1,
	\]
	i.e., $\ket\Psi$ lies entirely in a single irreducible block.
\end{definition}

\subsubsection*{5. Rényi Entropies of the Multiplicity Space}

Finally, within a fixed irrep $R$, one can quantify entanglement between the irrep space $V_R$ and its multiplicity $\mathcal M_R$.

\begin{definition}[Multiplicity‐Space Rényi Entropy]
	Decompose $\mathcal H_R\cong V_R\otimes\mathcal M_R$.
	For $\ket\Psi\in\mathcal H_R$, let $\rho=\ket\Psi\bra\Psi$ and
	$\rho_{\mathcal M}=\operatorname{Tr}_{V_R}\,\rho$.
	For $\alpha>0$, $\alpha\neq1$, we define the \textbf{Rényi entropy} as
	\[
	S_\alpha(\rho_{\mathcal M})
	:=\frac{1}{1-\alpha}\,\ln\operatorname{Tr}\bigl[\rho_{\mathcal M}^\alpha\bigr].
	\]
	Then $S_\alpha\ge0$, with equality iff $\rho_{\mathcal M}$ is pure, and $\lim_{\alpha\to1}S_\alpha$ yields the von Neumann entropy.
\end{definition}

\begin{example}[$\mathrm{SU}(2)$ Doublets]
	Let $\mathcal H=V_{1/2}\otimes V_{1/2}\cong V_1\oplus V_0$.  Writing
	$\ket\Psi=\alpha\ket{1,1}+\beta\ket{1,0}+\gamma\ket{1,-1}+\delta\ket{0,0}$,
	one computes:
	\begin{itemize}
		\item $P_{\mathbf1}=|\delta|^2$.
		\item $H(p)=-(1-|\delta|^2)\ln(1-|\delta|^2)-|\delta|^2\ln|\delta|^2$.
		\item Reduced density $\rho_1=\operatorname{Tr}_2\ket\Psi\bra\Psi$ has eigenvalues 
		$\lambda_\pm=\tfrac12\bigl(1\pm\sqrt{1-4|\delta|^2(1-|\delta|^2)}\bigr)$.
		\item $S(\rho_1)=-\sum_{\pm}\lambda_\pm\ln\lambda_\pm$, which vanishes exactly when $|\delta|=0$ or $|\delta|=1$.
	\end{itemize}
\end{example}

\section{Extending to Full Spacetime Symmetry: $G \times \mathrm{Poincaré}$}
\label{SEC:ExtendingToFullSpacetimeSymmetry}

Thus far, we have classified packaged quantum states by decomposing the Hilbert space into sectors corresponding to irreps of a gauge group $G$ (finite or compact).
In a full quantum field theory, however, the full symmetry is enlarged to the direct product $G \times \mathrm{Poincaré}$.

In 1939, Wigner \cite{Wigner1939} laid the groundwork for understanding spacetime symmetries in quantum mechanics by classifying the unitary irreps of the Poincaré group.
Later Lochlainn O’Raifeartaigh \cite{ORaifeartaigh1965} demonstrated the incompatibility of nontrivial mixing between internal and spacetime symmetries in the context of Lie algebras.
In 1967, Coleman and Mandula (Coleman-Mandula theorem) \cite{ColemanMandula1967} proved that the symmetry group of a relativistic QFT with a mass gap and analytic S-matrix must factorize as $ G \times \text{Poincaré} $.
This rules out nontrivial mixing of spacetime and internal symmetries in the context of Lie algebras.
In 1975, Haag, Lopuszanski and Sohnius \cite{HLS1975} extended the result by introducing fermionic generators, giving rise to the graded (super-Poincaré) algebra that unifies spacetime and internal symmetries.

In this section we extend our packaging construction to the full
$G \times \mathrm{Poincaré}$ group, and show that the packaging principle persists when combining gauge and Lorentz/Poincaré representations.
In particular, we address the following questions:
\begin{itemize}
	\item How does one construct single-particle states that are simultaneously in an irrep of both $G$ and the Poincaré group?
	
	\item How are the internal (gauge) DOFs packaged together with spacetime properties (such as momentum and spin/helicity) in a single creation operator?
	
	\item How does one consistently combine these to form multiparticle states that respect both gauge-invariance and Poincaré invariance?
\end{itemize}

We now present the detailed derivations.

\subsection{Combining Gauge and Lorentz Representations}

\paragraph{(1) Single-particle states as irreps of $G \times \mathrm{Poincaré}$.}
In a relativistic quantum field theory, elementary particles are classified by irreps of the full symmetry group
\[
\mathcal{G} = G \times \mathrm{Poincaré}.
\]
According to Wigner’s classification, an elementary particle state is labeled by its four-momentum $p^\mu$ (with $p^2 = m^2$) and by its IQNs (such as spin or helicity) arising from the little group (a subgroup of the Lorentz group).
When gauge DOFs are present, the gauge action becomes
\[
D^{(G)}(g) \otimes D^{(\mathrm{Lorentz})}(\Lambda)
\]
where $D^{(G)}(g)$ is an irrep of the gauge group $G$ (e.g., $\mathbf{3}$ for quarks) and $D^{(\mathrm{Lorentz})}(\Lambda)$ is the representation of the Lorentz transformation $\Lambda$ (or its induced representation of the Poincaré group).
Each field operator $\hat{\psi}(x)$ transforms in a tensor product representation:
\[
\hat{\psi}(x)\to D(\Lambda)\,\hat\psi(\Lambda^{-1}x)\,D(g),
\]
For example, for a massive particle the state may be labeled as
\[
|p,\sigma\rangle_\alpha,
\]
with $p^\mu$ on-shell, $\sigma$ denoting the spin (or helicity) label, and $\alpha$ denoting the gauge irrep index (or the gauge irreducible block).
The creation operator is then written as
\[
\hat{a}^\dagger_{\alpha,\sigma}(\mathbf{p}),
\]
and under a transformation $g \in G$ and a Lorentz transformation $\Lambda$,
\[
U(g,\Lambda) \, \hat{a}^\dagger_{\alpha,\sigma}(\mathbf{p})\, U(g,\Lambda)^{-1} 
= \sum_{\beta,\sigma'} \left[D^{(G)}_{\beta\alpha}(g) \otimes D^{(\mathrm{Lorentz})}_{\sigma'\sigma}(\Lambda)\right]\, \hat{a}^\dagger_{\beta,\sigma'}(\Lambda \mathbf{p}).
\]
Because the representation is a tensor product of two irreps, the entire creation operator is a packaged state of both gauge and spacetime (Lorentz) quantum numbers. The packaging principle implies that within a single creation operator the gauge quantum numbers cannot be split off from the Lorentz ones.

\paragraph{(2) Mathematical Derivation of the Combined Representation.}
Let $V_G$ be the representation space for the gauge group $G$ and $V_L$ be the representation space for the Lorentz (or little) group.
Then the full single-particle Hilbert space is
\[
\mathcal{H}_1 \cong V_G \otimes V_L,
\]
which carries the tensor product of an irrep of $G$ and an irrep of the Lorentz (little) group.
By complete reducibility, every single-particle state is an inseparable tensor product of its gauge part and its Lorentz part.
In other words, there is no meaningful way to factor out a subset of the gauge quantum numbers from a state without affecting its Lorentz transformation properties.

\paragraph{(3) Consequences for Superselection and Partial Factorization.}
Since the internal gauge charge and the Lorentz quantum numbers are combined into a single irreducible block, we have:
\begin{enumerate}
	\item \textbf{No Partial Factorization:}  
	A single creation operator $\hat{a}^\dagger_{\alpha,\sigma}(\mathbf{p})$ is irreducible with respect to the full group.
	One cannot, for example, separate half of the color charge from the momentum or spin part.
	
	\item \textbf{Superselection Rules:}  
	Since the total state lies in a definite irreducible sector of $G \times \mathrm{Poincaré}$, it automatically obeys superselection rules.
	In a local gauge theory, this implies that the net gauge charge (e.g., color) is fixed, and states in different sectors (e.g., color singlet vs. non-singlet) cannot mix.
\end{enumerate}

\subsection{Discrete Spacetime Symmetries and Their Impact}

Beyond the continuous symmetries of $G \times \mathrm{Poincaré}$, discrete symmetries such as charge conjugation (C), parity (P), and time reversal (T) further constrain the structure of packaged states.
Although these discrete transformations act on the full state, they also respect the packaging principle.

For a given state $|p,\sigma\rangle_\alpha$, discrete transformations act as follows:
\begin{itemize}
	\item \textbf{Charge Conjugation (C):}  
	This operation exchanges particles with antiparticles and, correspondingly, interchanges a representation $V_G$ with its conjugate $V_G^*$. Thus, if $\hat{a}^\dagger_{\alpha,\sigma}(\mathbf{p})$ creates a particle in $V_G$, then under C it transforms into an operator that creates the antiparticle in $V_G^*$.
	
	\item \textbf{Parity (P):}  
	Parity reverses the spatial coordinates, $ \mathbf{x} \to -\mathbf{x} $, and acts non-trivially on the Lorentz part $V_L$ (for example, by flipping helicity for massless states or by changing orbital angular momentum). The gauge part $V_G$ remains unaffected.
	
	\item \textbf{Time Reversal (T):}  
	Time reversal reverses momenta and spin directions and typically involves complex conjugation. Again, the gauge sector remains packaged as a whole.
\end{itemize}
Since these discrete operations either leave the gauge part invariant or exchange it with its dual rep.
In either case, the packaging remains intact.
In particular, a state that is a gauge singlet (or belongs to a specific irreducible $G$-sector) will remain so under discrete transformations.

\subsection{Multi-Particle States and Total Quantum Numbers}

In constructing multiparticle states, one must combine the single-particle representations of $G \times \mathrm{Poincaré}$ using the standard tensor product rules.
Suppose we have $n$ particles:
\[
\mathcal{H}^{(n)} \cong \bigotimes_{i=1}^n \Bigl(V_{G}^{(i)} \otimes V_{L}^{(i)}\Bigr).
\]
This tensor product can be rearranged as
\[
\mathcal{H}^{(n)} \cong \left(\bigotimes_{i=1}^n V_{G}^{(i)}\right) \otimes \left(\bigotimes_{i=1}^n V_{L}^{(i)}\right).
\]
\textbf{(1) Decomposition in the Gauge Sector:}  
The product
\[
\bigotimes_{i=1}^n V_{G}^{(i)}
\]
decomposes into a direct sum of irreps of $G$:
\[
\bigotimes_{i=1}^n V_{G}^{(i)} \cong \bigoplus_{Q\in\hat{G}} N_Q\, V_Q,
\]
where $N_Q$ denotes the multiplicity of the irrep $V_Q$.
In confining theories, the physical requirement is that only the gauge singlet $V_{\mathbf{1}}$ appears in the asymptotic spectrum.

\textbf{(2) Decomposition in the Lorentz Sector:}  
Simultaneously, the Lorentz (or Poincaré) parts combine to yield states of definite total momentum, spin, and possibly orbital angular momentum.
For massive particles, one obtains states with a well-defined spin $J$.
For massless particles, one obtains helicity eigenstates.

\textbf{(3) Overall Isomorphism:}  
Thus, the full multiparticle Hilbert space decomposes as
\[
\mathcal{H}^{(n)} \cong \bigoplus_{Q\in\hat{G}} \bigoplus_{\Sigma} \mathcal{H}_{Q,\Sigma},
\]
where $\Sigma$ collectively denotes the Lorentz quantum numbers (momentum, spin, etc.) of the composite state.
In each sector, the state transforms as an irrep of $G \times \mathrm{Poincaré}$ and remains a packaged state.
The internal gauge quantum numbers remain inseparable from the overall quantum state.

\begin{example}[A Hybrid Meson in QCD]
	Consider a quark-antiquark pair in QCD. Each quark carries a color index (in $V_{\mathbf{3}}$ or $V_{\overline{\mathbf{3}}}$) and spin DOFs.
	
	A single-quark packaged state can be written as
	\[
	|p,\sigma\rangle_{\alpha} \quad \text{with } \alpha\in\{1,2,3\}.
	\]
	Under the full symmetry, this quark transforms as
	\[
	U(g,\Lambda)\, \hat{a}^\dagger_{\alpha,\sigma}(\mathbf{p})\, U(g,\Lambda)^{-1} 
	= \sum_{\beta,\sigma'} D^{(\mathbf{3})}_{\beta\alpha}(g) \, D^{(L)}_{\sigma'\sigma}(\Lambda)\, \hat{a}^\dagger_{\beta,\sigma'}(\Lambda\mathbf{p}).
	\]
	
	For the quark-antiquark pair, the tensor product of the gauge parts decomposes as
	\[
	\mathbf{3} \otimes \overline{\mathbf{3}} = \mathbf{1} \oplus \mathbf{8}.
	\]
	The physical meson state is then projected onto the color-singlet subspace. Simultaneously, the Lorentz parts (spin and momentum) combine to yield a state of definite total spin $J$ (and parity $P$ and charge conjugation $C$ if defined).
	The final state is a packaged entangled state of both gauge and spacetime DOFs:
	\[
	|M(J^{PC}), p\rangle = |\mathbf{1}_{\text{color}}, J^{PC}, p\rangle.
	\]
	
	This construction ensures that neither the gauge charge nor the Lorentz indices can be factored out separately.
	They are completely entangled within a single irreducible block.
\end{example}

Physically, this extension shows that observables must be invariant under both gauge and Poincaré transformations.
In confining theories, the physical states (such as hadrons) appear as color-singlets that are simultaneously characterized by well-defined momentum and spin.
The interplay between internal and external symmetry assures that the robust packaged entanglement among IQNs is maintained even when one considers the full relativistic context.

\section{Cohomological and Topological Approaches to Packaged States}
\label{SEC:CohomologicalAndTopologicalApproachesToPackagedStates}

In previous sections, we classified packaged quantum states based solely on the representation theory of finite or compact groups and their tensor product decompositions.
In contemporary QFTs, however, additional topological \cite{DijkgraafWitten1990,DeserJackiwTempleton1982,WessZumino1971,Witten1983,AlvarezGaumeGinsparg1985,Witten1989,Atiyah1989} and cohomological \cite{DijkgraafWitten1990,Chen2011,Chen2012} data emerge.
These data are usually related to symmetry-protected topological (SPT) phases or discrete gauge theories.
Incorporating cohomological data allows us to refine the notion of packaging by encoding additional topological or anomaly-related information into the IQNs.

In this section we extend our construction of packaged states to include group‐cohomology data and topological invariants.
This allows us to see how nontrivial cocycle phases and topological terms further refine the notion of a packaged state.
We will focus on packaged states that remain inseparable and on how they can be twisted by nontrivial cohomology classes.

\subsection{Group Cohomology Classification}

Group cohomology has become a powerful tool in classifying topological phases and symmetry-protected topological (SPT) orders.\cite{Chen2011,Chen2012}
In the case of packaged quantum states, we refine the labeling of internal charges by associating them with (possibly twisted) irreps.
In many cases, the extra data are captured by a 2-cocycle $\omega \in Z^2(G, \mathrm{U}(1))$ (or higher cocycles in higher dimensions), which leads to projective representations.

\paragraph{(1) Basics of Group Cohomology.}
For a group $G$ and an abelian coefficient group $A$ (usually $\mathrm{U}(1)$), the second cohomology group $H^2(G, A)$ classifies equivalence classes of 2-cocycles.

A 2-cocycle is a function
\[
\omega: G \times G \to \mathrm{U}(1)
\]
that satisfies the cocycle condition:
\[
\omega(g_2, g_3)\, \omega(g_1, g_2 g_3) = \omega(g_1, g_2)\, \omega(g_1 g_2, g_3) \quad \forall\, g_1,g_2,g_3 \in G.
\]

Two 2-cocycles $\omega$ and $\omega'$ are considered equivalent if there exists a function $f: G \to \mathrm{U}(1)$ such that
\[
\omega'(g_1,g_2) = \frac{f(g_1) f(g_2)}{f(g_1 g_2)}\, \omega(g_1,g_2).
\]

The group $H^2(G, \mathrm{U}(1))$ then classifies the different projective classes of representations of $G$.

\paragraph{(2) Projective Representations and Twisted Packaged States.}
A projective representation of $G$ on a vector space $V$ is a map
\[
\tilde{\rho}: G \to \mathrm{GL}(V)
\]
such that
\[
\tilde{\rho}(g_1)\,\tilde{\rho}(g_2) = \omega(g_1,g_2)\, \tilde{\rho}(g_1g_2),
\]
where $\omega(g_1,g_2) \in \mathrm{U}(1)$ is a 2-cocycle.
In the context of packaged quantum states, a single-particle operator $\hat{a}^\dagger$ transforms in a projective representation, acquiring the extra phase $\omega(g_1,g_2)$.
In other words, a packaged state can be labeled by a twisted irrep $\mathbf{R}^\omega$.
The presence of a nontrivial cocycle modifies how excitations combine.

\paragraph{(3) Twisted Tensor Products.}
Let $\tilde\rho_1$ be an $\omega$-twisted projective rep and let $\tilde\rho_2$ be twisted by $\omega^{-1}$.
Then their tensor product
$$
(\tilde\rho_1\otimes\tilde\rho_2)(g)
\;=\;
\tilde\rho_1(g)\,\otimes\,\tilde\rho_2(g)
$$
satisfies
$$
(\tilde\rho_1\otimes\tilde\rho_2)(g)\,
(\tilde\rho_1\otimes\tilde\rho_2)(h)
\;=\;
\omega(g,h)\,
(\tilde\rho_1\otimes\tilde\rho_2)(gh),
$$
so the fused state remains in the $\omega$-twisted sector.
Consequently, the packaged state must carry both its irrep label $\mathbf R$ and the cocycle $\omega$, and no partial splitting of that combined quantum number is possible unless $\omega$ is trivial.

\paragraph{(4) Higher Cohomology and Topological Phases.}
In higher dimensions, one may encounter nontrivial 3-cocycles or 4-cocycles (i.e., elements of $H^3(G, \mathrm{U}(1))$ or $H^4(G, \mathrm{U}(1))$) that can classify topological orders and SPT phases.
For instance, in 2+1D topological orders, excitations may acquire fractional statistics, which can be understood via a 3-cocycle in $H^3(G, \mathrm{U}(1))$ in discrete gauge theories of DW-type.
We may extend the packaging principle to this case:
each local excitation is associated with a fixed cohomology class $\omega$ and the topological phase (or braiding phase) remains an inseparable part of the packaged excitation.

However, the cohomological approach also has limitations.
One limitation is computational complexity because determining the relevant cocycles and analyzing their consequences for composite states can be challenging, especially for non-Abelian groups or in higher dimensions.
Moreover, although these cohomological invariants capture subtle topological features and anomalies, their physical interpretation may not be as transparent as the labeling in conventional representation theory.
Nonetheless, the novel aspect of this approach is its ability to predict new and robust forms of packaged entanglements that are protected by topological invariants.
In particular, if a packaged state transforms projectively, then its internal structure is not only inseparable but also carries a nontrivial topological twist, which may have observable consequences in processes sensitive to anomalies or topological phases.

\subsection{Anomalies and Topological Terms}
\label{SEC:AnomaliesAndTopologicalTerms}

When global symmetries are involved or when the theory has nontrivial boundary modes, anomalies and topological terms will provide another viewpoint on the construction of packaged states.

The 't Hooft anomaly \cite{tHooft1976} (or global symmetry anomaly) implies that a global symmetry cannot be realized by an on-site operator in a strictly local Hilbert space.
Instead, such anomalies cause additional topological terms in the effective action, such as the Wess-Zumino (WZ) term \cite{WessZumino1971} that is associated with a nontrivial element in $H^{d+1}(G, \mathrm{U}(1))$ for a $d$-dimensional system.
For example, in 1+1D, the WZ term can enforce the edge modes to transform projectively and therefore lead to a nontrivial 2-cocycle in $H^2(G, \mathrm{U}(1))$.

In the presence of an anomaly, the IQNs of a packaged state may acquire extra phase factors. Therefore, the packaged state transforms as a projective representation rather than a linear one.
The full set of IQNs (including the anomalous phase) is still inseparable and packaging principle remains intact.
In other words, although the anomaly modifies the transformation law to
\[
\tilde{\rho}(g_1)\tilde{\rho}(g_2) = \omega(g_1, g_2)\, \tilde{\rho}(g_1g_2),
\]
the internal charge remains a single unit labeled by $\omega$.
Consequently, superselection rules prohibit the superposition of packaged states with different anomaly (cocycle) classes.

\begin{example}
	Consider a 2+1D system with chiral fermions at its edge.
	The effective theory of the edge modes may exhibit a chiral anomaly, which can be described by a nontrivial 2-cocycle $\omega$.
	Let $|\Psi\rangle$ be an edge state that transforms projectively:
	\[
	U(g)|\Psi\rangle = \tilde{\rho}(g)|\Psi\rangle,
	\]
	with
	\[
	\tilde{\rho}(g_1)\tilde{\rho}(g_2) = \omega(g_1,g_2)\, \tilde{\rho}(g_1g_2).
	\]
	Although the anomaly modifies the phase law
	$\tilde\rho(g_1)\,\tilde\rho(g_2)=\omega(g_1,g_2)\,\tilde\rho(g_1g_2),$
	the state $\lvert\Psi\rangle$ remains packaged: all its IQNs (including the cocycle phase) live in one irreducible projective block.
	This ensures that any packaged entanglement measure must respect the cocycle structure.
\end{example}

In higher dimensions (e.g., 3+1D), topological terms associated with nontrivial cocycles (such as 3-cocycles in $H^3(G, \mathrm{U}(1))$) can lead to fractional statistics and braiding phases among excitations.
In such cases, when multiple packaged states are combined, the overall wavefunction acquires a topological phase determined by the product of cocycles.
Despite the occurrence of nontrivial braiding, the internal topological charge remains inseparable and is characterized by a fixed cohomology class.

\subsection{Cohomological Data and Anomaly‐inflow Constructions}
\label{SEC:CohomologicalDataAndAnomalyInflowConstructions}

In this section, we enrich the representation‐theoretic packaging of states by incorporating cohomological data and anomaly‐inflow constructions.
We begin with two fully worked examples (one in 2+1D Dijkgraaf-Witten gauge theory and one in the 1+1D $\mathbb Z_2$ SPT) before outlining broader generalizations to super-cohomology, cobordism, and non-invertible (fusion-category) symmetries.

\subsubsection*{1. Dijkgraaf-Witten Twisted Gauge Theories in 2+1D}

Let $G$ be a finite group and
\[
\omega\;\in\;Z^3\bigl(G,\mathrm{U}(1)\bigr)
\]
a normalized 3‐cocycle.
The Dijkgraaf-Witten topological gauge theory on a closed 3‐manifold $M^3$ has partition function
\[
Z_{\rm DW}[M^3]
=\frac{1}{|G|^{\chi(M^3)}}\sum_{[A]:\,\pi_1(M^3)\to G}
\exp\!\Bigl(2\pi i\!\int_{M^3}A^*\omega\Bigr),
\]
where $A$ is a flat $G$‐connection and $\chi$ the Euler characteristic.
Wilson line operators $W_r(\gamma)$ in an irrep $r$ of the twisted quantum double carry a cocycle phase:
when two lines of types $i,j$ fuse into $k$ along a trivalent junction,
\[
W_i\otimes W_j
\;\xrightarrow{\;\gamma\;}\;
\omega(i,j,k)\;W_k,
\]
where $\omega(i,j,k)$ is the evaluation of the 3‐cocycle on the group elements labelling the three strands.
Thus each line operator is an indecomposable simple object of the fusion category $\mathcal{Z}(G,\omega)$ \cite{DijkgraafWitten1990,Etingof2005,Bakalov2001}.
No partial splitting of the cocycle‐twisted charge is possible:
packaged states of multiple anyons remain inseparable blocks in the corresponding projective sector.

\subsubsection*{2. One‐Dimensional $\mathbb Z_2$ SPT and Projective Edge Modes}

In 1+1D a bosonic SPT with $\mathbb Z_2$ symmetry is classified by
\[
H^2\bigl(\mathbb Z_2,\mathrm{U}(1)\bigr)\cong\mathbb Z_2.
\]
A canonical lattice realization is the cluster state with Hamiltonian
\[
H
=-\sum_j Z_{j-1}X_j Z_{j+1},
\]
whose ground‐space on an open chain carries a projective $\mathbb Z_2$‐representation at each end.
Explicitly, if $g\in\mathbb Z_2$ acts by
$\tilde{\rho}_L(g)\otimes\tilde{\rho}_R(g)$ on the two edge qubits, then
\[
\tilde{\rho}_L(g)\,\tilde{\rho}_L(g')
=\omega(g,g')\,\tilde{\rho}_L(gg'),
\]
with the nontrivial cocycle $\omega(g,g')=-1$.
The two‐fold edge degeneracy cannot be broken by any $\mathbb Z_2$‐invariant local perturbation, exhibiting the packaged, projective nature of the boundary modes.

\subsubsection*{3. Beyond Group Cohomology: Super‐Cohomology and Cobordism}
Fermionic SPT phases require the super‐cohomology classification \cite{GuWen2014,Kapustin2015}, in which one combines a 2‐cocycle in $H^2(G,\mathbb Z_2)$ with a 3‐cochain in $C^3(G,\mathrm{U}(1))$.
More generally, the modern cobordism approach of Freed-Hopkins \cite{FreedHopkins2016} organizes invertible phases by
\[
\Omega^{d+1}_{\rm spin}(BG)\cong\{\text{fermionic SPTs in }d\!+\!1\},
\]
refining ordinary group‐cohomology and capturing beyond‐cohomology phases such as the $E_8$ bosonic phase in 2+1D.

\subsubsection*{4. Anomaly Inflow and Gauging of SPT Phases}

An SPT classified by $\omega\in H^{d+1}(G,\mathrm{U}(1))$ admits a bulk partition function
\[
Z_{\rm bulk}[A]
=\exp\!\Bigl(2\pi i\!\int_{M^{d+1}}A^*\omega\Bigr),
\]
whose variation under a $G$‐gauge transformation cancels the ’t Hooft anomaly of the $d$‐dimensional boundary theory \cite{Witten1983,CallanHarvey1985}.
Gauging the boundary $G$‐symmetry introduces a $(d\!-\!p)$-form Lagrange multiplier $\alpha$ implementing the projector
\[
P_Q
=\int\!\mathcal D\!\alpha^{(d-p-1)}
\exp\!{\Bigl(i\!\int_{M^d}\alpha\wedge J^{(p+1)}\Bigr)},
\]
and the dual action acquires a BF‐term
\[
S[a,B]
=\frac{i}{2\pi}\int B\wedge da+\cdots,
\]
exhibiting the same indecomposable packaging of the dual extended defects.

\subsubsection*{5. Non‐Invertible Symmetries and Fusion‐Category Generalizations}

Finally, in many TQFTs the symmetry is non‐invertible and encoded by a fusion category $\mathcal C$ rather than a group.
One isolates each simple object $X_i$ by the Verlinde‐type projector \cite{Etingof2005,Bakalov2001}
\[
P_i
=\frac{d_i}{\mathcal D^2}\sum_j S_{ij}^*\,X_j,
\]
with $S_{ij}$ the modular $S$‐matrix and $\mathcal D=\sqrt{\sum d_i^2}$.
These idempotents satisfy
$\,P_i\otimes P_i=P_i,\;P_i\otimes P_j=0\,(i\neq j),$
so that each simple object remains an inseparable package.
This fusion‐category viewpoint subsumes both group‐cohomology twists and genuinely non‐invertible defect networks.

\section{Hybrid Systems: External $\otimes$ Internal $\otimes$ (Global) Symmetries}
\label{SEC:HybridSystems}

In realistic quantum field theories (e.g., the Standard Model), elementary particles are characterized not only by a local gauge symmetry $G_{\mathrm{local}}$ (e.g., color $\mathrm{SU}(3)$ or electroweak $\mathrm{SU}(2)\times \mathrm{U}(1)$) \cite{Yang1954,Weinberg1967,Gross1973,Glashow1961} but also by a global symmetry $G_{\mathrm{global}}$ (e.g., flavor or isospin) \cite{GellMann1961,tHooft1980} and by the spacetime (Lorentz or Poincaré) symmetry \cite{Wigner1939,ColemanMandula1967}.
In our context, a packaged quantum state is one in which the internal gauge quantum numbers are inseparably packaged into a single irreducible block.

In this section, we show how to combine the local (internal) and global symmetry sectors with the Lorentz symmetry in forming full physical states.
Our aim is to preserve the packaging principle for the local gauge charges, while allowing free mixing in the global (flavor) sector and accounting for the external (momentum and spin) DOFs.

\subsection{Hybridization of Symmetries}

Consider a theory with the combined symmetry group \cite{ColemanMandula1967,HLS1975}
\[
\mathcal{G}
=
G_{\mathrm{local}}
\times
G_{\mathrm{global}}
\times
\mathrm{Lorentz},
\]
where $G_{\mathrm{local}}$ is a local gauge group (e.g., $\mathrm{SU}(3)$ for QCD, $\mathrm{U}(1)$ for electromagnetism), $G_{\mathrm{global}}$ is a global symmetry (e.g., flavor SU$(N_f)$, baryon number, isospin), and the Lorentz group provides external DOFs such as momentum and spin (or helicity).
In particular, by the packaging principle the IQNs are bundled into a single irrep that cannot be partially factored.

A single-particle creation operator in such a theory can be written as
\[
\hat{a}^\dagger_{f,\alpha,\sigma}(\mathbf{p}),
\]
where $f$ transforms under $G_{\mathrm{global}}$, $\alpha$ is the gauge index for $G_{\mathrm{local}}$, $\sigma$ labels the spin (or helicity), and $\mathbf{p}$ is the three-momentum.  Here $\sigma$ and $\mathbf{p}$ transform under the Lorentz group, while $f$ and $\alpha$ are internal indices.
The packaging principle requires that the gauge index $\alpha$ remain an inseparable block.
For example, a quark always transforms in the full $\mathbf3$ of color.

Under a combined transformation $(g,h,\Lambda)\in G_{\mathrm{local}}\times G_{\mathrm{global}}\times\mathrm{Lorentz}$, one finds
\[
U(g,h,\Lambda)\,\hat a^\dagger_{f,\alpha,\sigma}(\mathbf p)\,U^{-1}(g,h,\Lambda)
=
\sum_{f',\beta,\sigma'}
D^{(G)}_{\beta\alpha}(g)\,
D^{(\mathrm{global})}_{f'f}(h)\,
D^{(L)}_{\sigma'\sigma}(\Lambda)\,
\hat a^\dagger_{f',\beta,\sigma'}(\Lambda\mathbf p),
\]
where each block acts irreducibly on its index.
In contrast, the Lorentz and global parts may allow for arbitrary superpositions (e.g., different flavor combinations).

For an $n$-particle state we take the tensor product of single-particle spaces,
\[
\mathcal{H}^{(n)}
\;\cong\;
\bigotimes_{i=1}^n
\Bigl(V_G^{(i)}\otimes V_{global}^{(i)}\otimes V_L^{(i)}\Bigr),
\]
which we then reorder as
\[
\Bigl(\bigotimes_{i=1}^nV_G^{(i)}\Bigr)
\;\otimes\;
\Bigl(\bigotimes_{i=1}^nV_{global}^{(i)}\Bigr)
\;\otimes\;
\Bigl(\bigotimes_{i=1}^nV_L^{(i)}\Bigr).
\]
In the gauge sector
\[
\bigotimes_{i=1}^nV_G^{(i)}
\;\cong\;
\bigoplus_{Q\in\widehat G_{\mathrm{local}}}N_Q\,V_Q,
\]
and in a confining theory we project onto the singlet $V_{\mathbf1}$.
In the global sector no such projection is required, and in the Lorentz sector one recombines momenta and spins in the usual way.
Finally,
\[
\mathcal{H}^{(n)}
\;\cong\;
\bigoplus_{Q,\Sigma}\mathcal{H}_{Q,\Sigma},
\]
where $Q$ labels the local gauge irreps and $\Sigma$ the Lorentz quantum numbers.
The flavor part may mix freely within each $\mathcal H_{Q,\Sigma}$.

\begin{example}[QCD with Hybrid Symmetries]
	Consider quantum chromodynamics (QCD) with:
	\begin{itemize}
		\item Local Gauge Group $\mathrm{SU}(3)_{\mathrm{color}}$:
		Each quark is a full $\mathbf{3}$ and each anti-quark is a full $\overline{\mathbf{3}}$.
		The total color charge is packaged.
		
		\item Global Flavor Group $\mathrm{SU}(3)_{\mathrm{flavor}}$:
		An approximate $\mathrm{SU}(3)_{\mathrm{flavor}}$ that rotates up, down, and strange quarks.
		
		\item Lorentz Symmetry Group:
		A symmetry group that provides momentum and spin.
	\end{itemize}	
	Under these settings, a quark creation operator can be denoted as $\hat{q}^\dagger_{f,c,\sigma}(\mathbf{p})$, where $c=1,2,3$ (color), $f=u,d,s$ (flavor), and $\sigma$ denotes the spin.
	Physical hadrons must be color packaged entangled states (color singlets).
	For example, a meson is formed from a quark-antiquark pair:
	\[
	|M\rangle = \frac{1}{\sqrt{3}} \sum_{c=1}^3 |q, c; \, \bar{q}, c \rangle \otimes |\text{flavor}\rangle \otimes |\text{Lorentz}\rangle.
	\]
	Here, the color part is strictly packaged into a singlet by contracting the indices.
	The flavor part is a nontrivial linear combination (e.g., the $\pi^0$ state might be $(u\bar{u} - d\bar{d})/\sqrt{2}$).
	The overall state is then a tensor product of a gauge-invariant hybrid-packaged entangled state, i.e., a mix of color singlet, flavor multiplet, and the external Lorentz part.
\end{example}

\subsection{Packaging for Hybrid Symmetries}

In a hybrid system, a single-particle packaged state can be written as
\[
|p,\sigma\rangle_{f,\alpha} \in V_{G} \otimes V_{global} \otimes V_{L},
\]
where the local gauge part $V_G$ is an irreducible (packaged) representation that cannot be decomposed.
Under a transformation $(g, h, \Lambda)$,
\[
U(g,h,\Lambda)\,|p,\sigma\rangle_{f,\alpha} = \sum_{\beta, \sigma', f'} D^{(G)}_{\beta\alpha}(g)\, D^{(L)}_{\sigma'\sigma}(\Lambda)\, | \Lambda p, \sigma'\rangle_{f',\beta}\, D^{(global)}_{f'f}(h).
\]
For multiparticle states, the decomposition of the gauge sector enforces superselection rules (e.g., only color singlets are physical in confining theories), while the global part can mix and the Lorentz part is combined in the usual manner.

The packaging principle applies to the local gauge sector:
every single-particle creation operator carries its full local gauge charge as an indivisible unit.
When combined with global symmetries and Lorentz symmetry, the full physical state is represented as a tensor product:
\[
\mathcal{H} \cong \left( \bigoplus_{Q\in\hat{G}_{\mathrm{local}}} \mathcal{H}_Q \right) \otimes V_{global} \otimes V_{L}.
\]
Thus, while the flavor (global) and Lorentz (external) DOFs allow for superpositions and interference, the local gauge part remains rigidly packaged.
This structure explains why in QCD, for example, no free state with nonzero color exists, even though the flavor and Lorentz parts can vary widely.

Finally we see that, in a hybrid system with symmetry
\[
G_{\mathrm{local}} \times G_{\mathrm{global}} \times \mathrm{Lorentz},
\]
the physical state of a particle (or composite system) is given by a tensor product
\[
V_{G} \otimes V_{global} \otimes V_{L},
\]
where the local gauge part $V_G$ is an inseparable (packaged) block by virtue of its irreducibility, while the global and Lorentz parts may be superposed.
This construction is critical for understanding phenomena such as color confinement (where only the packaged singlet state is observed) and flavor mixing (where global symmetries allow nontrivial multiplets), and it provides a complete description of the observable quantum numbers in a fully relativistic setting.

\subsection{Additional Structure in Hybrid Symmetries}
\label{SEC:HybridAdditional}

In realistic theories the global and spacetime factors may carry extra structure (projective realizations, mixed anomalies, spontaneous breaking, supersymmetry) that further refines the packaging principle.
We now outline each in turn.

\subsubsection*{1. Projective Global Symmetries and Fermion Parity}

Suppose the global symmetry $G_{\rm global}$ admits a nontrivial $\mathbf Z_2$ extension (see Sec. \ref{SEC:Z2FermionParity})
\[
1 \;\to\; \mathbf Z_2 = \{1,\,P_F\}\;\to\;\widetilde G_{\rm global}\;\to\;G_{\rm global}\;\to\;1,
\]
with $P_F$ acting as fermion parity.
Then single-particle operators furnish a projective representation 
\[
\tilde D(h)\,\tilde D(h')=\omega(h,h')\,\tilde D(hh'),
\qquad \omega\in Z^2(G_{\rm global},\{\pm1\}),
\]
and the usual global-symmetry projector must be twisted to
\[
\Pi_{\rm parity}
=\frac12\bigl(I - P_F\bigr),
\]
which isolates the odd (fermionic) sector as an inseparable $\mathbf Z_2$ package.
In a hybrid state
\[
\hat a^\dagger_{f,\alpha,\sigma}
\;\longmapsto\;
\bigl(\tilde D_{\rm global}(h)\otimes D_G(g)\otimes D_L(\Lambda)\bigr)\,
\hat a^\dagger\,,
\]
no operator commuting with $G_{\rm local}\times\widetilde G_{\rm global}\times\mathrm{Lorentz}$ can split off half of the $P_F$-charge.

\subsubsection*{2. Mixed ’t Hooft Anomalies}

When the local and global symmetries share a mixed ’t Hooft anomaly (see Sec. \ref{SEC:CohomologicalAndTopologicalApproachesToPackagedStates}), say $\mathrm{U}(1)_A\text{-}\mathrm{SU}(N)_{\rm color}^2$, one cannot gauge them simultaneously.
Mathematically, the background field action
\[
S[A_{\rm color},A_{\rm global}]
\]
acquires a term proportional to 
$\int \mathrm{Tr}(F_{\rm color}\wedge F_{\rm color})\wedge A_{\rm global}$, 
and the partition function transforms by a phase 
$\exp\bigl(i\,\alpha\int F^2\bigr)$.  
In packaged states, this enforces that any intertwiner between color and flavor representations must carry an anomaly index.
There is no gauge-invariant, flavor-singlet operator unless the anomaly matching condition is satisfied:
\[
\sum_{\rm fermions} \bigl(C_2(G_{\rm color})\,Q_{\rm global}\bigr)=0.
\]

\subsubsection*{3. Spontaneous Breaking of the Global Sector}

If $G_{\rm global}$ is spontaneously broken to a subgroup $H\subset G_{\rm global}$, then the flavor multiplets live in the coset $G_{\rm global}/H$.
Specifically, fields transform nonlinearly
\[
\Sigma(x)\to h(\Sigma,g)\,\Sigma(x)\,g^{-1}, 
\quad h(\Sigma,g)\in H.
\]
Packaging of the local gauge part remains unchanged, but the global factor now admits Goldstone modes $\pi^a$ with $\Sigma=\exp(i\,\pi^a T^a)$.
The creation operator 
$\hat a^\dagger_{f,\alpha,\sigma}\to \hat a^\dagger_{\Sigma,\alpha,\sigma}$
carries both a coset label $\Sigma$ and an unbroken $H$-irrep.
One cannot split off half a Goldstone direction.

\subsubsection*{4. Graded Extensions: Supersymmetry}

When the spacetime symmetry is promoted to the super-Poincaré algebra $\mathfrak{osp}(1|4)$, single-particle states furnish supermultiplets  
\[
V_{\rm susy}
=\bigoplus_{j} V_{G}\otimes V_{\rm Lorentz}(j)\otimes V_{F}(j),
\]
where $V_F$ carries the action of supercharges $Q_\alpha$.  
Creation operators become 
$\hat A^\dagger_{\alpha,\sigma\,|\,I}(\mathbf p)$
with $I$ an index in a representation of the R-symmetry or Clifford subalgebra.
The packaging principle extends:
the entire supermultiplet is an indecomposable block under 
$G_{\rm local}\times G_{\rm global}\times\mathrm{Super\text{-}Poincaré}$.

\begin{example}[$\mathrm{SU}(3)_{\rm color} \times \mathrm{SU}(2)_{\rm flavor}$]
	Let each quark lie in 
	\[
	V_{\rm color}=\mathbf3,\quad V_{\rm flavor}=\mathbf2,\quad V_{\rm Lorentz}=\mathbf2_{1/2}.
	\]
	Then a single-particle space is 
	$\mathbf C^3 \otimes \mathbf C^2 \otimes \mathbf C^2$.
	Two such quarks combine to
	\[
	(\mathbf3\otimes\mathbf3)\;\otimes\;(\mathbf2\otimes\mathbf2)\;\otimes\;(\tfrac12\otimes\tfrac12),
	\]
	and decomposing  
	$\mathbf3\otimes\mathbf3=\mathbf6\oplus\bar{\mathbf3}$,
	$\mathbf2\otimes\mathbf2=\mathbf3\oplus\mathbf1$,
	$\tfrac12\otimes\tfrac12=1\oplus0$,
	we see the physical diquark must live in the color $\bar{\mathbf3}$, flavor $\mathbf1$ or $\mathbf3$, and spin $J=0$ or $1$ sectors.
	No operator can split the color $\bar{\mathbf3}$ or the flavor $\mathbf2$ in half without violating irreducibility.
\end{example}

\section{Practical Computations and Examples}
\label{SEC:PracticalComputationsAndExamples}

In this section, we illustrate how the packaging principle is implemented in explicit computations.
Unlike traditional state constructions, here our focus is on packaged quantum states in which internal gauge DOFs (such as electric charge or color) are irreducibly bundled into single operators that cannot be partially factorized.
We provide the following examples:
\begin{itemize}
	\item QED illustrates trivial packaging: single operators carry fixed $\mathrm{U}(1)$ charge and multi-particle states simply add them.
	
	\item QCD shows how non-Abelian $\mathrm{SU}(3)$ tensor-product decompositions (Peter-Weyl or Young diagrams) isolate color-singlet packaged states.
	
	\item A $\mathbb Z_N$ gauge theory on a ring/torus exhibits discrete 1D irreps, net-flux projectors, and packaged line operators.
	
	\item Quantum-information toy models provide finite-dimensional testbeds where packaging can be computed exactly.
	
	\item Bethe-Salpeter and potential-model bound-state wavefunctions demonstrate projection onto gauge-invariant subspaces in practice.
\end{itemize}
In every case, the internal gauge charges remain inseparable blocks, highlighting how packaged states differ fundamentally from unconstrained quantum states.

\subsection{Abelian Packaging in QED: Net Electric Charge Sectors}

In quantum electrodynamics (QED), the gauge group is $\mathrm{U}(1)$. Every electron or positron creation operator carries a fixed electric charge ($-e$ for electrons, $+e$ for positrons) as an inseparable unit.
Let us first derive the charge sectors.

Let $\hat{a}^\dagger_{e^-}$ and $\hat{b}^\dagger_{e^+}$ denote electron and positron creation operators, respectively. Under a $\mathrm{U}(1)$ gauge transformation, represented by
\[
U(e^{i\theta}) = \exp \Bigl(i\theta \hat{Q}\Bigr),
\]
these operators transform as
\[
U(e^{i\theta})\, \hat{a}^\dagger_{e^-}\, U(e^{i\theta})^{-1} = e^{-i\theta}\, \hat{a}^\dagger_{e^-}, \qquad
U(e^{i\theta})\, \hat{b}^\dagger_{e^+}\, U(e^{i\theta})^{-1} = e^{i\theta}\, \hat{b}^\dagger_{e^+}.
\]
Thus, a single electron state is an irreducible (1D) packaged state with charge $-e$.

For a multiparticle state, suppose we have $n$ creation operators with charges $q_i$ (each $q_i$ being $\pm e$). The combined state transforms as
\[
U(e^{i\theta}) \prod_{i=1}^n \hat{a}^\dagger_{i} |0\rangle
= \exp\!\bigl(i\theta \sum_{i=1}^n q_i \bigr)\,
\prod_{i=1}^n \hat{a}^\dagger_{i} |0\rangle.
\]
Hence, the total electric charge is
\[
Q_{\mathrm{tot}} = \sum_{i=1}^n q_i
\]
and the Hilbert space decomposes as
\[
\mathcal{H} \cong \bigoplus_{Q \in e\,\mathbb{Z}} \mathcal{H}_Q,
\]
where each $\mathcal{H}_Q$ is one‐dimensional in charge (up to tensoring with momentum, spin, etc.).

Finally, we list representative states (net charge sectors in QED) in Table \ref{TAB:NetElectricChargeSectorsQED}:

\begin{table}[h]
	\centering
	\caption{Net Electric Charge Sectors in QED}
	\begin{tabular}{|l|l|}
		\hline
		Sector $\mathcal{H}_Q$ & Examples \\ \hline
		$Q=0$  & $\lvert 0 \rangle$, $\lvert e^-e^+\rangle$, multi-photon states \\ \hline
		$Q=-e$ & $\lvert e^-\rangle$, $\lvert e^-e^-e^+\rangle$ \\ \hline
		$Q=+e$ & $\lvert e^+\rangle$, $\lvert e^+e^+e^-\rangle$ \\ \hline
		$Q=-2e$ & $\lvert e^-e^-\rangle$ \\ \hline
		$\cdots$ &  \\ \hline
	\end{tabular}
	\label{TAB:NetElectricChargeSectorsQED}
\end{table}

The fact that each electron or positron operator carries its full charge as a single block exemplifies the packaging principle in an Abelian context.

\subsection{Non-Abelian Packaging in QCD: Color Representations}

Quantum chromodynamics (QCD) features a non-Abelian gauge group $\mathrm{SU}(3)$. Here the IQNs (color) are packaged in a nontrivial way, and only color-singlet combinations are physically observable.

\paragraph{(1) Single-Particle Packaging.}  
A quark creation operator $\hat{q}^\dagger_c$ carries color index $c \in \{1,2,3\}$ and transforms in the fundamental representation $\mathbf{3}$ of $\mathrm{SU}(3)$:
\[
U(g)\, \hat{q}^\dagger_c\, U(g)^{-1} = \sum_{d=1}^{3} D^{(\mathbf{3})}_{dc}(g)\, \hat{q}^\dagger_d.
\]
By the packaging principle, the color degree of freedom is inseparable.
A quark always appears as a complete $\mathbf{3}$.

\paragraph{(2) Multi-Particle Decomposition.}  
For multiparticle states, one must take tensor products of the single-particle representations. For example, a meson is formed by a quark-antiquark pair:
\[
\mathbf{3} \otimes \overline{\mathbf{3}} = \mathbf{1} \oplus \mathbf{8}.
\]
The color singlet $\mathbf{1}$ is the physically observable state. An explicit meson wavefunction is given by
\[
|M\rangle = \frac{1}{\sqrt{3}} \sum_{c=1}^{3} |q,c; \, \bar{q},c\rangle.
\]
This state is gauge-invariant, as can be seen by applying $U(g)$ and using the unitarity of $D^{(\mathbf{3})}(g)$.
The contraction of indices ensures that the full color charge is packaged into a single, inseparable unit.

Similarly, a baryon state is formed by combining three quark operators:
\[
\mathbf{3} \otimes \mathbf{3} \otimes \mathbf{3} = \mathbf{1} \oplus \cdots,
\]
with the color singlet given by
\[
|B\rangle = \frac{1}{\sqrt{6}}\, \epsilon_{abc}\, |q,a; \,q,b; \,q,c\rangle.
\]
Here $\epsilon_{abc}$ is the totally antisymmetric tensor. This guarantees that the three color indices are fully entangled, and the quarks are inseparably packaged into a color-neutral state.

\paragraph{(3) Exotic States and Projection Operators.}  
More exotic states (tetraquarks, pentaquarks) involve higher tensor products, such as
\[
\mathbf{3}^{\otimes k} \otimes \overline{\mathbf{3}}^{\otimes m}.
\]
A physical state must lie in the color-singlet subspace. One can systematically perform Peter-Weyl decompositions (or use Young tableau techniques) to isolate the singlet channel. The projection operator onto the color singlet for $\mathrm{SU}(3)$ can be constructed analogously to the finite group case, but with integration over the Haar measure:
\[
P_{\mathbf{1}} = d_{\mathbf{1}} \int_{\mathrm{SU}(3)} d\mu(g)\, \chi_{\mathbf{1}}^*(g)\, U(g).
\]
Since $\chi_{\mathbf{1}}(g)=1$ and $d_{\mathbf{1}}=1$, this operator projects out the invariant (singlet) component.

\paragraph{(4) Practical Classification Table for QCD.}
For QCD, only color-singlet combinations are observed. A simplified table of multi-quark packaged states is:
\begin{table}[hbt!]
	\centering
	\caption{Example: Color-Singlet States in QCD}
	\begin{tabular}{|l|l|}
		\hline
		State & Color Composition \\ \hline
		Meson & $q\bar{q}$ with $\mathbf{3}\otimes\overline{\mathbf{3}} \rightarrow \mathbf{1}$ \\ \hline
		Baryon & $qqq$ with $\mathbf{3}\otimes\mathbf{3}\otimes\mathbf{3} \rightarrow \mathbf{1}$ \\ \hline
		Tetraquark & $qq\bar{q}\bar{q}$ with proper contraction to yield $\mathbf{1}$ \\ \hline
		Pentaquark & $qqqq\bar{q}$ with proper contraction to yield $\mathbf{1}$ \\ \hline
	\end{tabular}
\end{table}
These classification tables arise naturally by enforcing the packaging principle: each quark is an inseparable block in the $\mathbf{3}$ (or $\overline{\mathbf{3}}$), and only the full contraction yielding the singlet $\mathbf{1}$ is physical.

\subsection{Packaging in Discrete $\mathbb{Z}_N$ Gauge on a Ring}

Consider a ring with $L$ links, each carrying $\mathbf Z_N$ variables $g_\ell$.
The global constraint
\[
U_j=\prod_{\ell=1}^L g_\ell^j,\quad j=0,\dots,N-1
\]
projects onto net flux $j$.
One finds
\[
P_{j}
\;=\;\frac1N\sum_{k=0}^{N-1} \omega^{-jk} U_k,
\quad \omega=e^{2\pi i/N},
\]
and any state with fractional flux on subsegments is killed by these projectors.
This exemplifies the packaging principle in a purely discrete gauge setting.

\subsection{Quantum Information Toy Models}

To illustrate packaged states in a controlled setting, one may construct quantum information toy models on small lattices. For example, consider a lattice model where each site has a local Hilbert space
\[
\mathcal{H}_{\text{site}} \cong \mathbb{C}^d \otimes \mathbb{C}^{d'},
\]
where $\mathbb{C}^d$ represents the local gauge (color) degree of freedom and $\mathbb{C}^{d'}$ represents an external degree (e.g., spin or a global symmetry label). Gauge invariance forces one to project onto subspaces where the local gauge indices combine to yield a singlet.
For instance, in an $\mathrm{SU}(2)$ gauge model, if each site carries a two-dimensional gauge degree of freedom, physical states on links or plaquettes must be in the gauge singlet (or invariant) sector.
Explicit computations using projection operators (as described in earlier sections) then yield the packaged entangled states in the model.

\subsection{Bound-State Wavefunctions}

\paragraph{(1) Bethe-Salpeter Equation.}  
Relativistic bound states are commonly computed using the Bethe-Salpeter equation:
\[
\Gamma(p, P)
\;=\;
\int \!\! \frac{d^4 k}{(2\pi)^4}\;
K(p,k,P)\;
S\Bigl(k+\frac{P}{2}\Bigr)\,\Gamma(k,P)\,S\Bigl(k-\frac{P}{2}\Bigr),
\]
where $\Gamma(p,P)$ is the vertex function, $K(p,k,P)$ is the interaction kernel (e.g., one-gluon exchange in QCD), and $S$ are the propagators.
In a confining theory like QCD, the solution $\Gamma(p,P)$ must be projected onto the color-singlet channel.
One typically expands:
\[
\Gamma_{\alpha\beta}(p,P)
\;=\;
\frac{\delta_{\alpha\beta}}{\sqrt{3}}\,\Gamma_{\mathrm{singlet}}(p,P)
\;+\; \Gamma_{\mathrm{octet}}(p,P),
\]
and only $\Gamma_{\mathrm{singlet}}$ corresponds to a physical meson state.
This explicit projection embodies the packaging principle at the level of bound state wavefunctions.

\paragraph{(2) Potential Models.}  
In potential models (such as the Cornell potential, $V(r) \sim -\alpha/r + \sigma r$), hadron wavefunctions are often factorized as
\[
\Psi_{\alpha\beta}(\mathbf{r}) = \varphi_{\text{color}}(\alpha,\beta)\, \Phi_{\text{spin,space}}(\mathbf{r}).
\]
For mesons, the color part is fixed by the packaged state:
\[
\varphi_{\text{color}}(\alpha,\beta) = \frac{\delta_{\alpha\beta}}{\sqrt{3}},
\]
which ensures that the quark and antiquark form an inseparable color singlet.
The external (spin and spatial) part $\Phi_{\text{spin,space}}(\mathbf{r})$ is then solved for using the potential model.
This separation demonstrates how the packaging principle constrains the internal gauge part, while the external part retains the usual quantum mechanical DOFs.

\section{Discussion}

Our analysis reveals that the inseparability of IQNs is not an incidental feature but a direct consequence of the group-theoretic structure underlying gauge symmetries.
Each single-particle creation operator furnishes an irrep of the symmetry group $G$, and no nontrivial invariant subspace exists within these irreps.
Consequently, any multi-particle state built by tensoring such operators necessarily carries packaged internal charges.

In fact, our results align with recent experimental studies at high energy colliders, which reveal quantum entanglement phenomena in systems such as top-quark pairs, tau leptons, and bottom quarks \cite{TheATLASCollaboration2024,TheCMSCollaboration2024,FabbrichesiFloreaniniGabrielli2023,Afik2024}.
In particular, the observation of spin entanglement in top-quark production \cite{TheATLASCollaboration2024,TheCMSCollaboration2024,HayrapetyanCMSCollaboration2024} and emerging studies of flavor entanglement in mesons and neutrinos \cite{Blasone2009,Go2007} demonstrate that quantum correlations extend well into the relativistic, high energy regime.

Our framework shows how color in QCD and electric charge in QED must each be packaged into inseparable blocks via the irreps of the underlying gauge group.
This provides a natural explanation for why only gauge-invariant (fully packaged) states are physically observed.
Moreover, constructing Bell-like and GHZ-type entangled states under discrete symmetries (e.g., the central $\mathbb Z_2$ of charge conjugation, parity, or time reversal) informs recent spin-entanglement measurements in both fermionic and bosonic sectors \cite{AfikNova2021,AfikNova2022,Barr2022,AguilarSaavedra2023}.

We further demonstrate that IQN inseparability is universal across quantum field theories by extending the packaging principle to higher-form symmetries and embedding it in the full $G\times\mathrm{Poincaré}$ framework.
From this broader vantage, our results not only deepen theoretical insights but also furnish new tools for analyzing experimental data.

This group-theoretic perspective complements our field-theoretic approach in \cite{MaSPI2025}.
Together, these studies offer a more complete description of electric charge confinement in QED and color confinement in QCD.
Finally, our prediction of packaged entanglement under discrete group symmetries can be tested experimentally—either by searching for forbidden partial-fractionation signatures or by quantifying correlation observables that isolate a single irrep block.
We anticipate that future investigations will further clarify how these packaged states may be identified, harnessed, or even exploited in collider settings and beyond.

\paragraph{Concrete Experimental Proposals.}
To make packaged states observable, we suggest:
\begin{itemize}
	\item \emph{Spin-spin correlations} in $t\bar t$ production near threshold, where color exchange is minimal and the irreducible packaging of color is manifest in spin observables.
	
	\item \emph{Flavor-oscillation entanglement} measurements in neutral meson systems (e.g., $B^0$-$\bar B^0$) by correlating decay products in paired detectors.
	
	\item \emph{Photon-jet entanglement} in $e^+e^-\to q\bar q\gamma$, exploiting the fact that the photon carries no color but the quark pair’s color must remain packaged.
\end{itemize}
Each proposal targets an observable sensitive only to the packaged irrep structure, filtering out partial-charge admixtures.

\section{Conclusion}

In this work, we present a detailed group-theoretic study of the symmetry packaging principle for specific groups.
This work both constructs packaged quantum states associated with each group and classifies them by their underlying symmetry.

We first proved an existence theorem for packaged subspaces.
It shows that any finite or compact group $G$ with a nontrivial representation on a quantum system necessarily induces a packaged subspace associated with $G$.
These packaged irrep blocks appear at both the single-particle and multi-particle levels.  
At the multi-particle level, a novel class of packaged entangled states arises in each fixed charge sector.  
Applying the gauge projector filters out non-physical states, leaving only the physical singlets.

We then applied the group-theoretic framework to three broad families of symmetries:
\begin{enumerate}
	\item \textit{Finite groups:}
	We illustrated the symmetry packaging with general finite groups, special $\mathbb{Z}_2$ symmetries, and dihedral groups.
	We showed how charge conjugation $\hat{C}$, fermion parity, parity $\hat{P}$, and time-reversal $\hat{T}$ generate Bell-like packaged entangled states, with superselection rules strictly forbidding cross-sector coherence.
	
	\item \textit{Compact groups:}
	We illustrated the symmetry packaging on compact groups such as $\mathrm{U}(1)$, $\mathrm{SU}(N)$, $\mathrm{SU}(2)$, and $\mathrm{SU}(3)$.
	Our discussion of Peter-Weyl decompositions highlights how color confinement in QCD or isospin structure in nuclear physics naturally emerges from the requirement that physical states reside in a net singlet subspace.
	
	\item\textit{Higher-form and hybrid symmetries.}
	Furthermore, we extended the packaging from point-like particles to extended objects (flux tubes, membranes, or domain walls) in theories with $p$-form symmetries, and to hybrid configurations that combine local gauge, global, and spacetime (Lorentz/Poincaré) symmetries.
	Practical implementations include Bethe-Salpeter bound‐state equations, lattice simulations, and quantum‐information toy models.
\end{enumerate}

These results deepen our understanding of gauge theories by showing that the inseparability of IQNs is not accidental but a direct consequence of the underlying group structure.
The symmetry packaging principle thus provides a unified explanation for phenomena ranging from discrete symmetry packaged entangled states to the inescapable color singlet confinement of quarks.

The insights from symmetry packaging also suggest new applications:
\begin{itemize}
	\item \emph{Protected quantum resources}:
	Packaged entangled states are naturally immune to local noise that respects the symmetry, making them promising for teleportation and error correction.
	
	\item \emph{Fault‐tolerant simulation}:
	Embedding symmetry‐packaging constraints into quantum‐simulation protocols may enhance robustness against gauge‐violating errors.
	
	\item \emph{Collider phenomenology}:
	Correlation measurements, e.g., top‐quark spin entanglement or meson flavor oscillations, can be reinterpreted through the lens of packaged states.
\end{itemize}

We anticipate that these insights will be useful for both theoretical and experimental efforts (from refined lattice studies to high‐energy tests of entanglement).
They may uncover new protected forms of quantum coherence rooted in fundamental symmetries.

\section*{Acknowledgments}

We are grateful to an anonymous referee for her/his valuable feedback that strengthened this work.
We thank Professor Juan Ramón Muñoz de Nova for identifying relevant prior works.

\paragraph{Declaration of generative AI and AI-assisted technologies in the writing process.}
During the preparation of this work the author used ChatGPT o3 in order to polish English language and refine technical syntax.
After using this tool/service, the author reviewed and edited the content as needed and takes full responsibility for the content of the publication.

\end{document}